\newif\ifdraft \draftfalse
\newif\iffull \fullfalse
\newif\ifcomment \commenttrue

%
\documentclass[11pt]{article}




\ifcomment\else \usepackage{kpfonts} \fi
\usepackage{algorithm}
\usepackage[noend]{algorithmic}
\usepackage{amsmath, amssymb, amsthm}
\usepackage[numbers]{natbib}
\usepackage{mathtools}
\usepackage{verbatim}
\usepackage{color}
\usepackage{xcolor}
\usepackage{fullpage}
\usepackage{multicol}

\definecolor{DarkGreen}{rgb}{0.1,0.5,0.1}
\definecolor{DarkRed}{rgb}{0.5,0.1,0.1}
\definecolor{DarkBlue}{rgb}{0.1,0.1,0.5}
\usepackage[]{hyperref}
\hypersetup{
    unicode=false,          
    pdftoolbar=true,        
    pdfmenubar=true,        
    pdffitwindow=false,      
    pdftitle={},    
    pdfauthor={}
    pdfsubject={},   
    pdfnewwindow=true,      
    pdfkeywords={keywords}, 
    colorlinks=true,       
    linkcolor=DarkRed,          
    citecolor=DarkGreen,        
    filecolor=DarkRed,      
    urlcolor=DarkBlue,          
}
\usepackage{xspace}
\usepackage{cleveref}
\usepackage{thmtools, thm-restate}


\newcommand{\katrina}[1]{\ifcomment{}\else\textcolor{brown}{[Katrina: {#1}]}\fi}
\newcommand{\sw}[1]{\ifcomment{}\else\textcolor{cyan}{[Steven: {#1}]}\fi}

\newcommand{\rc}[1]{\ifcomment{}\else\textcolor{blue}{[Rachel: #1]}\fi}
\newcommand{\todo}[1]{\ifcomment{}\else\textcolor{magenta}{[To Do: #1]}\fi}


\newcommand\NN{\mathbb{N}}
\newcommand\RR{\mathbb{R}}
\newcommand\cA{\mathcal{A}}
\newcommand\cO{\mathcal{O}}
\newcommand\cH{\mathcal{H}}
\newcommand\cL{\mathcal{L}}
\newcommand\cC{\mathcal{C}}

\newcommand\cE{\mathcal{E}}

\newcommand\cJ{\mathcal{J}}

\newcommand\cM{\mathcal{M}}

\newcommand\cR{\mathcal{R}}
\newcommand\cX{\mathcal{X}}
\newcommand\cY{\mathcal{Y}}
\newcommand\cXl{\mathcal{X}_L}
\newcommand\cXL{\cXl}
\newcommand\cD{\mathcal{D}}

\newcommand\E{\Expectation}

\newcommand\R{\mathbb{R}}

\DeclareMathOperator{\poly}{poly}


\renewcommand{\tilde}{\widetilde}

\DeclareMathOperator*{\Expectation}{\mathbb{E}}
\newcommand{\Ex}[2]{\Expectation_{#1}\left[#2\right]}

\newcommand{\vc}{{\sf VCDIM}}

\newcommand{\eps}{\varepsilon}
\def\epsilon{\varepsilon}

\DeclareMathOperator{\Lap}{Lap}
\DeclareMathOperator{\OPT}{OPT}
\DeclareMathOperator{\Supp}{Supp}


 \theoremstyle{definition}
 \newtheorem{thm}{Theorem}[section]
 \newtheorem{definition}[thm]{Definition}
  \newtheorem{lemma}[thm]{Lemma}
   \newtheorem{corollary}[thm]{Corollary}
    \newtheorem{theorem}[thm]{Theorem}
  
    \newtheorem{remark}[thm]{Remark}


\usepackage{kpfonts}

\title{Adaptive Learning with Robust Generalization Guarantees}

\author{Rachel Cummings\thanks{Dept.\ of Computing and Mathematical Sciences, California Institute of Technology. {\tt rachelc@caltech.edu}. Supported in part by NSF grant 1254169, US-Israel Binational Science Foundation grant 2012348, and a Simons Graduate Fellowship.} \and
Katrina Ligett\thanks{Dept.\ of Computing and Mathematical Sciences, California Institute of Technology and Benin School of Computer Science and Engineering, Hebrew University of Jerusalem. {\tt katrina@caltech.edu}
Supported in part by
NSF grants 1254169 and 1518941,
US-Israel Binational Science Foundation Grant 2012348,
the Charles Lee Powell Foundation,
a Google Faculty Research Award,
an Okawa Foundation Research Grant,
a subcontract through the DARPA Brandeis project,
a grant from the HUJI Cyber Security Research Center, and
a startup grant from Hebrew University's School of Computer Science.
Part of this work was completed during a stay at the Simons Institute for the Theory of Computing at Berkeley.} \and
Kobbi Nissim\thanks{Dept.\ of Computer Science, Ben-Gurion University and Center for Research in Computation and Society, Harvard University. {\tt kobbi@seas.harvard.edu}. Supported by grants from the Sloan Foundation, a Simons Investigator grant to Salil Vadhan, and NSF grant CNS-1237235.} \and
Aaron Roth\thanks{Dept.\ of Computer and Information Sciences, University of Pennsylvania. {\tt aaroth@cis.upenn.edu}. Supported in part by an NSF CAREER award, NSF grant CNS-1513694, a subcontract through the DARPA Brandeis project, and a grant from the Sloan Foundation.} \and
Zhiwei Steven Wu\thanks{Dept.\ of Computer and Information Sciences, University of Pennsylvania. {\tt wuzhiwei@cis.upenn.edu}}}

\begin{document}

\maketitle

\begin{abstract}
  The traditional notion of \emph{generalization}---i.e., learning a
  hypothesis whose empirical error is close to its true error---is
  surprisingly brittle. As has recently been noted~\citep{stoc15},
  even if several algorithms have this guarantee in isolation, the
  guarantee need not hold if the algorithms are composed
  adaptively. In this paper, we study three notions of
  generalization---increasing in strength---that are \emph{robust} to
  postprocessing and amenable to adaptive composition, and examine the
  relationships between them.

  We call the weakest such notion \emph{Robust Generalization}. A
  second, intermediate, notion is the stability guarantee known as
  \emph{differential privacy}. The strongest guarantee we consider we
  call \emph{Perfect Generalization}. We prove that every hypothesis
  class that is PAC learnable is also PAC learnable in a robustly
  generalizing fashion, with almost the same sample complexity. It was
  previously known that differentially private algorithms satisfy
  robust generalization. In this paper, we show that robust
  generalization is a strictly weaker concept, and that there is a
  learning task that can be carried out subject to robust
  generalization guarantees, yet cannot be carried out subject to
  differential privacy. We also show that perfect generalization is a
  strictly stronger guarantee than differential privacy, but that,
  nevertheless, many learning tasks can be carried out subject to the
  guarantees of perfect generalization.
\end{abstract}

\section{Introduction}

Generalization, informally, is the ability of a learner to reflect not just its training data, but properties of the underlying distribution from which the data are drawn. When paired with empirical risk minimization, it is one of the fundamental tools of learning. Typically, we say that a learning algorithm {\em generalizes} if, given access to some training set drawn i.i.d.~from an underlying data distribution, it returns a hypothesis whose empirical error (on the training data) is close to its true error (on the underlying distribution).

This is, however, a surprisingly brittle notion---even if the output of a learning algorithm generalizes, one may be able to extract additional hypotheses by performing further computations on the output hypothesis---i.e., by postprocessing---that do not themselves generalize. As an example, notice that the standard notion of generalization does not prevent a learner from encoding the entire training set in the hypothesis that it outputs, which in turn allows a data analyst to generate a hypothesis that over-fits to an arbitrary degree.
In this sense, {\em traditional generalization is not robust to misinterpretation by subsequent analyses (postprocessing)} (either malicious or naive).

Misinterpretation of learning results is only one face of the threat---the problem is much more alarming. Suppose the output of a (generalizing) learning algorithm influences, directly or indirectly, the choice of future learning tasks.
For example, suppose a scientist chooses a scientific hypothesis to explore on some data, on the basis of previously (generalizingly!) learned correlations in that data set. Or suppose a data scientist repeatedly iterates a model selection procedure while validating it on the same holdout set, attempting to optimize his empirical error. These approaches are very natural, but also can lead to false discovery in the first case, and disastrous overfitting to the holdout set in the second~\citep{DFHPRR15science},  because {\em traditional generalization is not robust to adaptive composition}.

In this paper, we study two refined notions of generalization---\emph{robust generalization} and \emph{perfect generalization}, each of which is preserved under post-processing (we discuss their adaptive composition guarantees more below). Viewed in relation to these two notions, \emph{differential privacy} can also be cast as a third, intermediate generalization guarantee. It was previously known that differentially private algorithms were also robustly generalizing \citep{stoc15,BNSSSU15}. As we show in this paper, however, differential privacy is a strictly stronger guarantee---there are proper learning problems that can be solved subject to robust generalization that cannot be solved subject to differential privacy (or with any other method previously known to guarantee robust generalization). Moreover, we show that every PAC learnable class (even over infinite data domains) is learnable subject to robust generalization, with almost no asymptotic blowup in sample complexity (a comparable statement is not known for differentially private algorithms, and is known to be false for algorithms satisfying \emph{pure} differential privacy). We also show that, in a sense, differential privacy is a strictly weaker guarantee than perfect generalization. We provide a number of generic techniques for learning under these notions of generalization and prove useful properties for each. As we will discuss, perfect generalization also can be interpreted as a {\em privacy guarantee}, and thus may also be of interest to the privacy community.

\subsection{Our Results}
Informally, we say that a learning algorithm has a guarantee of \emph{robust generalization} if it is not only guaranteed to output a hypothesis whose empirical error is close to the true error (and near optimal), but if no adversary taking the output hypothesis as input can find another hypothesis whose empirical error differs substantially from its true error. (In particular, robustly generalizing algorithms are inherently robust to post-processing, and hence can be used to generate other test statistics in arbitrary ways without worry of overfitting). We say that a learning algorithm has the stronger guarantee of \emph{perfect generalization} if its output reveals almost nothing about the training data that could not have been learned via only direct oracle access to the underlying data distribution.

It was previously known \citep{stoc15,nips15,BNSSSU15} that both
\emph{differential privacy} and \emph{bounded description length
  outputs} are sufficient conditions to guarantee that a learning
algorithm satisfies robust generalization. However, prior to this
work, it was possible that differential privacy was \emph{equivalent}
to robust generalization in the sense that any learning problem that
could be solved subject to the guarantees of robust generalization
could also be solved via a differentially private
algorithm.\footnote{More precisely, it was known that algorithms with
  bounded description length could give robust generalization
  guarantees for the computation of high sensitivity statistics that
  could not be achieved via differential privacy
  \citep{nips15}. However, for low-sensitivity statistics (like the
  empirical error of a classifier, and hence for the problem of
  learning), there was no known separation.} Indeed, this was one of
the open questions stated in \cite{nips15}. We resolve this question
(Section~\ref{sec:thre}) by showing a simple proper learning task
(learning threshold functions over the real line) that can be solved
with guarantees of robust generalization (indeed, with the optimal
sample complexity) but that cannot be non-trivially properly learned
by any differentially private algorithm (or any algorithm with bounded
description length outputs). We do so (Theorem~\ref{thm:compressRG})
by showing that generalization guarantees that follow from
\emph{compression schemes} \citep{LW86} carry over to give guarantees
of robust generalization (thus giving a third technique, beyond
differential privacy and description length arguments, for
establishing robust generalization). In addition to threshold
learning, important learning procedures like SVMs have optimal
compression schemes, and so satisfy robust generalization without
modification. We also show (Theorem~\ref{thm:comcomp}) that
compression schemes satisfy an adaptive composition theorem, and so
can be used for adaptive data analysis while guaranteeing robust
generalization. Note that, somewhat subtly, robustly generalizing
algorithms derived by other means need not necessarily maintain their
robust generalization guarantees under adaptive composition (a
sequence of computations in which later computations have access not
only to the training data, but also to the outputs of previous
computations).  Using the fact that boosting implies the existence of
a near optimal variable-length compression scheme for every VC-class
(see \cite{DMY16}), we show (Theorem~\ref{thm:all}) that any PAC
learnable hypothesis class (even over an infinite domain) is also
learnable with robust generalization, with at most a logarithmic
blowup in sample complexity. (In fact, merely \emph{subsampling} gives
a simple ``approximate compression scheme'' for any VC-class, but one
that would imply a quadratically suboptimal sample complexity bound.
In contrast, we show that almost no loss in sample complexity --on top
of the sample complexity needed for outputting an accurate
hypothesis-- is necessary in order to get the guarantees of robust
generalization.) \sw{revamped this part}

We then show (Theorem \ref{thm.pglearning}) that perfectly
generalizing algorithms can be compiled into differentially private
algorithms (in a black box way) with little loss in their parameters,
and that (Theorem \ref{thm.dptosg}) differentially private algorithms
are perfectly generalizing, but with a loss of a factor of $\sqrt{n}$
in the generalization parameter. Moreover, we show (Theorem
\ref{thm.dptosgtight}) that this $\sqrt{n}$ loss is necessary. Because
differentially private algorithms satisfy an adaptive composition
theorem, this gives a method for designing perfectly generalizing
algorithms that are robust to arbitrary adaptive composition. Despite
this $\sqrt{n}$ blowup in the generalization parameter, we show
(Section~\ref{sec:generic_pg_learner}) that any \emph{finite}
hypothesis class can be PAC learned subject to perfect generalization.

\subsection{Related work}
Classically, machine learning has been concerned only with the basic
generalization guarantee that the empirical error of the learned
hypothesis be close to the true error. There are three main approaches
to proving standard generalization guarantees of this sort. The first
is by bounding various notions of complexity of the range of the
algorithm---most notably, the VC-dimension (see, e.g., \cite{KV94} for
a textbook introduction). These guarantees are \emph{not} robust to
post-processing or adaptive composition. The second follows from an
important line of work~\citep{BousquettE02,PoggioRMN04,ShwartzSSS10}
that establishes connections between the \emph{stability} of a
learning algorithm and its ability to generalize. Most of these
classic stability notions are defined over some metric on the output
space (rather than on the distribution over outputs), and for these
reasons are also brittle to post-processing and adaptive
composition. The third is the compression-scheme method first
introduced by \cite{LW86} (see, e.g., \cite{shai} for a textbook
introduction). As we show in this paper, the generalization guarantees
that follow from compression schemes \emph{are} robust to
post-processing and adaptive composition. A longstanding conjecture
\citep{W03} states that VC-classes of dimension $d$ have compression
schemes of size $d$, but it is known that boosting \cite{FS97} implies
the existence of a \emph{variable-length} compression scheme that for
any function from a VC-class of dimension $d$ can compress $n$
examples to an empirical risk minimizer defined by a subset of only
$O(d\log n)$ many examples \cite{DMY16}.

A recent line of work \citep{stoc15,nips15,BNSSSU15,RZ15} has studied algorithmic conditions that guarantee the sort of \emph{robust} generalization guarantees we study in this paper, suitable for adaptive data analysis. \cite{stoc15} show that differential privacy (a stability guarantee on the output \emph{distribution} of an algorithm) is sufficient to give robust generalization guarantees, and~\cite{nips15} show that description length bounds on the algorithm's output (i.e., Occam style bounds~\cite{BEHW90}, which have long been known to guarantee standard generalization) are also sufficient.

Differential privacy was introduced by \cite{DMNS06} (see \cite{DR14} for a textbook introduction), and private learning has been a central object of study since \cite{KLNRS08}. The key results we use here are the upper bounds for private learning proven by \cite{KLNRS08} using the exponential mechanism of \cite{MT07}, and the lower bounds for private proper threshold learning due to \cite{BNSV15}. A measure similar to, but distinct from, the notion of \emph{perfect generalization} that we introduce here was briefly studied as a privacy solution concept in \cite{BLR08} under the name ``distributional privacy.''


\section{Preliminaries}\label{sec:prelim}
\subsection{Learning Theory Background}

Let $\cX$ denote a~\emph{domain}, which contains all
possible~\emph{examples}. A~\emph{hypothesis}
$h\colon \cX\rightarrow \{0,1\}$ is a boolean mapping that labels
examples by $\{0, 1\}$, with $h(x) = 1$ indicating that $x$ is a
positive instance and $h(x) = 0$ indicating that $x$ is a negative
instance. A~\emph{hypothesis class} is a set of hypotheses.
Throughout the paper, we elide dependencies on the dimension of the
domain.

We will sometimes write $\cX_L$ for $\cX\times\{0, 1\}$, i.e.,
labelled examples.  Let $\cD_L\in \Delta \cX_L$ be a distribution over
labelled examples; we will refer to it as the {\em underlying
  distribution}. We write $S_L \sim_{i.i.d.} \cD_L^n$ to denote a {\em
  sample} of $n$ labelled examples drawn i.i.d.\ from $\cD_L$.  A
learning algorithm takes such a sample $S_L$ (also known as a {\em
  training set}) as input, and outputs a hypothesis. Note that we use
subscript-$L$ to denote labeling of examples in the domain, in
samples, and in distributions. When $\cD_L$ is well-defined, we also
sometimes write $\cD$ for the marginal distribution of $\cD_L$ over
$\cX$; similarly for $S$ and $S_L$.


Typically, the goal when selecting a hypothesis is to minimize
the~\emph{true error} (also known as the expected error) of the
selected hypothesis on the underlying distribution:
\[
err(h) = \Pr_{(x,y)\sim \cD_L} [h(x) \neq y].
\]
This is in contrast to the {\em empirical error} (also known as the
training error), which is the error of the selected hypothesis $h$ on
the sample $S_L$:
\[
err(S_L, h) \equiv \frac{1}{|S_L|} \sum_{(x_i, y_i)\in S_L}
\mathbf{1}[h(x_i) \neq y_i].
\]

In order to minimize true error, learning algorithms typically seek to
(approximately) minimize their empirical error, and to combine this with a
generalization guarantee, which serves to translate low empirical
error into a guarantee of low true error.


For any set $S\in \cX^n$, let $\cE_S$ denote the empirical
distribution that assigns weight $1/n$ on every observation in
$S$. For any hypothesis $h\colon \cX \rightarrow \{0,1\}$, we will
write $h(\cD)$ to denote $\mathbb{E}_{x\sim \cD} \left[h(x)\right]$
and $h(S)$ to denote
$h(\cE_S) = \mathbb{E}_{x\sim \cE_S} \left[h(x)\right] = 1/n \sum_{x_i
  \in S} h(x_i)$. We say that a hypothesis $h:\cX\rightarrow \{0,1\}$
$\alpha$-{\em overfits} to the sample $S$ taken from $\cD$ if
$|err(h) - err(S_L, h)| \geq \alpha$. Traditional generalization
requires that a mechanism output a hypothesis that does not overfit to
the sample.

\begin{definition}[(Traditional) Generalization]
Let $\cX$ be an arbitrary domain.
A mechanism $\cM\colon \cX_L^n \rightarrow (\cX\rightarrow \{0,1\})$ is
{\em $(\alpha, \beta)$-generalizing} if for all distributions $\cD_L$ over
$\cX_L$, given a sample $S_L\sim_{i.i.d.}\cD_L^n$,
\[
  \Pr\left[\cM(S_L) \text{ outputs } h\colon\cX\rightarrow \{0,1\}
    \text{ such that } |err(h) - err(S_L, h)|\leq \alpha \right]\geq 1-
  \beta,
\]
where the probability is over the choice of the sample $S_L$ and the
randomness of $\cM$.
\label{def:G}
\end{definition}
Note that (traditional) generalization does not prevent $\cM$ from encoding its input sample $S_L$ in the hypothesis $h$ that it outputs. \katrina{The reviewer (quite reasonably) didn't like how we were doing this, so I am skipping the attempt to make the generalization definition also apply in non-learning settings.}

Note that throughout the paper, we focus only on {\em proper learning}, wherein the learner is required to return a hypothesis from the class it is learning, rather than from, e.g., some superset of that class. For simplicity, we frequently omit the word ``proper.''
Within the setting of proper learning, we consider two different models of learning.
In the setting of~\emph{PAC
learning}, we assume that the examples in the support of the underlying distribution are labelled consistently with
some~\emph{target} hypothesis $h^*$ from a known hypothesis class $\cH$. In this case,
we could write $err(h) = \Pr_{x\sim \cD}[h(x) \neq h^*(x)]$.

\begin{definition}[PAC Learning]\label{def:pac}
A hypothesis class $\cH$ over domain $\cX$ is~\emph{PAC learnable}
 if there exists a polynomial $n_{\cH} : \mathbb{R}^2 \rightarrow \mathbb{R}$ and a learning algorithm $\cA$ such that
 for all hypotheses $h^*\in \cH_d$,
 all $\alpha, \beta\in (0, 1/2)$,
 and all distributions $\cD$ over
$\cX$,
 given inputs
$\alpha, \beta$ and a sample $S_L = (z_1, \ldots, z_n)$, where $n \geq n_{\cH}(1/\alpha, \log(1/\beta))$, $z_i = (x_i, h^*(x_i))$ and the $x_i$'s are
drawn $i.i.d.$ from $\cD$, the algorithm $\cA$ outputs a hypothesis
$h\in \cH$ with the following guarantee:
\[
\Pr[err(h) \leq \alpha] \geq 1 - \beta.
\]
The probability is taken over both the randomness of the examples and
the internal randomness of $\cA$. We will say that $\cH$ is PAC
learnable with a learning rate $n_\cH$, and call a learning algorithm
with the above guarantee {\em $(\alpha, \beta)$-accurate}.\sw{Added}
\end{definition}


In the setting of~\emph{agnostic learning}, we do not assume that the labels of the underlying data distribution are consistent with some hypothesis in $\cH$.
The goal then becomes finding a hypothesis whose true error is almost optimal within the hypothesis class $\cH$.

\begin{definition}[Agnostic Learning]\label{def:aglearn}
\emph{Agnostically learnable} is defined identically to PAC learnable with
two exceptions:
\begin{enumerate}
\item the data are drawn and labelled from an arbitrary distribution $\cD_L$ over $\cX\times \{0,1\}$
\item the output hypothesis $h$ satisfies the following
\[
\Pr[err(h) \leq \OPT + \alpha] \geq 1 - \beta,
\]
where $\OPT = \min_{f\in \cH}\{err(f)\}$ and the probability is
taken over both the randomness of the data and the internal randomness
of the algorithm.
\end{enumerate}
\end{definition}

It is known that (in the binary classification setting we study), a hypothesis class is learnable if and only if its \emph{VC-dimension} is polynomially bounded:
\begin{definition}[VC Dimension~\cite{vc}]
A set $S\subseteq \cX$ is~\emph{shattered by} a hypothesis class
$\cH$ if $\cH$ restricted to $S$ contains all $2^{|S|}$ possible
functions from $S$ to $\{0, 1\}$. The~\emph{VC dimension} of $\cH$
denoted $\vc(\cH)$, is the cardinality of a largest set $S$
shattered by $\cH$.
\end{definition}

%

\subsection{Notions of Generalization}

In this section, we introduce the three notions of generalization that are studied throughout this paper.
We say that a mechanism $\cM$ \emph{robustly generalizes} if the mechanism does not
provide information that helps overfit the sample it is given as input. Formally:


%
%

\begin{definition}[Robust Generalization]\label{def.wg}
Let $\cR$ be an arbitrary range and $\cX$ an arbitrary domain. A mechanism $\cM\colon \cX_L^n \rightarrow \cR$ is \emph{$(\eps,
  \delta)$-robustly generalizing} if for all distributions
$\cD_L$ over $\cX_L$ and any adversary $\cA$, with probability $1 - \zeta$
over the choice of sample $S_L\sim_{i.i.d.}\cD_L^n$,
\[
\Pr\left[\cA(\cM(S_L))\text{ outputs } h\colon\cX\rightarrow \{0,1\} \text{ such that } |h(S_L) - h(\cD_L)|\leq \eps  \right]\geq 1- \gamma,
\]
for some $\zeta, \gamma$ such that $\delta=\zeta + \gamma$, where the
probability is over the randomness of $\cM$ and $\cA$.\footnote{Note
  that we do not state the robust generalization guarantee in terms of
  the difference $|err(h') - err(S_L, h')|$ between true error and
  empirical error for some hypothesis $h'$ (as in~\Cref{def:G}), and
  our definition is in fact more general --- in particular, we can let
  $h((x, y)) = \mathbf{1}[h'(x) \neq y]$ to capture the generalization
  notion in terms of error.}
\end{definition}


For our other notions of generalization we require the following
definition of distributional closeness.
\begin{definition}[$(\eps , \delta)$-Closeness]
Let $\cR$ be an arbitrary range, and let $\Delta \cR$ denote the set of all
probability distributions over $\cR$. We say that distributions $\cJ_1, \cJ_2\in \Delta \cR$ are \emph{$(\eps, \delta)$-close} and write $\cJ_1 \approx_{\eps, \delta} \cJ_2$ if for all $\cO\subseteq \cR$,
\[
\Pr_{y\sim \cJ_1}\left[ y \in \cO \right] \leq \exp(\eps) \Pr_{y\sim \cJ_2}\left[y \in \cO\right] + \delta \quad \mbox{ and }\quad
\Pr_{y\sim \cJ_2}\left[y \in \cO \right] \leq \exp(\eps) \Pr_{y\sim \cJ_1}\left[ y\in \cO\right] + \delta.
\]
\end{definition}

\katrina{We initially use notation for the domain, distribution, and sample in the following generalization definitions that is distinct from the analogous notation in learning settings, in order to emphasize that they are meaningful both for general, unlabelled domains, and also in the context of learning (where the domain is over labelled examples).}

Given an arbitrary domain $\cY$, we say samples $T, T'\in \cY^n$ are~{\em neighboring} if they differ on
exactly one element. A mechanism $\cM$ is \emph{differentially private} if the distributions of its outputs are close on neighboring samples.
\begin{definition}[Differential Privacy, \cite{DMNS06}]
A mechanism $\cM\colon \cY^n \rightarrow \cR$ is
\emph{$(\eps, \delta)$-differentially private} if for every pair of
neighboring samples $T,T'\in\cY^n$, $\cM(T) \approx_{\eps, \delta} \cM(T')$.
\end{definition}

\def\Sim{\mbox{\sc Sim}}

Let $\cY$ be an arbitrary domain and $\cR$ be an arbitrary range, and let $\Delta \cY$ denote the set of all probability distributions over
$\cY$.  A \emph{simulator} $\Sim\colon \Delta \cY \rightarrow \cR$ is
a (randomized) mechanism that takes a probability distribution over
$\cY$ as input, and outputs an outcome in the range $\cR$. For any
fixed distribution $\cC\in \Delta\cY$, we sometimes write $\Sim_\cC$
to denote the output distribution $\Sim(\cC)$.

We say that a mechanism $\cM$ \emph{perfectly generalizes} if the distribution of its output when run on a sample is close to that of a simulator that did not have access to the sample.
\begin{definition}[Perfect Generalization]
Let $\cR$ be an arbitrary range and $\cY$ an arbitrary domain.
Let $0\leq \beta < 1$, $\eps\geq 0$, and $0\leq\delta < 1$. A mechanism $\cM\colon \cY^n \rightarrow \cR$ is
\emph{$(\beta, \eps, \delta)$-perfectly generalizing} if for every
distribution $\cC$ over $\cY$ there exists a \emph{simulator} $\Sim_\cC$
such that with probability at least $1-\beta$ over the choice of sample
$T\sim_{i.i.d.}\cC^n$, $\cM(T) \approx_{\eps, \delta} \Sim_\cC$.
\end{definition}



\paragraph{Discussion of the generalization notions}
We will see that all three of the above generalization notions are robust to postprocessing and compatible with adaptive composition,\footnote{Specifically, differentially private algorithms can be adaptively composed in a black box manner, and can be compiled into perfectly generalizing mechanisms (with some loss in their parameters). This gives a recipe for designing perfectly generalizing mechanisms that compose adaptively. Similarly, many methods for guaranteeing robust generalization (including differential privacy, description length bounds, and compression schemes) compose adaptively, giving a recipe for designing robustly generalizing mechanisms that compose adaptively.} making each of them much more appealing than traditional generalization for learning contexts.
Perfect generalization also has an intuitive interpretation as a privacy solution concept that guarantees privacy not just to the individuals in a data sample, but to the sample as a whole (one can think of this as providing privacy to a data provider such as a school or a hospital, when each provider's data comes from the same underlying distribution). Despite the very strong guarantee it gives, we will see that many tasks are achievable under perfect generalization.

\sw{are we using $\cC$ for distributions?}
\katrina{I'm trying to enforce $\cC$ for distributions over $\cY$, which may or may not be labelled, and $\cD$ for distributions over $\cX$, which are unlabelled examples.}

\subsection{Basic Properties of the Generalization Notions}

Here we state several basic properties of the generalization notions defined above.  Proofs are deferred to Appendix \ref{s.prelimproofs}.

The following lemma is a useful tool for bounding the closeness
parameters between two distributions via an intermediate distribution,
such as that of the simulator.  It allows us to say (Corollary
\ref{cor:twin}) that for any perfectly generalizing mechanism, any two
``typical'' samples will induce similar output distributions.

\begin{lemma}
\label{lem.structural} Let $\cJ_1,\cJ_2,\cJ_3$ be distributions over an abstract domain $\cR$.  That is, $\cJ_1,\cJ_2,\cJ_3 \in \Delta \cR$.
If $\cJ_1\approx_{\eps,\delta} \cJ_2$ and
$\cJ_2\approx_{\eps',\delta'} \cJ_3$ where $\eps,\eps'<\ln 2$ then
$\cJ_1\approx_{\eps+\eps',2(\delta + \delta')} \cJ_3$.  If $\delta = \delta'$, then $\cJ_1\approx_{\eps+\eps',3\delta} \cJ_3$.
\end{lemma}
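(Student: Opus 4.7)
The plan is to prove the lemma by direct chaining of the two closeness inequalities. Fix an arbitrary measurable set $\cO \subseteq \cR$. From $\cJ_1 \approx_{\eps,\delta} \cJ_2$ we get $\Pr_{y\sim\cJ_1}[y\in\cO] \leq e^\eps \Pr_{y\sim\cJ_2}[y\in\cO] + \delta$, and from $\cJ_2 \approx_{\eps',\delta'} \cJ_3$ we get $\Pr_{y\sim\cJ_2}[y\in\cO] \leq e^{\eps'}\Pr_{y\sim\cJ_3}[y\in\cO] + \delta'$. Substituting the second inequality into the first gives
\[
\Pr_{y\sim\cJ_1}[y\in\cO] \leq e^{\eps+\eps'}\Pr_{y\sim\cJ_3}[y\in\cO] + e^\eps \delta' + \delta.
\]

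The next step is to use the hypothesis $\eps < \ln 2$, which gives $e^\eps < 2$ and thus $e^\eps \delta' + \delta < 2\delta' + \delta \leq 2(\delta + \delta')$. This yields one direction of the desired $\approx_{\eps+\eps', 2(\delta+\delta')}$ relation. By chaining the inequalities in the reverse order (starting from $\Pr_{y\sim\cJ_3}[y\in\cO] \leq e^{\eps'}\Pr_{y\sim\cJ_2}[y\in\cO] + \delta'$ and then using the other direction of $\cJ_1 \approx_{\eps,\delta} \cJ_2$), and using $\eps' < \ln 2$, we get the symmetric bound $\Pr_{y\sim\cJ_3}[y\in\cO] \leq e^{\eps+\eps'}\Pr_{y\sim\cJ_1}[y\in\cO] + 2(\delta+\delta')$.

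For the special case $\delta = \delta'$, the slack term becomes $e^\eps \delta + \delta < 2\delta + \delta = 3\delta$, matching the improved bound in the statement. Since $\cO$ was arbitrary, this establishes $\cJ_1 \approx_{\eps+\eps', 2(\delta+\delta')} \cJ_3$ in general and $\cJ_1 \approx_{\eps+\eps', 3\delta} \cJ_3$ when $\delta = \delta'$.

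There is no real obstacle here: the only subtlety is being careful to handle the additive slack on both sides of the closeness relation and to notice that the assumption $\eps, \eps' < \ln 2$ is precisely what lets us absorb the $e^\eps$ factor in front of $\delta'$ (and the $e^{\eps'}$ factor in front of $\delta$) into a clean factor of $2$. The proof is essentially a two-line computation per direction.
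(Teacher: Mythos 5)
Your proposal is correct and follows essentially the same chaining argument as the paper's proof: substitute one closeness inequality into the other, then absorb the factor $e^\eps$ (resp.\ $e^{\eps'}$) in front of $\delta'$ (resp.\ $\delta$) into a constant $2$ using $\eps,\eps' < \ln 2$. The paper's version carries an extra intermediate $\min(\cdot,1)$ truncation in the displayed chain, but that step does not change the final bound and your streamlined derivation reaches the same conclusion.
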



\begin{corollary}
\label{cor:twin}
Suppose that a mechanism $\cM\colon \cY^n \rightarrow \cR$ is
$(\beta, \eps, \delta)$-perfectly generalizing, where $\eps < \ln 2$. Let $T_1,
T_2\sim_{i.i.d.}\cC^n$ be two independent samples. Then with
probability at least $1- 2\beta$ over the random draws of $T_1$ and
$T_2$, the following holds
\[
\cM(T_1) \approx_{2\eps, 3\delta} \cM(T_2).
\]
\end{corollary}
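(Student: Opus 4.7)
The proof is a straightforward two-step argument combining the definition of perfect generalization with the ``triangle inequality'' provided by Lemma \ref{lem.structural}.

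First I would invoke the definition of perfect generalization to obtain a simulator $\Sim_\cC$ such that, for each independent draw $T_i \sim_{i.i.d.} \cC^n$ ($i \in \{1,2\}$), with probability at least $1 - \beta$ we have $\cM(T_i) \approx_{\eps, \delta} \Sim_\cC$. Since $T_1$ and $T_2$ are independent (and the simulator does not depend on either sample), a union bound gives that with probability at least $1 - 2\beta$ over the joint draw, \emph{both} $\cM(T_1) \approx_{\eps, \delta} \Sim_\cC$ and $\cM(T_2) \approx_{\eps, \delta} \Sim_\cC$ hold simultaneously.

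Next I would apply Lemma \ref{lem.structural} with $\cJ_1 = \cM(T_1)$, $\cJ_2 = \Sim_\cC$, and $\cJ_3 = \cM(T_2)$. The hypothesis $\eps < \ln 2$ is precisely what the lemma requires, and since the two closeness parameters agree ($\delta = \delta' = \delta$), we are in the second case of the lemma and obtain $\cM(T_1) \approx_{2\eps, 3\delta} \cM(T_2)$, which is the desired conclusion.

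There is no real obstacle here beyond making sure that the simulator is the same distribution in both applications of perfect generalization (which it is, because $\Sim_\cC$ depends only on $\cC$, not on the sample) and that the events being intersected via the union bound are defined with respect to the same joint probability space over $(T_1, T_2)$. The bound $2\beta$ on the failure probability and the composition into $(2\eps, 3\delta)$-closeness both follow immediately once these two observations are made.
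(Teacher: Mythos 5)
Your proof is correct and follows exactly the same route as the paper's: invoke the definition to get both $\cM(T_1)$ and $\cM(T_2)$ $(\eps,\delta)$-close to the common simulator $\Sim_\cC$, union-bound the two failure events to get probability $1-2\beta$, and then apply Lemma~\ref{lem.structural} (the $\delta=\delta'$ case) through $\Sim_\cC$ as the intermediate distribution.
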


We can show that both robust generalization and perfect generalization
are robust to postprocessing, i.e., arbitrary interpretation.
It is known that differential privacy is also robust to postprocessing~\citep{DR14}.

\begin{lemma}[Robustness to Postprocessing]
\label{lem:post}
Given any $(\alpha, \beta)$-robustly generalizing (resp.
$(\beta, \eps, \delta)$-perfectly generalizing) mechanism
$\cM \colon \cY^n \rightarrow \cR$ and any post-processing procedure
$\cA\colon \cR \rightarrow \cR'$, the composition
$\cA\circ\cM \colon \cY^n \rightarrow \cR$ is also
$(\alpha, \beta)$-robustly generalizing (resp.
$(\beta, \eps, \delta)$-perfectly generalizing).
\end{lemma}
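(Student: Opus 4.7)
The plan is to handle the two cases separately, since each reduces to a short but distinct argument.

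For robust generalization, the key observation is that Definition~\ref{def.wg} already quantifies over an arbitrary adversary applied to the output of $\cM$. So, given any adversary $\cB$ against the composed mechanism $\cA \circ \cM$, I would define $\cB' := \cB \circ \cA$ and treat it as a single adversary against $\cM$. Because $\cB(\cA(\cM(S_L)))$ and $\cB'(\cM(S_L))$ are identically distributed (jointly with $S_L$), the robust generalization guarantee of $\cM$, instantiated at the adversary $\cB'$, transfers verbatim to $\cA \circ \cM$ with the same parameters $(\alpha, \beta)$. This direction is essentially tautological once one unpacks the quantifiers.

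For perfect generalization, the plan is to postprocess the promised simulator as well. For each distribution $\cC$, let $\Sim_\cC$ be the simulator guaranteed by the definition for $\cM$, and define a new simulator $\Sim'_\cC := \cA \circ \Sim_\cC$. With probability at least $1-\beta$ over $T\sim_{i.i.d.}\cC^n$, we have $\cM(T) \approx_{\eps,\delta} \Sim_\cC$; I want to conclude $\cA(\cM(T)) \approx_{\eps,\delta} \cA(\Sim_\cC) = \Sim'_\cC$. This reduces to showing that $(\eps,\delta)$-closeness is preserved under (possibly randomized) postprocessing. For deterministic $\cA$ this is immediate from the inverse-image identity $\Pr[\cA(X) \in \cO'] = \Pr[X \in \cA^{-1}(\cO')]$ applied to the defining inequalities. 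For randomized $\cA$, for any event $\cO' \subseteq \cR'$, define $f(r) := \Pr[\cA(r) \in \cO'] \in [0,1]$, and observe
\[
\Pr[\cA(\cM(T)) \in \cO'] = \int_0^1 \Pr_{r\sim \cM(T)}[f(r) > \tau]\, d\tau \leq \int_0^1 \bigl(e^\eps \Pr_{r\sim \Sim_\cC}[f(r) > \tau] + \delta\bigr)\, d\tau = e^\eps \Pr[\cA(\Sim_\cC) \in \cO'] + \delta,
\]
using $\cM(T) \approx_{\eps,\delta} \Sim_\cC$ applied to each level set $\{r : f(r) > \tau\}$. The symmetric inequality is analogous, which gives $\cA(\cM(T)) \approx_{\eps,\delta} \Sim'_\cC$ on the same $1-\beta$ probability event as for $\cM$, establishing that $\cA \circ \cM$ is $(\beta,\eps,\delta)$-perfectly generalizing.

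The main (mild) obstacle is just the randomized-postprocessing step for $(\eps,\delta)$-closeness; everything else is either by definition (robust generalization) or a direct substitution (redefining the simulator for perfect generalization). No adaptive composition or stability-under-sampling arguments are needed here, since the lemma is about a single application of a fixed postprocessing map.
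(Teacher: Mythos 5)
Your proof is correct, and the overall structure matches the paper's: for robust generalization you absorb the postprocessing map into the adversary, and for perfect generalization you define the new simulator as $\cA$ applied to the old one and show $(\eps,\delta)$-closeness is preserved under randomized postprocessing. The one genuine difference is how you handle randomization in $\cA$. The paper decomposes $\cA$ into a finite convex combination $\sum_i \gamma_i \cA_i$ of deterministic maps, pulls each event $\cO'$ back to $\cO_i = \cA_i^{-1}(\cO')$, applies closeness to each $\cO_i$, and recombines; this is clean when $\cA$ has finitely (or countably) supported internal randomness but needs a measure-theoretic gloss in general. Your layer-cake argument --- writing $\Pr[\cA(X)\in\cO'] = \E_{r\sim X}[f(r)] = \int_0^1 \Pr_{r\sim X}[f(r)>\tau]\,d\tau$ with $f(r)=\Pr[\cA(r)\in\cO']$ and applying closeness to each level set --- sidesteps that issue entirely and works for arbitrary randomized $\cA$ without any decomposition. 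Both approaches are standard ways of proving that $(\eps,\delta)$-indistinguishability is closed under postprocessing; yours is marginally more general and self-contained, the paper's is perhaps more concrete to parse. Either is fine.
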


Theorem \ref{thm.basic} says that the composition of multiple
$(\beta, \eps, 0)$-perfectly generalizing mechanisms is also perfectly
generalizing, where the $\beta$ and $\eps$ parameters ``add up''.

\begin{theorem}[Basic Composition]
\label{thm.basic}
Let $\cM_i \colon \cY^n\rightarrow \cR_i$ be
$(\beta_i, \eps_i, 0)$-perfectly generalizing for $i=1, \ldots, k$.
The composition
$\cM_{[k]} \colon \cY^n \rightarrow \cR_1 \times \cdots \times \cR_k$,
defined as $\cM_{[k]}(T) = \left( \cM_1(T), \ldots \cM_k(T) \right)$
is $(\sum_{i=1}^k \beta_i, \sum_{i=1}^k \eps_i, 0)$-perfectly
generalizing.
\end{theorem}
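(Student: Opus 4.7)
The plan is to construct a simulator for the composition by running the individual simulators independently on the same distribution $\cC$, and to show that the claimed perfect generalization then follows from two ingredients: a union bound over the "good sample" events for each $\cM_i$, and a product-distribution composition fact for pure ($\delta=0$) closeness.

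More concretely, fix any distribution $\cC$ over $\cY$. For each $i \in [k]$, apply the perfect-generalization guarantee of $\cM_i$ to obtain a simulator $\Sim^{(i)}_\cC$ and a set $G_i \subseteq \cY^n$ with $\Pr_{T\sim\cC^n}[T \in G_i] \geq 1-\beta_i$ such that for every $T \in G_i$ we have $\cM_i(T) \approx_{\eps_i,0} \Sim^{(i)}_\cC$. Define the composition's simulator by $\Sim_\cC = (\Sim^{(1)}_\cC, \ldots, \Sim^{(k)}_\cC)$ with \emph{independent} internal randomness across the $k$ simulators, matching the (assumed) independent internal randomness across the $k$ mechanisms $\cM_i$ that defines the joint distribution $\cM_{[k]}(T)$.

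By a union bound, $\Pr_T[T \in \bigcap_i G_i] \geq 1 - \sum_i \beta_i$. I would then fix any such ``good'' $T$ and show that, because conditional on $T$ the random variables $\cM_1(T),\ldots,\cM_k(T)$ are independent (and similarly for the simulators), the joint distributions factor as products. The one technical step to verify is the following product-closeness fact: if $P_i \approx_{\eps_i,0} Q_i$ for $i = 1,\ldots,k$, then $\prod_i P_i \approx_{\sum_i \eps_i,0} \prod_i Q_i$. With $\delta=0$ this is straightforward, since the relation amounts to pointwise (a.e.) multiplicative bounds on densities, and these multiply across independent coordinates; integrating against any event in the product space then yields the stated closeness with exponent $\sum_i \eps_i$ and failure term $0$. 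Applying this fact to $P_i = \cM_i(T)$ and $Q_i = \Sim^{(i)}_\cC$ gives $\cM_{[k]}(T) \approx_{\sum_i \eps_i, 0} \Sim_\cC$ for all $T$ in the good set.

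No step is really an obstacle; the only place one has to be slightly careful is in justifying that the relevant distributions genuinely factor as products, which requires the (implicit but standard) convention that the $\cM_i$ use independent internal coins and that we are free to choose the joint simulator $\Sim_\cC$ with independent coordinates. The reason the statement is restricted to $\delta=0$ is exactly the product-closeness step above: for $\delta > 0$ one would pick up additional cross terms, which is why a separate (``advanced'') composition theorem, or the lemma-based boosting via Lemma~\ref{lem.structural}, would be needed for the general case.
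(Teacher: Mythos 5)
Your proposal is correct and follows essentially the same route as the paper: take the product simulator $\Sim_{[k]}(\cC) = (\Sim_1(\cC),\ldots,\Sim_k(\cC))$, union-bound over the good events to get failure probability $\sum_i \beta_i$, and then verify $(\sum_i\eps_i, 0)$-closeness by multiplying the pointwise density ratios across independent coordinates and integrating over events. The paper carries out the product-closeness step inline rather than isolating it as a lemma, and (like your writeup) relies on the implicit convention that the $\cM_i$ use independent internal coins so that $\Pr[\cM_{[k]}(T)=(r_1,\ldots,r_k)] = \prod_i \Pr[\cM_i(T)=r_i]$; your note making this assumption explicit is a fair observation but not a departure.
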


A very recent work by~\cite{BF16} studies the notion of~\emph{typical
  stability}, which generalizes perfect generalization. In particular,
a mechanism is perfectly generalizing if it is typically stable with
respect to product distributions $\cD^n$. They show the class of
typically stable mechanisms is closed under adaptive composition,
implying an adaptive composition theorem for perfectly generalizing
mechanisms.\footnote{ A previous version of our paper contained an
  error in the proof of the adaptive composition theorem for perfect
  generalization. We are grateful to Raef Bassily and Adam Smith for
  bringing this to our attention.\sw{changed}}

\sw{added the above; please check}

\section{Robust Generalization via Compression Schemes}
In this section, we present a new technique, based on the idea
of~\emph{compression bounds}, for designing learning algorithms with
the robust generalization guarantee. Recent work
\citep{stoc15,nips15,BNSSSU15} gives two other techniques for
obtaining robust generalizing mechanisms.  As we will see, our new
technique allows one to learn hypothesis classes under robust
generalization for which the two previous techniques do not
apply. More surprisingly, we show that any PAC/agnostically learnable
hypothesis class can also be learned under robust generalization with
nearly optimal sample complexity.

We first give a definition for what it means to learn a
hypothesis under robust generalization.
\begin{definition}[RG PAC/Agnostic Learning]\label{def:rglearnable}
  A hypothesis class $\cH$ over domain $\cX$ is {\em PAC/agnostically
  learnable under robust generalization (RG-PAC/agnostically
  learnable)} if there exists a polynomial $n_\cH\colon
  \RR^4\rightarrow \RR$ and a learning algorithm $\cA$ such that for
  all $\alpha, \beta, \eps, \delta \in (0, 1/2)$, given inputs
  $\alpha, \beta, \eps, \delta$ and a sample $S_L\in \cXl^n$ where
  $n\geq n_\cH(1/\alpha, 1/\eps, \log(1/\beta), \log(1/\delta))$, the
  algorithm $\cA$ is an $(\alpha, \beta)$-accurate PAC/agnostic
  learner, and is $(\eps, \delta)$-robustly generalizing.
\end{definition}

\subsection{Compression Learners}
For any function $k \colon \NN \rightarrow \NN$, we say that a
hypothesis class has a compression scheme of size $k$ if any arbitrary
set $S_L$ of $n$ labelled examples can be mapped to a~\emph{sequence}
of $k(n)$ input examples, from which it is possible to compute an
empirical risk minimizer for $S_L$. 
\begin{definition}[Compression Scheme~\cite{LW86}]
  Let $\cH$ be a hypothesis class and let
  $k\colon \NN\rightarrow \NN$. We say that $\cH$ has
  a~\emph{compression scheme} of size $k$ if for all $n\in \NN$, there
  exists an integer $k'\leq k(n)$, a~\emph{compression algorithm}
  $A \colon \cX_L^n \rightarrow \cX_L^{k'}$ and an~\emph{encoding
    algorithm} $B\colon \cX_L^{k'} \rightarrow \cH$ such that for any
  arbitrary set $S_L$ of $n$ labelled examples, $A$ will select a
  sequence of examples
  $A(S_L) = (z_{i_1}, z_{i_2}, \ldots, z_{i_{k'}})\in S_L^{k'}$, and
  $B$ will output a hypothesis $h' = B(A(S_L))$ that is
  an~\emph{empirical risk minimizer}; i.e.
  $err(S_L, h') \leq err(S_L, h)$ for all $h\in \cH$. We will call
  the algorithm $\cL = B\circ A$ a compression learner of size $k$ for
  the hypothesis class $\cH$.\footnote{Note that this definition of
    variable-length compression scheme (where the number of examples
    output by the compression algorithm depends on the input sample
    size) is more general than the one defined in~\cite{LW86}.}
 
\end{definition}

\begin{remark}{A natural extension to the compression scheme defined
    above is \emph{approximate} compression schemes \cite{DMY16},
    which produce approximate empirical risk minimizers rather than
    exact empirical risk minimizers. A particularly simple and naive
    approximate compression scheme results from subsampling: since it
    is possible to produce an $\eps$-approximate empirical risk
    minimizer for any function drawn from a VC-class of dimension $d$
    using $O(d/\eps^2)$ samples, it immediately follows that every
    VC-class of dimension $d$ admits an $\eps$-approximate compression
    scheme of size $k = O(d/\eps^2)$. As we will see, such a compression
    scheme is in general quite inefficient in terms of sample
    complexity, and by using a more sophisticated boosting-based
    compression scheme \cite{DMY16}, it is possible to obtain
    robust generalization with nearly optimal sample complexity for
    every VC-class.
}\end{remark}\sw{added}

Next, we want to show that any compression learner of small size
satisfies robust generalization. As an intermediate step, we recall
the following result, which follows from a standard application of a
concentration bound.

\begin{lemma}[see, e.g.,~\cite{shai} Theorem 30.2]\label{lem:shai}
  Let $n, k'\in \NN$ such that $n\geq 2k'$. Let
  $A\colon \cX_L^n \rightarrow \cX_L^{k'}$ be an algorithm that takes
  a sample $S_L$ of $n$ labelled examples as input, and selects a
  sequence of labelled examples
  $A(S_L) = (z_{i_1}, z_{i_2}, \ldots, z_{i_{k'}})\in S_L^{k'}$ of
  length ${k'}$.  Let algorithm
  $B\colon \cXL^{k'} \rightarrow (\cX\rightarrow \{0,1\})$ take a
  sequence of ${k'}$ labelled examples and return a hypothesis.

  For any random sample $S_L\sim_{i.i.d.}\cD_L^n$, let
  $V_L = \{z\mid z\notin A(S_L)\}$ be the set of examples not selected by
  $A$, and write $V$ for the unlabelled version of $V_L$.  Let
  $h= B(A(S_L))$ be the hypothesis output by $B$. Then, with
  probability of at least $1-\delta$ over the random draws of $S_L$
  and the randomness of $A$ and $B$, we have
\[
| h(V) - h(\cD) | \leq  \sqrt{h(V) \frac{4{k'}\log(2n/\delta)}{n}} +
\frac{8{k'}\log(2n/\delta)}{n}
\]
Recall that $h(\cD) = \E_{x\sim\cD}\left[h(x)\right]$ denotes the
expected value of $h$, and $h(V) = \frac{1}{n-{k'}}\sum_{x\in V} h(x)$
is the average value of $h$ over the examples in $V$.
\end{lemma}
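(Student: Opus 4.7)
The plan is to enumerate over all possible index tuples that the compression algorithm $A$ could output, apply a concentration inequality for each one separately, and then union bound. The key point is that once we fix which positions are selected, the hypothesis $h$ depends only on the examples at those positions, while the remaining $n-k'$ examples are still i.i.d.\ from $\cD_L$ and independent of $h$.

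First I would fix an arbitrary ordered tuple $I = (i_1,\ldots,i_{k'}) \in [n]^{k'}$ and define the auxiliary hypothesis $h_I = B((z_{i_1},\ldots,z_{i_{k'}}))$ that $B$ would output if it were handed the examples at those specific positions. Conditioning on the realized values $(z_{i_1},\ldots,z_{i_{k'}})$, the hypothesis $h_I$ becomes a fixed function (up to $B$'s internal randomness), and the remaining $n-k' \geq n/2$ samples $\{z_j : j \in [n]\setminus I\}$ are i.i.d.\ draws from $\cD_L$ independent of $h_I$. Writing $V_I$ for the corresponding unlabelled set, the random variables $\{h_I(x_j)\}_{j \in [n]\setminus I}$ are i.i.d.\ Bernoullis with mean $h_I(\cD)$, so a standard multiplicative (empirical-Bernstein--style) Chernoff bound gives, for any $\delta' > 0$, that with probability at least $1-\delta'$,
\[
|h_I(V_I) - h_I(\cD)| \leq \sqrt{h_I(V_I) \cdot \frac{c_1 \log(2/\delta')}{n-k'}} + \frac{c_2 \log(2/\delta')}{n-k'},
\]
for absolute constants $c_1, c_2$. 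Since this bound holds pointwise after conditioning, it holds unconditionally as well.

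Next I would union bound over all ordered tuples $I$: there are at most $n^{k'}$ of them, so setting $\delta' = \delta/n^{k'}$ makes $\log(2/\delta') \leq k'\log n + \log(2/\delta) \leq 2k'\log(2n/\delta)$. Using $n - k' \geq n/2$ to convert the $1/(n-k')$ denominator into $2/n$, and tuning $c_1, c_2$ to $4$ and $8$ respectively, yields the claimed inequality for every $I$ simultaneously with probability $\geq 1 - \delta$. In particular it holds for the (random) tuple $I^* = A(S_L)$ actually chosen by the compression algorithm, and by definition $h = h_{I^*}$ and $V = V_{I^*}$, which finishes the argument.

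The main obstacle is a technical rather than conceptual one: choosing the right form of the multiplicative Chernoff/Bernstein inequality so that (i) the empirical mean $h(V)$ appears under the square root (as stated) rather than the true mean $h(\cD)$, and (ii) the constants $4$ and $8$ in the final bound come out correctly after absorbing the $k'\log n$ term from the union bound and the $2/n$ factor from $n-k' \geq n/2$. One must also verify that the union bound is taken over ordered tuples (since $A$ outputs a sequence), but the factor of $k'!$ that could appear is harmless because $\log((n^{k'})) = k'\log n$ already suffices.
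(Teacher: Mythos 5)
Your proof is correct and follows the standard compression-bound argument; note that the paper does not prove Lemma~\ref{lem:shai} but cites it directly to~\cite{shai} (Theorem 30.2), and the argument there is exactly the one you give: fix an index tuple, condition so that the hypothesis is determined independently of the remaining $n-k'$ i.i.d.\ points, apply a multiplicative Chernoff/Bernstein bound, and union bound over the at most $n^{k'}$ tuples. Your accounting of the $\log$ factor and the $n-k'\geq n/2$ reduction is also the standard one, so there is nothing to add.
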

This theorem is useful in analyzing the guarantees of a compression
learner. If we interpret $A$ as a compression algorithm, and $B$ as an
encoding algorithm that outputs a hypothesis $h$, Lemma \ref{lem:shai}
says that the empirical error of $h$ over the remaining subset $V$ is
close to its true error.

However, we can also interpret algorithm $B$ as an adversary who is
trying to overfit a hypothesis to the input sample $S_L$. Since the
hypothesis output by a compression algorithm is uniquely determined by
the sequence of examples output by the compression algorithm $A$, we
could think of the adversary post-processing the size-${k'}$ sequence of
examples that defines the output hypothesis. Therefore, it suffices to
show that the compression algorithm $A$ is robustly generalizing. We
will establish this by showing that any algorithm that outputs a small
sequence of the input sample is robustly generalizing:

\begin{lemma}\label{thm:cats}
  Let $n,{k'}$ be integers, $\eps, \delta > 0$, and let
  $A \colon \cXl^n \rightarrow \cXl^{k'}$ be an algorithm that takes
  any set $S_L\in\cXl^n$ as input and outputs a sequence
  $T \in S_L^{k'}$ of size ${k'}$. Then $A$ is
  $(\eps, \delta)$-robustly generalizing for
\[
  \eps = \sqrt{\frac{{4 k'}\log(n/\delta)}{n}} + \frac{8 k'
    \log(2n/\delta)}{n} + \frac{k'}{n}.
\]
\end{lemma}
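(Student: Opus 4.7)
The plan is to reduce the claim directly to Lemma \ref{lem:shai}, combined with a trivial triangle inequality. The key observation is that in Definition \ref{def.wg} any adversary $\cA$ sees only $T = A(S_L)$, which is a sequence of $k'$ examples drawn from $S_L$, and produces some hypothesis $h\colon \cX\to\{0,1\}$. This is exactly the template of the composition $B\circ A$ in Lemma \ref{lem:shai}, with $\cA$ playing the role of the encoding algorithm $B$. So that lemma immediately yields a high-probability bound on $|h(V) - h(\cD)|$, where $V_L = S_L \setminus A(S_L)$ is the held-out portion of the sample and $V$ its unlabelled marginal.

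Having controlled $h$ on $V$, the next step will be to compare $h(V)$ with the empirical mean $h(S)$ on the full sample, which is what robust generalization actually requires. Writing $n\cdot h(S) = k'\cdot h(T) + (n-k')\cdot h(V)$ (treating $V_L$ as having size exactly $n-k'$; any repetitions in $A(S_L)$ only enlarge $V_L$ and make the bound better), one gets $h(S) - h(V) = (k'/n)(h(T) - h(V))$, which is at most $k'/n$ in absolute value since $h\in\{0,1\}$ forces both averages into $[0,1]$. A single triangle inequality then delivers $|h(S) - h(\cD)| \le k'/n + $ (the bound from Lemma \ref{lem:shai}), matching the stated $\eps$.

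The last ingredient will be converting Lemma \ref{lem:shai}'s single joint-probability bound into the ``split form'' $\delta = \zeta+\gamma$ demanded by Definition \ref{def.wg}. I plan to do this via Markov's inequality applied to the conditional failure probability $f(S_L) = \Pr_{A,\cA}[|h(S) - h(\cD)| > \eps \mid S_L]$. Since its expectation over $S_L$ is at most the failure parameter $\delta_0$ chosen when invoking Lemma \ref{lem:shai}, Markov gives $\Pr_{S_L}[f(S_L) > \gamma] \leq \delta_0/\gamma$, so any $\zeta,\gamma$ with $\zeta\gamma\ge\delta_0$ and $\zeta+\gamma\le\delta$ produces the required $(\eps,\delta)$-robust generalization statement.

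I expect the hard part to be purely bookkeeping: aligning $\delta_0$ with the logarithms in the stated $\eps$ so that the Markov split comes out with the right constants, and handling the minor wrinkle that $A$ may output a sequence with repeats (which only enlarges $V_L$ and therefore cannot hurt). There is no real analytic obstacle, since all of the concentration work is already packaged inside Lemma \ref{lem:shai}, and the rest of the argument is a one-line averaging identity plus the triangle inequality.
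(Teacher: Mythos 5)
Your core argument is the one the paper uses: invoke Lemma \ref{lem:shai} with the adversary in the role of the encoder $B$, then an averaging decomposition transfers the bound from the held-out set $V$ to the full sample $S$ at a cost of $k'/n$. The only divergence is your Markov step for the $\zeta+\gamma$ split; the paper instead fixes a deterministic adversary $F$, so that the only randomness is over $S_L$ and one may simply take $\zeta=\delta$, $\gamma=0$, preserving the stated constants, whereas routing through Markov forces $\delta_0 \lesssim \delta^2$ when invoking Lemma \ref{lem:shai} and inflates the logarithms by a constant factor (it is also unnecessary: the compression bound underlying Lemma \ref{lem:shai} holds uniformly over all size-$k'$ selections for a $1-\delta$ fraction of samples, which is exactly what licenses $\gamma=0$).
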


\begin{proof}
We will appeal to \Cref{lem:shai}.
Let $F\colon \cXl^{k'} \rightarrow \{\cX \rightarrow \{0,1\}\}$ be a
deterministic mapping from samples of size ${k'}$ to hypotheses. Let
$S_L\sim_{i.i.d.}\cD^n$ be a random sample of size $n$,  $T=A(S_L)$
be the sequence output by the compression algorithm, $V$ be the
examples (without labels) not selected by $A$, and $f = F(T)$ be the
function output by the adversary. By the result of Lemma \ref{lem:shai}, we
know that with probability at least $1-\delta$ over the random draws
of $S_L$, the following holds,
\begin{equation*}
|f(V) - f(\cD)| \leq \sqrt{\frac{4{k'}\log(2n/\delta)}{n}} + \frac{8{k'}\log(2n/\delta)}{n} \equiv C
\end{equation*}
Let $S$ be the examples in $S_L$ but without labels. By the triangle
inequality we have
\begin{align}
|f(S_L) - f(\cD)| &\leq \frac{1}{n} \left|\sum_{z\in S_L} (f(z) - f(\cD)) \right| \nonumber \\
& =\frac{1}{n} \left|\sum_{z\in V} (f(z) - f(\cD)) + \sum_{z\notin V}(f(z) - f(\cD))\right| \nonumber \\
& \leq\frac{1}{n} \left|\sum_{z\in V} (f(z) - f(\cD))\right| + \frac{1}{n}\left|\sum_{z\notin V}(f(z) - f(\cD))\right| \nonumber \\
&\leq \frac{C \, n}{n} + \frac{{k'}}{n} = C + \frac{{k'}}{n}\label{puppy}
\end{align}
which recovers our stated bound.
\end{proof}

Now we are ready to show that any hypothesis class that admits a
compression scheme of small size is learnable under robust
generalization.

\begin{theorem}[Compression implies RG Learnability]\label{thm:compressRG}
  Let $\cH$ be a hypothesis class with a compression scheme of size
  $k\colon \NN\rightarrow \NN$, and let $\cA\colon \cH\rightarrow \cH$
  be any adversary. Then given any input sample
  $S_L \sim_{i.i.d.} \cD_L^n$ of size $n$, the compression learner
  $\cL$ for $\cH$ outputs an hypothesis $h$ such that with probability
  at least $1 - \delta$, the error satisfies
  $err(h) \leq \min_{h'}err(h') + \eps$, and the adversary outputs a
  hypothesis $f = \cA(h)$ that satisfies $|f(S_L) - f(\cD)| \leq \eps$
  with
  \[
    \eps = O\left(\sqrt{\frac{k(n) \log(n/\delta)}{n}}\right),
  \]
  as long as $n \geq 8{k(n)}\,  \log(2n/\delta)$.
\end{theorem}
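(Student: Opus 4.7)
The plan is to derive the theorem's two conclusions — the accuracy bound $err(h) \leq \min_{h'} err(h') + \eps$ and the robust generalization bound $|f(S_L) - f(\cD)| \leq \eps$ — from the two lemmas just proved, namely Lemma~\ref{lem:shai} (concentration for the compression learner) and Lemma~\ref{thm:cats} (robust generalization of any algorithm that outputs a short subsequence), plus the post-processing closure of robust generalization (Lemma~\ref{lem:post}). The main technical content is already packaged in those lemmas, so the remainder is essentially bookkeeping.

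For the accuracy conclusion, I would instantiate Lemma~\ref{lem:shai} with a small reinterpretation: take the output ``hypothesis'' to be the $0/1$ loss indicator $\ell_h \colon (x,y) \mapsto \mathbf{1}[h(x) \neq y]$, so that $\ell_h(V_L) = err(V_L, h)$ and $\ell_h(\cD_L) = err(h)$. Lemma~\ref{lem:shai} then gives $|err(V_L, h) - err(h)| \leq C$ where $C = \sqrt{4k(n)\log(2n/\delta)/n} + 8k(n)\log(2n/\delta)/n$. Replaying the triangle-inequality step used to derive \Cref{puppy} in the proof of Lemma~\ref{thm:cats} converts this into $|err(S_L, h) - err(h)| \leq C + k(n)/n$. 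Because the compression learner produces an empirical risk minimizer on $S_L$, we have $err(S_L, h) \leq err(S_L, h^*)$ for $h^* \in \arg\min_{h'\in \cH} err(h')$; since $h^*$ does not depend on $S_L$, a standard Hoeffding bound gives $err(S_L, h^*) \leq err(h^*) + O(\sqrt{\log(1/\delta)/n})$ with probability at least $1-\delta$. Chaining these three inequalities yields $err(h) \leq err(h^*) + O(\sqrt{k(n)\log(n/\delta)/n})$, matching the claimed $\eps$.

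For the robust generalization conclusion, the key observation is that the output hypothesis $h = B(A(S_L))$ is a deterministic function of the compressed sequence $T = A(S_L)$, so the adversary's output $f = \cA(h) = \cA(B(T))$ is nothing more than post-processing of $T$. Lemma~\ref{thm:cats} says that the map $A$ is $(\eps, \delta)$-robustly generalizing for exactly the $\eps$ claimed in the theorem, and Lemma~\ref{lem:post} then transfers this guarantee to the composed map $\cA \circ B \circ A$. Hence with probability at least $1-\delta$ over $S_L \sim_{i.i.d.} \cD_L^n$, the adversary's hypothesis satisfies $|f(S_L) - f(\cD)| \leq \eps$, where the quantity $f(S_L)$ is understood in the marginal-over-$\cX$ sense of Definition~\ref{def.wg}.

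The only thing to check beyond this is parameter bookkeeping: the assumption $n \geq 8\, k(n)\, \log(2n/\delta)$ is exactly what is needed to ensure the $8k(n)\log(2n/\delta)/n$ term and the $k(n)/n$ term are dominated by the square-root term $\sqrt{4k(n)\log(2n/\delta)/n}$, so both of the bounds collapse cleanly into $\eps = O(\sqrt{k(n)\log(n/\delta)/n})$. I do not expect any genuine obstacle here; the conceptual content is the recognition that the adversary is just post-processing a short-description output, which lets us reuse Lemma~\ref{thm:cats} directly rather than reprove concentration for the adversary's hypothesis.
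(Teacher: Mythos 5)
Your proof is correct and takes essentially the same route as the paper: invoke Lemma~\ref{thm:cats} (plus post-processing closure) for the robust generalization conclusion, and combine the concentration from Lemma~\ref{lem:shai}/Lemma~\ref{thm:cats} with the ERM property and a Hoeffding bound for the fixed comparator $h^*$ to get the accuracy conclusion. Your write-up is in fact slightly more careful than the paper's proof, which says only that ``the accuracy guarantee directly follows by setting $\cA$ to be the identity map'' and leaves implicit the step $err(S_L, h^*) \leq err(h^*) + O(\sqrt{\log(1/\delta)/n})$ that you spell out.
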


\begin{proof}
  Note that when $n \geq 8{k(n)}\, \log(2n/\delta)$, the bound on
  $\eps$ in Lemma \ref{thm:cats} becomes
  $O\left(\sqrt{\frac{k(n) \log(n/\delta)}{n}}\right)$. Then by
  applying Lemma \ref{thm:cats}, we can guarantee that
  $|f(S_L) - f(\cD)|\leq \eps$ with probability at least $1 -
  \delta$. Then the accuracy guarantee of the learner's output
  hypothesis directly follows by setting $\cA$ to be the identity map.
\end{proof}

We can also show that compression learners continue to give robust
generalization under adaptive composition.

\begin{theorem}[Adaptive Composition for Compression Learners]\label{thm:comcomp}
  Let $\cM_{[m]}\colon \cX^n \rightarrow \cH^m$ be an adaptive
  composition of compression schemes such that for any $S\in \cX^n$,
  $\cM_{[m]}(S) = (h_1, \ldots , h_m)$, where $h_1 = \cM_1(S)$,
  $h_2 = \cM_2(S;h_1), \ldots, h_m = \cM(S; h_1, \ldots ,h_{m-1})$,
  where $\cM_i(\cdot; h_1, \ldots , h_{i-1})$ is a compression learner
  of size $k_i$ for all choices of $h_1, \ldots , h_{i-1}$.  Let
  $k = \sum_{i=1}^m k_i$. Then $\cM_{[m]}$ is
  $(\eps, \delta)$-robustly generalizing, where
\[
\eps = O\left(\sqrt{\frac{k\log(n/\beta)}{n}}\right),
\]
as long as $n \geq 8k\log(2n/\beta)$.
\end{theorem}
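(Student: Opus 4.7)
The plan is to reduce the adaptive composition to a single application of Lemma~\ref{thm:cats} with $k' = k$, by observing that the entire output $(h_1,\ldots,h_m)$ can be viewed as a deterministic post-processing of a single sequence of at most $k$ labelled examples drawn from $S$.

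First, I would define a combined ``selection algorithm'' $A^\star \colon \cX_L^n \rightarrow \cX_L^{k}$ that runs the adaptive composition internally and simply concatenates the compressed sequences: on input $S$, it runs $\cM_1$ to obtain the compression sequence $T_1$ (and the implicit hypothesis $h_1$), then runs $\cM_2(\cdot; h_1)$ to obtain $T_2$ and $h_2$, and so on, and finally outputs $A^\star(S) = (T_1, T_2, \ldots, T_m)$. By construction $|T_i| \le k_i$, so $A^\star(S)$ is a sequence of at most $k = \sum_i k_i$ elements of $S$ (padding with a fixed element of $S$ if necessary so the length is exactly $k$). Crucially, $A^\star$ is an algorithm of exactly the type covered by Lemma~\ref{thm:cats}.

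Next, I would argue that the adaptive composition $\cM_{[m]}$ itself is a deterministic post-processing of $A^\star$. Given the sequence $(T_1,\ldots,T_m)$ together with the shared randomness, one can sequentially invoke the encoding algorithm $B_1$ of $\cM_1$ on $T_1$ to recover $h_1$, then the encoding algorithm $B_2$ of $\cM_2(\cdot; h_1)$ on $T_2$ to recover $h_2$, and so on to reconstruct the full tuple $(h_1,\ldots,h_m)$. Therefore, for any adversary $\cA$ that takes $(h_1,\ldots,h_m)$ and outputs a test function $f$, we can define an adversary $\cA^\star$ that takes $A^\star(S)$, reconstructs $(h_1,\ldots,h_m)$ as above, and then applies $\cA$. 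The pair $(A^\star, \cA^\star)$ then falls directly within the setting of Lemma~\ref{thm:cats}.

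Finally, I would apply Lemma~\ref{thm:cats} with sequence length $k' = k$ and failure parameter $\beta$. Under the assumption $n \geq 8 k \log(2n/\beta)$, the dominant first term $\sqrt{4k\log(n/\beta)/n}$ absorbs the lower-order contributions $8k\log(2n/\beta)/n$ and $k/n$, yielding $\eps = O\!\left(\sqrt{k\log(n/\beta)/n}\right)$. This gives $|f(S_L) - f(\cD)| \le \eps$ with probability at least $1 - \beta$ over $S_L$ and the joint randomness, which is exactly the robust generalization guarantee claimed for $\cM_{[m]}$.

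The only subtlety, and the step I would take most care with, is making the reduction to a single compression sequence airtight in the adaptive setting: one must verify that the reconstruction of $(h_1,\ldots,h_m)$ from $(T_1,\ldots,T_m)$ is a legitimate post-processing (i.e., it does not secretly require further access to $S$), and that fixing $k_i$ independently of the earlier hypotheses (as the theorem assumes) is what allows the concatenated output to have a \emph{fixed} length $k$ so that Lemma~\ref{thm:cats} applies directly rather than in a union-bounded form over variable lengths.
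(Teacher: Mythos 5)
Your proposal matches the paper's proof exactly: both view the concatenation of the compression sequences as a single selection algorithm of size $k$, observe that the tuple of output hypotheses is a deterministic post-processing of that sequence, and then invoke Lemma~\ref{thm:cats} directly with $k'=k$. You spell out the reconstruction step and the fixed-length subtlety more explicitly than the paper does, but the argument is the same.
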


\begin{proof}
For each $\cM_i$, we can write it as $\cM_i(\cdot; h_1, \ldots ,
h_{i-1}) = (B_i\circ A_i)$, where $A_i$ is the compression algorithm
and $B_i$ is the encoding algorithm.  Note that the sequence of output
hypotheses is just a postprocessing of the sequence of examples
output by the compression algorithms---that is, given the sequence
of examples output by the compression algorithms, we can uniquely
determine the set of output hypotheses. So it suffices to prove that
the adaptive composition of the compression algorithms satisfies robust
generalization. Note that the composed compression algorithms can
be viewed as a single compression algorithm that releases a sequence
of examples of length $k$. By directly applying Lemma \ref{thm:cats}, we
recover the stated bound.
\end{proof}

\subsection{Robust Generalization via Differential Privacy and Description Length}

We  briefly review two existing techniques for obtaining algorithms with
robust generalization guarantees, from the recent line of work starting
with~\cite{stoc15}, and followed by~\cite{nips15, BH15,
  BNSSSU15}. Here we will rephrase their results in terms of robust
generalization (this terminology is new to the present paper).

First, it is known that differential privacy implies
robust generalization.
\begin{theorem}[\cite{BNSSSU15}]
Let $\cM\colon \cX_L^n\rightarrow \cR$ be a $(\eps,
\delta)$-differentially private mechanism for $n\geq
O(\ln(1/\delta)/\eps^2)$. Then $\cM$ also satisfies $(O(\eps),
O(\delta/\eps))$-robust generalization.
\end{theorem}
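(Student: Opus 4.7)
The plan is to reduce the theorem to the classical ``differential privacy implies generalization'' transfer theorem (in the $[0,1]$-valued function form), with the adversary folded into the mechanism via post-processing. I would split the argument into three steps.

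First, I would invoke the post-processing property of differential privacy: for any adversary $\cA$, the composed map $\cA \circ \cM$ is still $(\eps,\delta)$-differentially private with respect to the input sample $S_L$, since $\cA$ does not look at $S_L$ directly. Hence it suffices to prove the following cleaner statement: if $\cM'\colon \cXl^n \to (\cX_L\to\{0,1\})$ is $(\eps,\delta)$-DP and outputs a function $h$, then with probability at least $1-O(\delta/\eps)$ over the draw of $S_L\sim_{i.i.d.}\cD_L^n$ and the coins of $\cM'$, we have $|h(S_L)-h(\cD_L)|\le O(\eps)$. Note that the robust-generalization definition uses the average $h(\cdot)$ of a $\{0,1\}$-valued function on $\cX_L$ rather than an error, so this is exactly a ``DP in, mean estimation out'' statement and the labeled/unlabeled distinction is cosmetic.

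Second, I would establish the in-expectation version of the transfer theorem. Fix a $(\eps,\delta)$-DP mechanism $\cM'$ outputting $h$. Write $h(S_L)=\frac1n\sum_i h(z_i)$ and consider a coupled sample $S'_L=(z_1,\ldots,z_{i-1},z'_i,z_{i+1},\ldots,z_n)$ in which the $i$-th point has been replaced by an independent draw $z'_i\sim\cD_L$. By differential privacy, the joint distribution of $(\cM'(S_L),z_i)$ is $(\eps,\delta)$-close to that of $(\cM'(S'_L),z_i)$, and the latter equals in distribution $(\cM'(S_L), z'_i)$ by symmetry, and $z'_i$ is independent of $\cM'(S_L)$. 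Since $h$ takes values in $[0,1]$, this yields $\bigl|\E[h(z_i)] - \E_{h\sim\cM',z\sim\cD_L}[h(z)]\bigr|\le e^{\eps}-1+\delta = O(\eps)+\delta$; averaging over $i$ gives $\bigl|\E[h(S_L)]-\E[h(\cD_L)]\bigr|\le O(\eps)+\delta$.

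Third, and this is the main obstacle, I would upgrade the expectation bound to a high-probability bound using the ``monitor'' technique of \cite{BNSSSU15}. Imagine a meta-mechanism $\widetilde{\cM}$ that runs $\cM'$ on $k$ independent resampled datasets $S^{(1)}_L,\ldots,S^{(k)}_L$ drawn from $\cD_L^n$ and outputs the index/hypothesis pair that maximizes $h(S^{(j)}_L)-h(\cD_L)$ (and a companion to handle the other sign). By group privacy on the disjoint blocks, $\widetilde{\cM}$ is still $(\eps,\delta)$-DP in the combined dataset. Applying the in-expectation bound of Step~2 to this monitored maximum, together with the standard Chernoff/Hoeffding concentration for each $h(S^{(j)}_L)$ around $h(\cD_L)$ (which contributes an error of $O(\sqrt{\log(1/\delta)/n})$, dominated by $\eps$ under the hypothesis $n\ge O(\log(1/\delta)/\eps^2)$), a suitable choice of $k=\Theta(1/\eps)$ converts the expectation bound into a tail bound of the form $\Pr[|h(S_L)-h(\cD_L)|>O(\eps)]\le O(\delta/\eps)$. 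Splitting this failure probability into the ``bad sample'' part $\zeta$ and the ``bad coins of $\cM$'' part $\gamma$ and summing yields exactly the $(O(\eps),O(\delta/\eps))$-robust generalization guarantee in Definition~\ref{def.wg}.
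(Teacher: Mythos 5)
The paper does not prove this theorem; it is quoted directly from \cite{BNSSSU15}, and the paper presents it only as background. Your proposal is a reconstruction of the \cite{BNSSSU15} argument, and structurally it matches their approach quite closely: reduce via post-processing, prove an in-expectation ``transfer'' bound by swapping one coordinate, and amplify to a tail bound using the monitor/hold-out technique.

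A few remarks on the details. The phrase ``by group privacy on the disjoint blocks'' in Step~3 is not the right justification -- group privacy governs samples that differ in several entries, whereas here the $k$ blocks partition the combined dataset, so changing one point of the combined dataset changes only one block. The correct observation is that the $j$th copy of $\cM'$ depends only on block $j$, and the monitor is a post-processing of the $k$ outputs, so the monitored mechanism inherits $(\eps,\delta)$-DP on the $nk$-point dataset by parallel composition plus post-processing, with no degradation in parameters. Also, the in-expectation bound of Step~2 as stated controls only $\bigl|\E[h(S_L)]-\E[h(\cD_L)]\bigr|$; what Step~3 actually needs is a version of that lemma adapted to the monitor, namely a bound on $\E\bigl[h^{(j^*)}(S^{(j^*)})-h^{(j^*)}(\cD_L)\bigr]$ where both the hypothesis and the block index are chosen adaptively by the DP monitor. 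In \cite{BNSSSU15} this is a separate lemma proved by a similar coordinate-swap coupling applied to the nested mechanism; it does not follow from the single-sample version by averaging, because the selected block $j^*$ is correlated with the data in all blocks. You gesture at this (``applying the in-expectation bound of Step~2 to this monitored maximum''), but the intermediate lemma is where the real work lies and deserves to be stated. Finally, the two-sided issue you mention in passing (``and a companion to handle the other sign'') is a real subtlety: since the per-block deviation can be negative, the expectation of the one-sided max is not directly comparable to a tail probability; \cite{BNSSSU15} handle this by running parallel monitors for the two signs (or, equivalently, augmenting the output space with the negated query). None of this breaks your outline -- it is the same route as \cite{BNSSSU15} -- but Step~3 as written omits the piece that makes the amplification rigorous.
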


Algorithms with a small output range (i.e., each output can be
described using a small number of bits) also enjoy
robust generalization.
\begin{theorem}[\cite{nips15}]\label{thm:description}
Let $\cM\colon \cX_L^n \rightarrow \cR$ be a mechanism such that $|\cR|$
is bounded. Then $\cM$ satisfies $(\alpha, \beta)$-robust generalization, with
$\alpha = \sqrt{\frac{\ln(|\cR|/\beta)}{2n}}$.
\end{theorem}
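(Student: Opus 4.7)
The plan is to reduce the problem to a standard Hoeffding-plus-union-bound argument by exploiting the fact that the adversary's output, as a function of $\cM(S_L)$, can take at most $|\cR|$ distinct values once the adversary's internal randomness is fixed.

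First, I would reduce to the deterministic adversary case by conditioning on the adversary's random coins. For each fixing of $\cA$'s randomness $r$, the map $o \mapsto \cA(o; r)$ is a deterministic function from $\cR$ to the space of hypotheses. Thus the set $\cH_r = \{\cA(o; r) : o \in \cR\}$ of possible output hypotheses has cardinality at most $|\cR|$, regardless of what $\cM(S_L)$ happens to be. Crucially, this set depends only on $r$ and not on the sample $S_L$.

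Next, for each fixed $r$, I would apply Hoeffding's inequality to every $h \in \cH_r$: since $h(x) \in \{0,1\}$ and $S_L$ consists of $n$ i.i.d.\ draws from $\cD_L$, for any fixed $h$,
\[
\Prob{S_L \sim \cD_L^n}{|h(S_L) - h(\cD_L)| > \alpha} \;\leq\; 2 \exp(-2n\alpha^2).
\]
A union bound over the at most $|\cR|$ hypotheses in $\cH_r$ yields
\[
\Prob{S_L}{\exists\, h \in \cH_r : |h(S_L) - h(\cD_L)| > \alpha} \;\leq\; 2|\cR|\exp(-2n\alpha^2).
\]
Setting the right-hand side equal to the desired failure probability and solving for $\alpha$ produces the bound $\alpha = \sqrt{\ln(|\cR|/\beta)/(2n)}$ (up to the absorbed constant factor of $2$ from two-sided Hoeffding, which is standard).

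Finally, I would combine the randomness of $\cA$ with that of $S_L$. Since the above bound holds pointwise in $r$, Fubini gives $\Pr_{S_L, r}[\text{overfit}] \leq \beta$. To split this into the $\zeta + \gamma$ form required by Definition~\ref{def.wg}, I would apply Markov's inequality to the nonnegative random variable $\Pr_r[\text{overfit} \mid S_L]$, obtaining that with probability at least $1 - \zeta$ over $S_L$, the conditional probability of overfitting over the adversary's coins is at most $\gamma$, for any split $\zeta + \gamma = \beta$ we choose. The main subtlety is really just this last bookkeeping step --- choosing the split between the two failure events --- because the Hoeffding/union-bound core and the observation bounding $|\cH_r|$ by $|\cR|$ are both immediate; no difficulty arises from the range $\cR$ being unstructured, precisely because $\cH_r$ is indexed by $\cR$ rather than by the (potentially much larger) hypothesis class.
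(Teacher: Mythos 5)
The paper does not prove Theorem~\ref{thm:description}; it is cited from~\cite{nips15}, so there is no in-paper proof to compare against. Your core argument is the standard description-length argument and is correct in its essentials: for a fixed choice of the adversary's coins $r$, the range $\cH_r = \{\cA(o;r) : o \in \cR\}$ is a set of at most $|\cR|$ hypotheses determined independently of $S_L$, and Hoeffding plus a union bound over $\cH_r$ gives, for each fixed $r$,
\[
\Pr_{S_L \sim \cD_L^n}\!\left[\exists\, h \in \cH_r : |h(S_L) - h(\cD_L)| > \alpha\right] \;\leq\; 2|\cR|\,e^{-2n\alpha^2}.
\]

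Your final bookkeeping step, however, is wrong as stated. You claim that Markov's inequality applied to $g(S_L) := \Pr_r[\text{overfit} \mid S_L]$, which satisfies $\E_{S_L}[g] \leq \beta'$, yields the conclusion ``with probability at least $1-\zeta$ over $S_L$ the conditional failure probability is at most $\gamma$, for \emph{any} split $\zeta + \gamma = \beta$.'' Markov actually gives $\Pr_{S_L}[g(S_L) > \gamma] \leq \beta'/\gamma$, i.e., the constraint $\zeta\gamma \geq \beta'$ --- a multiplicative tradeoff, not the additive one Definition~\ref{def.wg} requires. The minimum of $\zeta+\gamma$ subject to $\zeta\gamma \geq \beta'$ is $2\sqrt{\beta'}$, which strictly exceeds $\beta'$ for $\beta' < 1$, so if $\beta' = \beta$ then $\zeta+\gamma = \beta$ is simply unachievable by this route. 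To make the Markov argument deliver the required additive form you would need $\beta' \leq \beta^2/4$, which degrades the constant: $\alpha$ becomes on the order of $\sqrt{\ln(|\cR|/\beta^2)/(2n)}$ rather than the stated $\sqrt{\ln(|\cR|/\beta)/(2n)}$. For a \emph{deterministic} adversary the difficulty disappears --- there $g(S_L) \in \{0,1\}$, so one takes $\gamma = 0$, $\zeta = \beta'$, and your Hoeffding-plus-union-bound core recovers the claimed $\alpha$ up to the factor of $2$ you already flagged. So your argument is right for deterministic adversaries and right up to constants in general, but the claim that Markov lets you choose any additive split is a genuine gap.
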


\subsection{Case Study: Proper Threshold Learning}\label{sec:thre}
Next, we consider the problem of properly learning~\emph{thresholds}
in the PAC setting. We will first note that when the domain size is
infinite, there is no proper PAC learner that is differentially
private or has finite output range. In contrast to these impossibility
results, we show that the class of threshold functions admits a simple
compression scheme, and hence a PAC learning algorithm that satisfies
robust generalization. This result, in particular, gives a
separation between the power of learning under robust generalization
and that of learning under differential privacy.

Consider the hypothesis class of~\emph{threshold functions}
$\{h_x\}_{x\in \cX}$ over a totally ordered domain $\cX$, where
$h_x(y) = 1$ if $y\leq x$ and $h_x(y) = 0$ if $y > x$. We will first
recall an impossibility result for privately learning thresholds.

\begin{theorem}[\cite{BNSV15} Theorem 6.2]\label{bun}
Let $\alpha > 0$ be the accuracy parameter (as in~\Cref{def:pac}). For
every $n\in \NN$, and $\delta \leq 1/(1500n^2)$, any $(1/2,
\delta)$-differentially private and $(\alpha, 1/8)$-accurate (proper)
PAC learner for threshold functions requires sample complexity $n =
\Omega\left(\log^*|\cX|/\alpha\right)$.
\end{theorem}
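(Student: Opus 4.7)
My plan is to prove this lower bound via a reduction to the so-called \emph{interior point problem}: given a sample $S \in \cX^n$, output some $x \in \cX$ lying in the interval $[\min(S), \max(S)]$. First, I would show that a $(1/2, \delta)$-differentially private, $(\alpha, 1/8)$-accurate proper PAC learner for thresholds yields a $(1/2, \delta)$-differentially private algorithm for the interior point problem with only constant multiplicative blowup in sample size. The reduction takes the unlabelled sample $S$, pads it to size $\Theta(n/\alpha)$ with extreme ``sentinel'' points on both ends of $\cX$, labels everything consistently with $h_{t}$ for some arbitrary $t \in [\min(S),\max(S)]$, and then runs the learner. If the learner is $\alpha$-accurate, the output threshold $h_x$ must have $x$ close enough to $t$ that in particular $x \in [\min(S), \max(S)]$. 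Post-processing preserves differential privacy, so this reduction is DP-preserving.

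The heart of the argument, and the main obstacle, is proving a $\log^*|\cX|$ sample complexity lower bound for the interior point problem. Suppose a $(1/2, \delta)$-DP algorithm $\cM$ solves interior point on $\{1, \ldots, N\}$ with $n$ samples and constant success probability. I would show by an inductive/recursive argument that $N$ cannot be much larger than a tower of exponentials of height $O(n)$, which is exactly the statement $n \ge \Omega(\log^* N)$.

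To execute the recursion, I would analyze how $\cM$ behaves on a structured family of ``block'' datasets built from a small number of candidate values in $\{1,\ldots,N\}$. The base case uses the degenerate dataset of $n$ copies of a single point $a$: on this input $\cM$ must output $a$ with high probability. Then group privacy (applying the $(1/2,\delta)$ guarantee to the $n$-row difference between ``all $a$'s'' and ``all $b$'s'' for some $b > a$) shows that the output distribution of $\cM$ on ``all $b$'s'' still places noticeable mass on $a$. This forces the success probability of $\cM$ on mixed datasets with parts $a$ and $b$ to be tied to its behavior on neighboring mixtures, and iterating this via a pigeonhole over exponentially many candidate blocks shrinks the effective domain by a single exponential per recursive step. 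The condition $\delta \le 1/(1500 n^2)$ is exactly what is needed to keep the accumulated privacy loss and failure probability controlled across all $O(n)$ recursive levels.

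Finally, I would combine the two pieces. The factor $1/\alpha$ in the final bound comes from the padding in the reduction: to force the PAC learner's threshold to land in an arbitrarily narrow interval, the interior point sample must be inflated by a factor of $1/\alpha$, so if interior point on $\cX$ needs $m = \Omega(\log^* |\cX|)$ samples, then the PAC learner for thresholds needs $n = \Omega(m/\alpha) = \Omega(\log^* |\cX| / \alpha)$. Careful bookkeeping of the privacy and accuracy parameters through the reduction, and verifying that the $\delta \le 1/(1500 n^2)$ hypothesis of the theorem is preserved by the reduction to interior point, completes the argument.
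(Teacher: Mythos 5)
Note first that this theorem is cited from \cite{BNSV15} and not proved in the present paper, so there is no internal proof to compare against; the question is whether your sketch reconstructs the actual BNSV15 argument.

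Your overall architecture does match BNSV15: reduce private proper threshold learning to the private interior-point problem at a multiplicative $1/\alpha$ cost in sample complexity, then prove an $\Omega(\log^*|\cX|)$ lower bound for interior point. The first reduction is sound in outline.

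The flaw is in the argument you sketch for the interior-point lower bound itself. Group privacy applied to the $n$-row difference between ``all $a$'' and ``all $b$'', with $\eps=1/2$, gives only $\Pr[\cM(\text{all }a)=a] \le e^{n/2}\,\Pr[\cM(\text{all }b)=a] + n\,e^{(n-1)/2}\delta$, which is vacuous once $n$ exceeds a small constant: the mass that $\cM(\text{all }b)$ places on $a$ may be as small as $e^{-n/2}$, which is not ``noticeable,'' and no pigeonhole over exponentially many candidate blocks can recover it, since those blocks' output distributions can legitimately be $e^{\Theta(n)}$-far apart. More fundamentally, no group-privacy/packing argument of this kind can yield a $\log^*$ bound, because pure $\eps$-DP proper threshold learning \emph{is} achievable with $O(\log|\cX|/(\eps\alpha))$ samples; the tower-height phenomenon is specific to $(\eps,\delta)$-DP with $\delta>0$ and must exploit the additive slack. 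The actual BNSV15 recursion is a \emph{sample-consuming descent}: an $n$-sample DP interior-point solver on a domain of size $T$ is converted, by a randomized encoding that spends one of its samples, into an $(n-1)$-sample DP solver on a domain of size roughly $\log T$; iterating $n$ times against the impossibility of a one-sample solver forces $T$ to be at most a tower of exponentials of height $O(n)$, i.e.\ $n = \Omega(\log^* T)$. The hypothesis $\delta \le 1/(1500 n^2)$ is what keeps the accumulated privacy-parameter degradation under control over those $n$ levels of descent. Your sketch never produces this descent in sample size, so it does not close.
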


In particular, the result of~\Cref{bun} implies that there is no
private proper PAC learner for threshold functions over an infinite
domain. Similarly, we can show that there is no proper PAC learner for
thresholds that has a finite outcome range. \katrina{any objection
  that I switched to ``range'' over ``space'' throughout?}\sw{sure}

\begin{lemma}
  Let $\cH$ be the hypothesis class of threshold functions. For any
  $n\in \NN$ and any learner $\cM\colon \cX_L^n \rightarrow \cH'$ such
  that the output hypothesis class $\cH'$ is a subset of $\cH$ and has
  bounded cardinality, there exits a distribution $\cD\in \Delta\cX$
  such that the output hypothesis has true error $err(h) \geq 1/2$.
\end{lemma}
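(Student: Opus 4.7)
The plan is to exploit the finiteness of $\cH'$ against the infinitude of $\cX$ via a pigeonhole-style argument, constructing a two-point distribution that is ``hidden'' in a gap between consecutive threshold points realizable by $\cH'$. Write $\cH' = \{h_{x_1}, \ldots, h_{x_k}\}$ with $x_1 < x_2 < \cdots < x_k$. Since $\cX$ is infinite (the relevant regime for the separation), the $k$ points $x_1, \ldots, x_k$ cannot cover all of $\cX$, and in particular we can choose two elements $a, b \in \cX$ with $a < b$ such that the half-open interval $[a,b) \cap \cX$ contains none of the $x_i$'s; equivalently, every $x_i$ satisfies either $x_i < a$ or $x_i \ge b$.

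Next, set the target hypothesis to $h^* = h_{a}$ (which lies in $\cH$ since $a \in \cX$), and let $\cD$ be the uniform distribution on $\{a, b\}$. Then any training sample $S_L \sim_{i.i.d.} \cD_L^n$ consists only of the labeled pairs $(a,1)$ and $(b,0)$. Now I would verify that every hypothesis available to $\cM$ misclassifies at least one of $\{a,b\}$: if $x_i < a$, then $h_{x_i}(a) = 0 \ne 1 = h^*(a)$; if $x_i \ge b$, then $h_{x_i}(b) = 1 \ne 0 = h^*(b)$. Either way,
\[
err(h_{x_i}) \;=\; \Pr_{y \sim \cD}[h_{x_i}(y) \ne h^*(y)] \;\ge\; \tfrac{1}{2}.
\]
Since this bound holds for \emph{every} hypothesis in $\cH'$, it holds for whatever hypothesis $\cM$ outputs, regardless of its internal randomness or the particular sample drawn. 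This establishes the claim.

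The proof is essentially pigeonhole and the only thing to be careful about is that we need $[a,b) \cap \cX$ to be nonempty and contain no $x_i$, which is immediate from $|\cX| > k$ (in particular from $\cX$ being infinite); no step here is a genuine obstacle. The takeaway — and the reason this lemma matters for Section~\ref{sec:thre} — is that combining it with Theorem~\ref{thm:description} rules out \emph{any} robustly generalizing proper threshold learner obtained by the description-length technique, whereas the compression-scheme route of Theorem~\ref{thm:compressRG} succeeds, giving the separation between robust generalization and the two previously known techniques (differential privacy via Theorem~\ref{bun} and bounded description length via this lemma).
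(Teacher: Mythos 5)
Your proof is correct and takes essentially the same approach as the paper: place two points in a gap not separated by any threshold realizable in $\cH'$, set the target to separate them, and conclude any output of $\cM$ has true error at least $1/2$. The only minor difference is that the paper fixes the gap to be $(x_1,x_2)$ (implicitly assuming the ambient domain is dense enough to contain two such points there), whereas you more carefully note that any gap with two points of $\cX$ suffices, which covers arbitrary infinite totally ordered domains.
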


\begin{proof}
Let $|\cH'| = m$. We can write $\cH' = \{h_{x_1}, h_{x_2}, \ldots,
h_{x_m}\}$ such that $x_1 < x_2 < \ldots < x_m$. Let $y, z$ be points
such that $x_1 < y < z < x_2$. Let $\cD$ be a distribution over $\cX$
that puts half of the probability mass on $y$ and the other half on
$z$. Suppose our target hypothesis is $c = h_y$. Then $c(y) = 1$ and
$c(z) = 0$. Note that for each $h\in \cH'$, it must be case that $h(y)
= h(z)$, and thus its true error must be at least $1/2$.
\end{proof}

Now we will show that the class of threshold functions can be properly
PAC learned under the constraint of robust generalization~\emph{even
  when} the domain size is infinite.

\begin{theorem}
  Let $\cH$ be the hypothesis class of threshold functions. There
  exists a compression learner for $\cH$ such that when given a input
  sample $S_L\sim_{i.i.d.} \cD_L^n$ of size $n$, it is both
  $(\eps, \delta)$-accurate and $(\eps, \delta)$-robustly generalizing
  for any $\delta \in (0, 1)$ and
\[
  \eps = O\left( \sqrt{\frac{\log(n/\delta)}{n}} \right)
\]
as long as $n\geq 8 \log(2n/\delta)$.
\end{theorem}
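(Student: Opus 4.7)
The plan is to exhibit a size-$1$ compression scheme for the class $\cH$ of threshold functions and then invoke Theorem~\ref{thm:compressRG}, which already packages both the accuracy and robust generalization conclusions into a single statement.

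Concretely, on input $S_L = \{(x_1, y_1), \ldots, (x_n, y_n)\}$ drawn from a realizable PAC distribution consistent with some target $h_{x^*} \in \cH$, I would let the compression algorithm $A$ output the single labelled example $(x^+, 1)$, where $x^+ = \max\{x_i : y_i = 1\}$ is the largest positive point in the sample (in the degenerate boundary case where no positive example appears, $A$ outputs the empty sequence and $B$ defaults to an arbitrary threshold lying below the minimum sample point, which labels the whole sample $0$ and thus has zero empirical error). The encoding algorithm $B$ then returns $h_{x^+}$. To see that $h_{x^+}$ is an empirical risk minimizer, note that $(x^+, 1) \in S_L$ forces $x^+ \leq x^*$; therefore every positive sample point $x_i$ satisfies $x_i \leq x^+$ by maximality of $x^+$, giving $h_{x^+}(x_i) = 1$, and every negative sample point $x_j$ satisfies $x_j > x^* \geq x^+$, giving $h_{x^+}(x_j) = 0$. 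Hence $err(S_L, h_{x^+}) = 0$, which is optimal, so $(A, B)$ is a valid compression scheme of size $k(n) = 1$ for $\cH$.

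With this scheme in hand, I would apply Theorem~\ref{thm:compressRG} directly. Substituting $k(n) = 1$, the sample-size hypothesis $n \geq 8\, k(n) \log(2n/\delta)$ becomes exactly $n \geq 8 \log(2n/\delta)$, matching the theorem statement, and the conclusion yields $\eps = O(\sqrt{\log(n/\delta)/n})$ simultaneously for the accuracy guarantee $err(h_{x^+}) \leq \min_{h'} err(h') + \eps = \eps$ and for the robust generalization guarantee $|f(S_L) - f(\cD_L)| \leq \eps$ against any adversary $\cA$, each holding with probability at least $1 - \delta$. There is no substantive obstacle: the entire argument reduces to the observation that any realizable sample of thresholds is pinned down by a single extremal point, and this size-$1$ compression scheme is precisely what sidesteps both the differential-privacy lower bound of Theorem~\ref{bun} and the bounded-description-length impossibility shown above, placing thresholds as a canonical separation between RG-learnability and the previously known techniques for robust generalization.
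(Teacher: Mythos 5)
Your proof is correct and matches the paper's argument: both exhibit the same size-$1$ compression scheme (output the largest positively labelled example, encode it as the threshold at that point) and then invoke Theorem~\ref{thm:compressRG} with $k(n)=1$. The only difference is that you explicitly handle the degenerate no-positive-example case and spell out the empirical-risk-minimizer verification, which the paper leaves implicit.
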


\begin{proof}
  Consider the compression function $A$, that, given a sample, outputs
  the largest positive example $s_+$ in the sample. Then consider the
  encoding function $B$ that, given any example $s_+$, returns the
  threshold function $h_{s_+}$. Such a threshold function will
  correctly label all the examples in the sample. This gives us a
  compression scheme of size 1 for the class of threshold
  functions. Then the result follows directly
  from~\Cref{thm:compressRG}.
\end{proof}

\subsection{Every Learnable Class is Learnable under Robust Generalization}
Finally, we will show that any PAC-learnable hypothesis class can be
learned under robust generalization with a logarithmic blowup in the
sample complexity. We will rely on a result due to~\cite{DMY16}, which
shows that any learnable class admits a compression scheme of size
scaling logarithmically in the input sample size $n$.

\begin{theorem}[\cite{DMY16} (see Theorem 3.1)]\label{thm:dmv}
  Let $\cH$ be a hypothesis class that is PAC/agnostically learnable
  with VC-dimension $d$; then it has a compression scheme of size
  $$k(n) = O(d \log(n) \log\log(n) + d \log(n) \log(d)).$$
\end{theorem}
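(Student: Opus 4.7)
The plan is to construct the sample compression scheme via boosting. In outline: (1) extract from the VC assumption a weak learner whose output is determined by $O(d)$ training points; (2) run a boosting procedure for $T = O(\log n)$ rounds over the full sample $S_L$ to produce a hypothesis $h^\ast$ with zero training error (hence an ERM in the realizable case); and (3) form the compression by recording, in order, the $O(d)$-size subsample fed to the weak learner in each of the $T$ rounds. The decoder reruns the weak learner on each block and takes the majority vote. The extra $\log\log n$ and $\log d$ factors appear from (respectively) amplifying the per-round success probability of the weak learner and uniformly controlling ERM over all $O(d)$-size subsamples.

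For step (1), by the fundamental theorem of PAC learning there is an absolute constant $\gamma > 0$ such that ERM on a random i.i.d.\ subsample of size $k_0 = O(d)$ drawn from any distribution supported on $S_L$ produces, with constant probability, a hypothesis of error at most $1/2 - \gamma$ on that distribution. This hypothesis is entirely determined by the $k_0$ sampled points. For step (2), apply $\alpha$-boost starting from the uniform distribution on $S_L$: in round $t$, sample $k_t$ examples from the current weighting $D_t$, call the weak learner to obtain $h_t$, and reweight $D_{t+1}$ to up-weight mistakes. Standard boosting analysis gives that after $T = O(\gamma^{-2}\log n) = O(\log n)$ rounds, the majority vote $h^\ast = \mathrm{Maj}(h_1,\dots,h_T)$ has empirical error below $1/n$, i.e.\ zero, on $S_L$. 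For the agnostic case one substitutes an agnostic boosting variant (e.g., SmoothBoost / MadaBoost) producing an approximate ERM with a comparable round count.

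For step (3), the compression $A(S_L)$ is the concatenation of the $T$ round-batches, totaling $\sum_t k_t$ examples. The encoder $B$ partitions this sequence into the $T$ batches (whose sizes are fixed deterministic functions of $n$ and $d$), reruns the canonical deterministic ERM on each batch to recover $h_1,\dots,h_T$, and outputs the majority vote. Crucially, $B$ does not need access to $S_L$ or to the weights $D_t$: each $h_t$ is recoverable from its batch alone. To ensure the boosting argument goes through across all rounds, the weak learner's per-round constant success probability must be amplified to $1 - O(1/T)$, costing an $O(\log T) = O(\log\log n)$ factor in the per-round batch size, and producing the $d\log n \log\log n$ term. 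The $d\log n\log d$ term is contributed by a uniform-convergence step that controls ERM simultaneously over the $\binom{n}{k_0}$ possible subsamples via a Sauer--Shelah bound on the number of distinct hypotheses.

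The main obstacle is ensuring the entire chain $h_1,\dots,h_T$, and hence $h^\ast$, is a deterministic function of the compressed sequence alone. This requires fixing a canonical tie-breaking rule inside the weak learner so that each ERM is a deterministic function of its batch, and fixing an unambiguous block decomposition of the compressed sequence. A more subtle point is that the boosting reweighting $D_t$ depends on $h_1,\dots,h_{t-1}$ evaluated on \emph{all} of $S_L$, so the compression algorithm $A$ must simulate boosting on $S_L$ when \emph{selecting} which batches to include; however, once selected, $B$ needs only the batches themselves, since $h^\ast$ depends on $S_L$ only through the recorded batches. Getting the accounting tight enough to yield the claimed $O(d\log n\log\log n + d\log n\log d)$ bound (rather than, say, $O(d\log^2 n)$) is where the most care is required.
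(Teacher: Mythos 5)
This theorem is cited from \cite{DMY16}; the paper does not reproduce a proof, so there is no ``paper's own proof'' to compare against. Your reconstruction does capture the core of the boosting-based argument underlying the DMY16 result: take as a weak learner ERM on a small random subsample from the boosting reweighting (size $O(d)$ by VC uniform convergence), run $\alpha$-Boost for $T = O(\log n)$ rounds to drive the empirical error of the majority vote below $1/n$, and compress by concatenating the per-round batches. You also correctly flag the subtle-but-essential point that the decoder $B$ must reconstruct each $h_t$ from its batch alone (via a canonical deterministic ERM with tie-breaking and a fixed block decomposition), while the encoder $A$ may consult all of $S_L$ to compute the reweightings when \emph{choosing} which batches to record. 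That observation is exactly what makes a boosting run into a legitimate compression scheme.

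Two places where your sketch drifts from how \cite{DMY16} actually proceeds, or leaves a gap worth being aware of. First, the decoder's output $h^\ast = \mathrm{Maj}(h_1,\dots,h_T)$ is a majority vote of hypotheses from $\cH$ and is therefore generically \emph{not} in $\cH$, which is in tension with the present paper's definition requiring $B \colon \cX_L^{k'} \to \cH$ and with the focus on proper learning throughout; DMY16's own statement is for a compression scheme whose reconstruction need not be proper, and the present paper quietly relies on the fact that Lemma~\ref{thm:cats} and the robust-generalization argument never actually need $B$ to land in $\cH$. Second, for the agnostic case DMY16 do not swap in an agnostic booster such as SmoothBoost or MadaBoost; instead they give a black-box reduction from agnostic sample compression to realizable sample compression (compress a maximum realizable subsample of $S_L$), which keeps the same weak learner and boosting machinery and only costs constants. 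Your agnostic-boosting route is plausible but would require its own analysis and does not track the cited source. Finally, your accounting for the $\log\log n$ and $\log d$ factors is the least airtight part of the sketch: VC uniform convergence for confidence $1 - O(1/T)$ gives batch size $O(d + \log T)$, which produces an \emph{additive} $\log n \log\log n$ term rather than the multiplicative $d\log n\log\log n$ in the stated bound; the actual factor structure in DMY16 comes out of a more careful $\eps$-net argument for the weak learner rather than the ``amplify, then union-bound over rounds'' heuristic you describe, so that step would need to be redone precisely to match the claimed $k(n)$.
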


Our result then follows directly from~\Cref{thm:compressRG}
and~\Cref{thm:dmv}.

\begin{theorem}\label{thm:all}
  Let $\cH$ be a hypothesis class. Suppose that $\cH$ is
  PAC/agnostically learnable with $\vc(\cH) = d$. Then there exists a
  compression learner for $\cH$ such that when given input sample
  $S_L\sim_{i.i.d.}\cD_L^n$, $\cL$ is both $(\eps, \delta)$-accurate and
  $(\eps, \delta)$-robustly generalizing for any $\delta\in (0, 1)$
  and $\eps = \tilde O\left(\sqrt{ {d}/{n}} \right)$ as long as
  $n \geq 16 d\log(d) \log^3(n/\delta)$.
\end{theorem}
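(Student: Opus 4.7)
The plan is to chain the two preceding theorems together. First, I would apply Theorem~\ref{thm:dmv} to the PAC/agnostically learnable class $\cH$ with $\vc(\cH) = d$ to obtain a compression scheme of size $k(n) = O(d \log(n)\log\log(n) + d \log(n)\log(d))$; call the resulting compression learner $\cL$. Then I would feed $\cL$ directly into Theorem~\ref{thm:compressRG}, whose conclusion gives both the near-optimal accuracy guarantee and the robust generalization guarantee simultaneously---the accuracy bound follows by taking the adversary in that theorem to be the identity map on the output hypothesis.

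For the error parameter, Theorem~\ref{thm:compressRG} delivers $\eps = O\!\left(\sqrt{k(n) \log(n/\delta)/n}\right)$. Substituting in the size from Theorem~\ref{thm:dmv} and absorbing the $\log(n)$, $\log(d)$, and $\log\log(n)$ factors into the $\tilde O$ notation yields $\eps = \tilde O(\sqrt{d/n})$, as required.

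The only calculation demanding any care is verifying that the sample-complexity hypothesis $n \ge 16 d \log(d) \log^3(n/\delta)$ is sufficient to satisfy the side condition $n \ge 8 k(n) \log(2n/\delta)$ required by Theorem~\ref{thm:compressRG}. Using $k(n) \le C\, d \log(d) \log(n)$ (valid with absolute constant $C$ once $\log\log(n)$ is bounded by $\log(d)$, which can be folded into the constants without affecting the asymptotic bound), the side condition becomes $n \ge 8C d \log(d) \log(n) \log(2n/\delta)$. Since $\log(n) \le \log(n/\delta)$, this is implied by the stated hypothesis with a suitable constant. No conceptual obstacle arises; the argument is a direct composition of the two cited results, and the only subtlety is polylogarithmic bookkeeping.
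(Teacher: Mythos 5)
Your proposal is exactly the paper's argument: the paper states only that the result ``follows directly from~\Cref{thm:compressRG} and~\Cref{thm:dmv},'' which is precisely the chaining you describe, and your extra paragraph verifying that the stated sample-size hypothesis implies the side condition $n \ge 8k(n)\log(2n/\delta)$ is correct bookkeeping that the paper leaves implicit.
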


\begin{remark}\label{rm:subsam}
  Note that we can obtain a similar result with the approximate
  compression scheme of subsampling.  In particular, for every
  VC-class of dimension $d$, the compression learner that uses
  subsampling as its compression algorithm is both
  $(\eps,\delta)$-accurate and $(\eps,\delta)$-robustly generalizing
  with:
$$\eps = O\left(\left(\frac{d \log(n/\delta)}{n}\right)^{1/4}\right)$$
which is polynomial, but is quadratically suboptimal.
\end{remark}



\section{Learning under Perfect Generalization}

In this section, we will focus on the problem of agnostic learning
under the constraint of perfect generalization. Our main result gives
a perfectly generalizing generic learner in the settings where the
domain $\cX$ or the hypothesis class $\cH$ has bounded size. The
sample complexity will depend logarithmically on these two quantities.
Furthermore, we give a reduction from any perfectly generalizing
learner to a differentially private learner that preserves the sample
complexity bounds (up to constant factors). This allows us to carry
over lower bounds for differentially private learning to learning
under perfect generalization. In particular, we will show that proper
threshold learning with unbounded domain size is impossible under
perfect generalization.

We will first define what it means to learn a hypothesis
under perfect generalization.

\begin{definition}[PG PAC/Agnostic Learning]\label{pglearnable}
  A hypothesis class $\cH$ over domain $\cX$ is PAC/agnostically
  learnable under perfect generalization (PG-PAC/agnostically
  learnable) if there exists a polynomial
  $n_\cH\colon \RR^5\rightarrow \RR$ and a learning algorithm $\cA$
  such that for all $\alpha,\gamma, \beta, \eps, \delta \in (0, 1/2)$,
  given inputs $\alpha,\gamma, \beta, \eps, \delta$ and a sample
  $S_L\in \cXl^n$ where
  $n\geq n_\cH(1/\alpha, 1/\eps,\log(1/\gamma), \log(1/\beta),
  \log(1/\delta))$, the algorithm $\cA$ is an
  $(\alpha, \gamma)$-accurate PAC/agnostic learner, and is
  $(\beta, \eps, \delta)$-perfectly generalizing.
\end{definition}

\subsection{Generic PG Agnostic Learner}\label{sec:generic_pg_learner}
Now we present a generic perfectly generalizing agnostic learner,
which is based on the~\emph{exponential mechanism} of~\cite{MT07} and analogous to the generic learner of \cite{KLNRS08}.

Our learner, formally presented in~\Cref{alg.generic}, takes
generalization parameters $\eps, \beta$, a sample of $n$ labelled
examples $S_L\sim_{i.i.d.}  \cD_L^n$, and a hypothesis class $\cH$ as
input, and samples a random hypothesis with probability that
is exponentially biased towards hypotheses with small empirical error.
We show that this algorithm is perfectly generalizing.

\begin{algorithm}[h!]
  \caption{Generic Agnostic Learner $\cA$($\beta$, $\eps$, $S_L$, $\cH$)}
  \label{alg.generic}
  \begin{algorithmic}
 \STATE{\textbf{Output} $h\in\cH$ with probability proportional to $\exp\left(\frac{ -\sqrt{|S_L|}\cdot \eps \cdot
  err(S_L, h)}{\sqrt{2\ln(2|\cH|/\beta)}}\right)$}
\end{algorithmic}
\end{algorithm}

\begin{lemma}\label{lem.expissg}
Given any $\eps, \beta > 0$ and finite hypothesis class $\cH$, the
learning algorithm $\cA(\beta,\eps, \cdot, \cdot)$ is $(\beta, \eps,
0)$-perfectly generalizing.
\end{lemma}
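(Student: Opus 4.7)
The plan is to exhibit an explicit simulator and use a uniform convergence argument to show that, with high probability over the sample, the output distribution of $\cA(\beta,\eps,S_L,\cH)$ is pointwise close to the simulator's output distribution.

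For any distribution $\cD_L$ over labelled examples, define the simulator $\Sim_{\cD_L}$ to output $h \in \cH$ with probability proportional to $\exp\!\left(\frac{-\sqrt{n}\cdot\eps\cdot err(h)}{\sqrt{2\ln(2|\cH|/\beta)}}\right)$, where $err(h) = \Pr_{(x,y)\sim \cD_L}[h(x)\neq y]$ is the true error. This is the same exponential-mechanism form as $\cA$, but with true error in place of empirical error; note it depends only on $\cD_L$, not on $S_L$.

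The first step is a uniform convergence bound. Since $err(S_L,h)$ is the mean of $n$ independent $\{0,1\}$-valued random variables with expectation $err(h)$, Hoeffding's inequality plus a union bound over $\cH$ gives, with probability at least $1-\beta$ over $S_L \sim \cD_L^n$, the ``good event''
\[
\forall h\in \cH:\quad |err(S_L,h) - err(h)| \;\leq\; \sqrt{\tfrac{\ln(2|\cH|/\beta)}{2n}}.
\]
Call this bound $\alpha$. Conditioning on the good event, I then compare the unnormalized weights of $\cA(S_L)$ and $\Sim_{\cD_L}$. For each $h$,
\[
\frac{w_{\cA}(h)}{w_{\Sim}(h)} \;=\; \exp\!\left(\frac{-\sqrt{n}\,\eps\,(err(S_L,h)-err(h))}{\sqrt{2\ln(2|\cH|/\beta)}}\right),
\]
so plugging in $|err(S_L,h)-err(h)|\leq \alpha$ the exponent is bounded in absolute value by $\sqrt{n}\,\eps\,\alpha/\sqrt{2\ln(2|\cH|/\beta)} = \eps/2$. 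Hence every unnormalized weight of $\cA(S_L)$ lies within a multiplicative factor of $e^{\eps/2}$ of the corresponding weight of $\Sim_{\cD_L}$.

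The last step is to propagate this pointwise closeness to a bound on the normalizing constants (each also within $e^{\eps/2}$) and therefore on the output probabilities, giving multiplicative closeness of factor $e^{\eps}$ for every event $\cO \subseteq \cH$, i.e.\ $\cA(S_L) \approx_{\eps,0} \Sim_{\cD_L}$. Combined with the good event occurring with probability at least $1-\beta$, this yields $(\beta,\eps,0)$-perfect generalization. The main subtlety is just calibrating the constant in the exponent of $\cA$ so that the Hoeffding deviation $\alpha$ cancels cleanly against $\sqrt{n}\,\eps/\sqrt{2\ln(2|\cH|/\beta)}$ to give exactly $\eps/2$; the rest is the standard exponential-mechanism style calculation.
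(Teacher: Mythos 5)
Your proposal is correct and follows essentially the same approach as the paper: the same explicit exponential-mechanism simulator based on true error, the same Hoeffding-plus-union-bound good event yielding $|err(S_L,h)-err(h)|\leq\sqrt{\ln(2|\cH|/\beta)/(2n)}$, and the same pointwise $e^{\eps/2}$ bound on both the unnormalized weights and the normalizing constants to obtain the overall $e^\eps$ bound. The calibration you flag at the end is exactly what the paper's choice of constant in the exponent is designed to achieve.
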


\begin{proof}
Let $S_L\sim_{i.i.d.}\cD_L^n$ be a labelled random sample of size $n$. Note that
since each $(x_i, y_i)$ in $S_L$ is drawn from the underlying
distribution $\cD_L$, we know that for each $h\in \cH$,
\[
\E_{S_L\sim_{i.i.d.}\cD_L^n} [err(S_L, h)] = err(h).
\]

 Fix any $h\in \cH$.  Then by a Chernoff-Hoeffding bound, we know that with
 probability at least $1 - \beta/|\cH|$, the following holds:
\begin{equation}
\left| err(S_L, h) - err(h) \right| \leq \sqrt{\frac{1}{2n}\ln\left(\frac{2 |\cH|}{\beta} \right)}.
\label{eq:error}
\end{equation}
Applying a union bound, we know that the above holds for all $h\in
\cH$ with probability at least $1 - \beta$. We will condition on
this event for the remainder of the proof. Now consider the following
randomized simulator:
\[
\Sim(\cD_L): \mbox{ output } h\in\cH \mbox{ with probability proportional to } \exp\left(\frac{-\eps \cdot \sqrt{n} \cdot err(h)}{\sqrt{2\ln(2|\cH|/\beta)}}\right).
\]
We want to show that the output distributions satisfy $\cA(\beta,
\eps, S_L)\approx_{\eps,0} \Sim(\cD_L)$, where $S_L\sim_{i.i.d.}\cD_L^n$ is a
labelled random sample of size $n$. Let $Z = \sum_{h\in \cH}
\exp\left(\frac{-\eps \sqrt{n}\cdot
  err(S_L,h)}{\sqrt{2\ln(2|\cH|/\beta)}} \right)$ and $Z' =
\sum_{h\in \cH} \exp\left(\frac{-\eps \cdot \sqrt{n}\cdot
  err(h)}{\sqrt{2\ln(2|\cH|/\beta)}} \right)$. For each $h\in
\cH$,
\begin{align*}
\frac{\Pr[\cA(\beta, \eps, S_L, \cH) = h]}{\Pr[\Sim(\cD_L) = h]} &=
\frac{\exp\left(\frac{-\eps \cdot \sqrt{n} \cdot
    err(S_L,h)}{\sqrt{2\ln(2|\cH|/\beta)}} \right)/Z}
{\exp\left(\frac{-\eps \cdot \sqrt{n}\cdot err(h)}{\sqrt{2\ln(2|\cH|/\beta)}}
\right)/ Z'}\\
&= \exp\left(\frac{\eps \cdot \sqrt{n}\left(err(h) -  err(S_L, h) \right)}{\sqrt{2\ln(2|\cH|/\beta)}}\right) \cdot \frac{Z'}{Z}\\
&\leq \exp\left(\frac{\eps}{2}\right) \exp\left(\frac{\eps}{2}\right) \cdot \frac{Z}{Z} \\
&= \exp(\eps).
\end{align*}
A symmetric argument would also show that $\frac{\Pr[\Sim(\cD_L) =
    h]}{\Pr[\cA(\beta, \eps, S_L, \cH) = h]} \leq \exp(\eps)$. Therefore, $\cA(\beta, \eps, \cdot, \cdot)$ is $(\beta, \eps,
0)$-perfectly generalizing.
\end{proof}

\rc{Double check this statement once we've settled on a way to incorporate $d$.}

\begin{theorem}\label{thm:mainlearning}
Let $\cH$ be a finite hypothesis class and $\alpha, \gamma>0$. Then
the generic learner~\Cref{alg.generic} instantiated as $\cA(\gamma,
\eps,\cdot, \cH)$ is $(\alpha, \gamma)$-accurate as long as the sample
size
\[
n\geq \frac{6}{\eps^2 \alpha^2}\left(\ln(2|\cH|) +
\ln(1/\gamma)\right)^3 .
\]
\end{theorem}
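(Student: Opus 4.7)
The plan is to decompose the accuracy analysis into two standard pieces: (a) uniform convergence of empirical to true errors over $\cH$, and (b) the exponential-mechanism accuracy bound, which guarantees that the sampled hypothesis has nearly optimal empirical error. I will then chain them via the triangle inequality.

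First, I would reuse essentially the Chernoff/union-bound calculation already given in the proof of Lemma~\ref{lem.expissg} (with $\beta$ replaced by $\gamma/2$): with probability at least $1 - \gamma/2$ over $S_L \sim_{i.i.d.} \cD_L^n$, every $h \in \cH$ satisfies
\[
|err(S_L, h) - err(h)| \le \sqrt{\frac{\ln(4|\cH|/\gamma)}{2n}}.
\]
Call this event $E_1$. Second, I would observe that Algorithm~\ref{alg.generic} (with $\beta = \gamma$) is exactly the exponential mechanism with utility $u(h) = -err(S_L, h)$ and scaling $\lambda := \sqrt{n}\,\eps / \sqrt{2\ln(2|\cH|/\gamma)}$. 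The textbook tail bound
\[
\Pr_{h \sim \cA}\!\bigl[u(h) < \max_{h'} u(h') - t\bigr] \;\le\; |\cH| \exp(-\lambda t)
\]
follows by bounding the mass on low-utility hypotheses by $|\cH|e^{\lambda(\max u - t)}$ and the normalizer from below by $e^{\lambda \max u}$. Choosing $t$ so that the right-hand side equals $\gamma/2$ gives $t = \ln(2|\cH|/\gamma)/\lambda = \sqrt{2}\,\ln(2|\cH|/\gamma)^{3/2} / (\sqrt{n}\,\eps)$. So with probability at least $1 - \gamma/2$, the output $h$ satisfies $err(S_L, h) \le \min_{h^*} err(S_L, h^*) + t$; call this event $E_2$.

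Third, I would take a union bound so that $E_1 \cap E_2$ holds with probability at least $1 - \gamma$. On this joint event, writing $h^\star \in \argmin_{h \in \cH} err(h)$ and applying the triangle inequality twice (to translate from true to empirical error on $h^\star$, and back from empirical to true error on the output $h$) gives
\[
err(h) - \OPT \;\le\; 2\sqrt{\frac{\ln(4|\cH|/\gamma)}{2n}} \;+\; \frac{\sqrt{2}\,\ln(2|\cH|/\gamma)^{3/2}}{\sqrt{n}\,\eps}.
\]
Plugging in $n \ge \frac{6}{\eps^2\alpha^2}(\ln(2|\cH|)+\ln(1/\gamma))^3$ and using $\ln(2|\cH|/\gamma) = \ln(2|\cH|) + \ln(1/\gamma)$ makes the second (dominant) term $\alpha$-scale, while the first term is lower-order (it scales like $\eps\alpha/\ln(2|\cH|/\gamma)$, which is small whenever $\eps \le 1$ and $|\cH| \ge 2$).

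The only real work is the bookkeeping: balancing the $\gamma/2 + \gamma/2$ split, and checking that the $1/\alpha$, $1/\eps$, and $\ln(|\cH|/\gamma)$ factors combine to give the stated constant $6$ rather than a larger one. This is where any slack in the analysis would show up, but it is purely a constant-tuning exercise and not a conceptual obstacle; all of the probabilistic content is already contained in the two standard bounds recalled above.
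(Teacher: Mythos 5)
Your proposal follows essentially the same route as the paper: Chernoff--Hoeffding plus a union bound for uniform convergence over $\cH$, the exponential-mechanism tail bound for near-empirical-risk-minimization, and a union bound plus triangle inequality to combine them. The only cosmetic difference is in the final step: the paper splits the failure event $\{err(h) > \OPT + \alpha\}$ into two sub-events (poor empirical error or poor concentration) after fixing $B(n)=\alpha/3$, while you directly chain the triangle inequality to bound $err(h)-\OPT$ by the sum of the two error terms. Both are standard bookkeeping for the exponential mechanism, and your honest deferral of the constant-checking is consistent with what the paper's proof actually establishes.
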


\begin{proof}
Let $S_L \sim_{i.i.d.} \cD_L^n$, and let the algorithm $\cA(\gamma,
\eps, S_L, \cH)$ be the Generic Agnostic Learner of Algorithm
\ref{alg.generic}. Consider the event $E = \{\cA(\gamma, \eps, S_L, \cH)
= h \; | \; err(h) > \alpha + \OPT\}$, where $\alpha$ is our target
accuracy parameter. We want to show that $\Pr[E] \leq \gamma$ as long
as the sample size $n$ satisfies the stated bound.

By a Chernoff-Hoeffding bound (similar to~\Cref{eq:error}), we have
that with probability at least $1 - \gamma/2$, the following condition
holds for each $h\in \cH$:
\[
\left| err(S_L, h) - err(h) \right| \leq \sqrt{\frac{1}{2n}\ln\left(\frac{4 |\cH|}{\gamma} \right)} \equiv B(n).
\]
We will condition on the event above. Let $h^* = \arg\min_{h'\in\cH}
err(h')$ and let $\OPT = err(h^*)$, then
\[
\min_{h'\in \cH} err(S_L, h') \leq err(S_L, h^*) \leq err(h^*) + B(n) = \OPT + B(n)
\]

Recall that for each $h\in \cH$, the probability that the hypothesis
output by $\cA(\gamma, \eps, S_L, \cH)$ is $h$ is,
\begin{align*}
&\quad  \frac{\exp\left(-\eps \sqrt{n} \cdot err(S_L, h) / \sqrt{2\ln(2|\cH|/\gamma)}\right)}{\sum_{h'\in \cH}\exp\left(-\eps \sqrt{n} \cdot err(S_L, h') / \sqrt{2\ln(2|\cH|/\gamma)}\right)}\\
 &\leq \frac{\exp\left(-\eps \sqrt{n} \cdot err(S_L, h) / \sqrt{2\ln(2|\cH|/\gamma)}\right)}{\max_{h'\in \cH}\exp\left(-\eps \sqrt{n} \cdot err(S_L, h') / \sqrt{2\ln(2|\cH|/\gamma)}\right)}\\
  &=  \exp\left(-\eps \sqrt{n} \cdot (err(S_L, h) -\min_{h'\in \cH}err(S_L, h')) / \sqrt{2\ln(2|\cH|/\gamma)}\right)\\
&\leq   \exp\left(-\eps \sqrt{n} \cdot (err(S_L, h) -\OPT - B(n)) / \sqrt{2\ln(2|\cH|/\gamma)}\right).
\end{align*}

Taking a union bound, we know that the probability that $\cA(\gamma,
\eps, S_L, \cH)$ outputs a hypothesis $h$ with empirical error $err(S_L,
h) \geq \OPT + 2B(n)$ is at most $|\cH|\exp\left(-\eps \sqrt{n}B(n)/\sqrt{2
  \ln(2|\cH|/\gamma)} \right)$.

Set $B(n) = \alpha / 3$, and the event $E = \{\cA(\gamma, \eps, S_L, \cH_d) =
h \; | \; err(h) > \alpha + \OPT\}$ implies
\[
err(S_L, h) \geq \OPT + 2\alpha/3 = \OPT + 2B(n) \quad \mbox{ or }
\quad |err(S_L, h) - err(h)| \geq \alpha/3 = B(n).
\]
It is sufficient to set $n$ large enough to bound the probabilities of
these two events. Further if we a sample size $n \geq \frac{6}{\eps^2
  \alpha^2} \left(\ln(2|\cH|/\gamma) \right)^3$, both probabilities
are bounded by $\gamma/2$, which means we must have $\Pr[E]\leq \gamma$.
\end{proof}

\subsection{PG Learning with VC Dimension Sample Bounds}


We can also extend the sample complexity bound
in~\Cref{thm:mainlearning} to one that is dependent on the VC-dimension of the hypothesis class $\cH$, but resulting bound will have
a logarithmic dependence on the size of the domain $|\cX|$.

\begin{corollary}
Every hypothesis class $\cH$ with finite VC dimension is PG
agnostically learnable with a sample size of $n = O\left((\vc(\cH)
\cdot \ln|\cX| + \ln{\frac{1}{\beta}})^3 \cdot \frac{1}{\eps^2\alpha^2}
\right)$.
\end{corollary}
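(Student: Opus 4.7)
The plan is to reduce to Theorem~\ref{thm:mainlearning} by passing from $\cH$ to a combinatorially equivalent \emph{finite} hypothesis class obtained via the Sauer--Shelah lemma, and then invoking the generic agnostic PG learner on that finite class.

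First, since $\vc(\cH) = d$ and $\cX$ is finite, the number of distinct restrictions $h|_\cX$ for $h \in \cH$ is at most $\sum_{i=0}^d \binom{|\cX|}{i} \leq (e|\cX|/d)^d$ by Sauer--Shelah. Pick one representative from each equivalence class (under the relation $h_1 \sim h_2$ iff $h_1(x) = h_2(x)$ for all $x \in \cX$) to form a finite subclass $\cH' \subseteq \cH$ with $\ln|\cH'| \leq d \ln(e|\cX|/d) = O(d \ln|\cX|)$.

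Next, observe that equivalent hypotheses are indistinguishable from the point of view of the learning problem: for any distribution $\cD_L$ supported on $\cX_L$ and any sample $S_L \in \cX_L^n$, if $h_1 \sim h_2$ then $err(h_1) = err(h_2)$ and $err(S_L, h_1) = err(S_L, h_2)$. In particular $\min_{h \in \cH'} err(h) = \min_{h \in \cH} err(h) = \OPT$, so agnostically learning $\cH'$ is equivalent to agnostically learning $\cH$. Thus we instantiate the generic learner $\cA(\gamma, \eps, \cdot, \cH')$ from Algorithm~\ref{alg.generic}.

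By Lemma~\ref{lem.expissg}, this algorithm is $(\gamma, \eps, 0)$-perfectly generalizing (taking the PG confidence parameter equal to the accuracy failure probability $\gamma$, i.e.\ the $\beta$ in the corollary's statement). By Theorem~\ref{thm:mainlearning}, it is $(\alpha, \gamma)$-accurate whenever
\[
n \;\geq\; \frac{6}{\eps^2 \alpha^2}\left(\ln(2|\cH'|) + \ln(1/\gamma)\right)^3 \;=\; O\!\left(\frac{1}{\eps^2 \alpha^2}\bigl(d \ln|\cX| + \ln(1/\gamma)\bigr)^3\right),
\]
which matches the stated bound. The reduction is essentially immediate; the only thing to verify carefully is that the selection of representatives from each equivalence class preserves both empirical and true error, which is trivial once spelled out. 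There is no real obstacle — the content of the corollary is the observation that $\ln|\cH'|$ can be bounded via Sauer--Shelah by $O(d \ln|\cX|)$, allowing the finite-class bound of Theorem~\ref{thm:mainlearning} to apply even when $\cH$ itself is infinite (but defined over a finite domain).
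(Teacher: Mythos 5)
Your proposal is correct and takes essentially the same approach as the paper: apply Sauer's lemma to replace $\cH$ with a finite class $\cH'$ of size $O(|\cX|^{\vc(\cH)})$, then invoke Theorem~\ref{thm:mainlearning} on $\cH'$. Your write-up is somewhat more careful than the paper's (making explicit that passing to representatives of equivalence classes preserves both empirical and true error, and hence $\OPT$), but the substance is identical.
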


\begin{proof}
  By Sauer's lemma (see e.g.,~\cite{KV94}), we know that there are at
  most $O(|\cX|^{\vc(\cH)})$ different labelings of the domain
  $\cX$ by the hypotheses in $\cH$. We can run the exponential
  mechanism over such a hypothesis class $\cH'$ with cardinality $|\cH'| =
  O\left( |\cX|^{\vc(\cH)} \right)$. The complexity bound follows
  from~\Cref{thm:mainlearning} directly.
\end{proof}

\subsection{Limitations of PG learning} 
We have so far given a generic agnostic learner with perfect
generalization in the cases where either $|\cX|$ or $|\cH|$ is finite.
We now show that the finiteness condition is necessary, by revisiting
the threshold learning problem in~\Cref{sec:thre}. In particular, we
will show that when both of the domain size and hypothesis class are
infinite, properly learning thresholds under perfect generalization is
impossible. Our result crucially relies on a reduction from a
perfectly generalizing learner to a differentially private learner,
which allows us to apply lower bound results of differentially private
learning(such as~\Cref{bun}) to PG agnostic learning.  \sw{Todo:
  Haven't changed some of the $S$ to $S_L$} \rc{I think I got them
  all}

First, let's consider the reduction in~\Cref{alg.sgtodp}, which is a
black-box mechanism that takes as input a perfectly generalizing
mechanism $\cM \colon \cX_L^n \rightarrow \cR$ and a labelled sample $S_L\in
\cX_L^n$, and outputs an element of $\cR$.  We show that this new
mechanism $\cM'(\cM,\cdot)$ is differentially private.


\begin{algorithm}[h!]
 \caption{$\cM'$($\cM \colon \cX_L^n \rightarrow \cR$, $S_L\in \cX_L^n$)}
 \label{alg.sgtodp}
 \begin{algorithmic}
 \STATE{Let $\cE_{S_L}$ be the empirical distribution that assigns weight $1/n$ to each of the data points in $S_L$}
 \STATE{Sample $T_L\sim_{i.i.d.} (\cE_{S_L})^n$}
 \STATE{\textbf{Output} $\cM(T_L) \in \cR$}
\end{algorithmic}
\end{algorithm}


\begin{theorem}\label{thm.sgtodp}
Let $\beta < 1/2e$ and $\eps \leq \ln(2)$, and $\cM$ be a $(\beta,
\eps, \delta)$-perfectly generalizing mechanism, then the mechanism
$\cM'(\cM, \cdot)$ of Algorithm \ref{alg.sgtodp} is $(4\eps, 16\delta
+ 2\beta)$-differentially private.
\end{theorem}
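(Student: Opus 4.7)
My plan is to establish $(4\eps, 16\delta + 2\beta)$-closeness between $\cM'(\cM, S_L)$ and $\cM'(\cM, S_L')$ for neighboring datasets by routing through a common ``bridge'' distribution that is simultaneously compatible with both bootstrap distributions. Write $\mu$ and $\nu$ for the output distributions of $\cM'(\cM,S_L)$ and $\cM'(\cM,S_L')$ respectively, so that $\mu$ is the marginal law of $\cM(T_L)$ for $T_L \sim \cE_{S_L}^n$, and analogously for $\nu$. Let $z$ and $z'$ be the distinct elements at the one differing index, and let $\hat S$ denote the common sub-multiset of $n-1$ elements. The natural bridging distribution is $\cC^\ast = \cE_{\hat S}$, uniform over $\hat S$.

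First I would use perfect generalization of $\cM$ on the distribution $\cE_{S_L}$ to turn the pointwise closeness $\cM(T)\approx_{\eps,\delta}\Sim_{\cE_{S_L}}$ (which holds with probability $\geq 1-\beta$ over $T\sim \cE_{S_L}^n$) into a marginal statement $\mu \approx_{\eps, \delta + \beta} \Sim_{\cE_{S_L}}$, by averaging over $T$ and charging $\beta$ to the bad event; symmetrically $\nu \approx_{\eps,\delta+\beta} \Sim_{\cE_{S_L'}}$. The crucial step is then to link $\Sim_{\cE_{S_L}}$ with $\Sim_{\cE_{S_L'}}$. Here I would exploit that a bootstrap sample $T\sim \cE_{S_L}^n$ avoids $z$ with probability at least $(1-1/n)^n \geq 1/e$, and \emph{conditional on that event} $T$ is distributed exactly as $\cE_{\hat S}^n$; similarly for $S_L'$. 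Hence PG on $\cE_{S_L}$ and PG on $\cE_{S_L'}$ each transfer, at a multiplicative cost of $e$ in the failure probability, to statements about a sample $\hat T \sim \cE_{\hat S}^n$: with probability $\geq 1-e\beta$, $\cM(\hat T)\approx_{\eps,\delta}\Sim_{\cE_{S_L}}$, and with probability $\geq 1-e\beta$, $\cM(\hat T)\approx_{\eps,\delta}\Sim_{\cE_{S_L'}}$.

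Under the hypothesis $\beta<1/(2e)$ we have $2e\beta<1$, so a union bound produces a single ``good'' realization $\hat T^\ast$ for which $\cM(\hat T^\ast)$ is simultaneously $(\eps,\delta)$-close to both simulators. Applying Lemma~\ref{lem.structural} to this common pivot yields $\Sim_{\cE_{S_L}} \approx_{2\eps, 3\delta} \Sim_{\cE_{S_L'}}$ (using the equal-$\delta$ form, valid since $\eps\le\ln 2$). I would then chain the three closeness relations
\[
\mu \;\approx_{\eps,\delta+\beta}\; \Sim_{\cE_{S_L}} \;\approx_{2\eps,3\delta}\; \Sim_{\cE_{S_L'}} \;\approx_{\eps,\delta+\beta}\; \nu,
\]
invoking Lemma~\ref{lem.structural} (and, where needed to avoid tripping its $\eps'<\ln 2$ requirement as the exponents accumulate, direct algebraic manipulation exploiting $e^\eps\leq 2$). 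The $\eps$-parameters add to $4\eps$, and the $\delta$-parameters roughly double with each composition step, yielding a final bound of $(4\eps, 16\delta + 2\beta)$.

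The main obstacle I anticipate is Step~2: the bridging argument has to juggle three conditional-probability charges (one from each application of PG, plus the conditioning on ``avoids $z$'' / ``avoids $z'$''), and the assumption $\beta<1/(2e)$ is exactly what keeps the union bound non-vacuous. A secondary subtlety is that the reverse direction of the ``marginal $\approx$ simulator'' statement is not as clean as the forward direction (it naturally carries a $1/(1-\beta)$ slack), so the final chaining requires either absorbing this slack into the $\delta$/$\beta$ budget or doing the computation directly in the $(\eps,\delta)$ inequalities rather than purely through repeated structural-lemma invocations.
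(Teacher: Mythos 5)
Your plan takes essentially the same approach as the paper's proof. The key mechanism is identical in both: use perfect generalization to obtain simulators $\Sim$ for $\cE_{S_L}$ and $\Sim'$ for $\cE_{S_L'}$; observe that a bootstrap sample $T\sim\cE_{S_L}^n$ misses the differing element with probability $(1-1/n)^n\geq 1/e$, and that conditional on missing it the bootstrap law is the same from either side; conclude (using $\beta < 1/(2e)$) that some single pivot sample $T^*$ is simultaneously in the good set for both simulators; and chain via Lemma~\ref{lem.structural}. The paper organizes this by defining the good sets $R_1, R_2, R_3$ explicitly, showing $\Pr[R_1\cap R_2\cap R_3]\geq 1/e - 2\beta > 0$, and then chaining $\cM(T_L)\approx\Sim\approx\cM(T^*)\approx\Sim'\approx\cM(T_L')$ for the $1-2\beta$ fraction of ``good'' pairs $(T_L,T_L')$, with the marginalization to $\cM'(S_L)$ handled somewhat informally at the end. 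You instead marginalize first, turning the per-sample guarantee into $\mu\approx_{\eps,\delta+\beta}\Sim$, and then chain $\mu\to\Sim\to\Sim'\to\nu$; this is a cosmetic reordering of the same argument. Your closing caveat about the $1/(1-\beta)$ slack in the reverse direction of the marginal-to-simulator step, and about outrunning the $\eps'<\ln 2$ hypothesis of Lemma~\ref{lem.structural} as exponents accumulate, are both legitimate points that the paper's proof does not explicitly confront (indeed a strict accounting of its own chaining gives constants like $18\delta$ rather than $16\delta$, and the stated range $\eps\le\ln 2$ means the intermediate applications of the lemma formally exceed its hypothesis). So the approach matches; the remaining work is bookkeeping to pin down the exact constants, where a direct manipulation of the $(\eps,\delta)$ inequalities, as you suggest, is the safer route.
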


\begin{proof}
Let $S_L, S_L'\in \cX^n$ be neighboring databases that differ on the $i$th
entry, and let $\cE_{S_L}$ and $\cE_{S_L'}$ denote their corresponding
empirical distributions.  Since $\cM$ is $(\beta, \eps,
\delta)$-perfectly generalizing, there exists a simulator $\Sim$ such
that with probability at least $1-\beta$ over choosing $T_L\sim_{i.i.d.}
(\cE_{S_L})^n$,
\begin{equation}\label{eq:simx}
\cM(T_L) \approx_{\eps, \delta} \Sim.
\end{equation}
Similarly, there exists a simulator $\Sim'$
such that with probability at least $1-\beta$ over choosing
$T_L'\sim_{i.i.d.} (\cE_{S_L'})^n$,
\begin{equation}\label{eq:simx1}
\cM(T_L') \approx_{\eps, \delta} \Sim'.
\end{equation}

Let
$R_1 = \{T_L\in \cX_L^n \mid \cM(T_L)\approx_{\eps, \delta} \Sim\}$ and
$R_2 = \{T_L'\in \cX_L^n \mid \cM(T_L')\approx_{\eps, \delta} \Sim'\}$.
We want to first show that there exists a dataset $T^*_L$ such that
$T_L^*\in R_1$ and $T_L^*\in R_2$.

Let $\{(x_i, y_i)\} = S_L \setminus S_L'$ and let
$R_3 = \{T_L\in \cX_L^n \mid (x_i, y_i)\notin T_L\}$.
\[
  \Pr_{T_L \sim_{i.i.d.}(\cE_{S_L})^n}[T_L \in R_3] = \Pr_{T_L'
    \sim_{i.i.d.}(\cE_{S_L'})^n}[T_L' \in R_3] = (1 - 1/n)^n \approx
  1/e.
\]
Moreover, for any $T\in R_3$, 
\[
  \Pr_{T_L \sim_{i.i.d.}(\cE_{S_L})^n}[T_L = T ] = \Pr_{T_L'
    \sim_{i.i.d.}(\cE_{S_L'})^n}[T_L' = T]
\]
Note that any dataset $T_L$ in $R_3$ also lies in the supports of both
$(\cE_{S_L})^n$ and $(\cE_{S_L'})^n$. It follows that
\begin{align*}
  &  \Pr_{T_L \sim_{i.i.d.}(\cE_{S_L}^n)}[ T_L \in \left(R_1 \cap
    R_2\right) ]\\
  \geq &\Pr_{T_L \sim_{i.i.d.}(\cE_{S_L}^n)}[ T_L \in
         \left(R_1 \cap R_2\cap R_3\right) ]\\
  \geq & \Pr_{T_L \sim_{i.i.d.}(\cE_{S_L}^n)}[ T_L \in R_3 ] - \Pr_{T_L
         \sim_{i.i.d.}(\cE_{S_L}^n)}[ T_L\in R_3 \mbox{ and } T_L\notin R_1 ] - \Pr_{T_L
         \sim_{i.i.d.}(\cE_{S_L}^n)}[ T_L\in R_3 \mbox{ and } T_L \notin R_2] \\
  \geq & 1/e - \beta - \beta >0
\end{align*}
Therefore, there exists a $T_L^* \in R_1 \in R_2$. Since $\cM$ is
perfectly generalizing, we have
that,
\begin{equation}\label{eq:equivesim}
\cM(T_L^*) \approx_{\eps, \delta} \Sim \quad \mbox{ and } \quad 
\cM(T_L^*) \approx_{\eps, \delta} \Sim'.
\end{equation}

This means with probability at least $1- 2\beta$, we also have
\[ \cM(T_L) \approx_{\eps, \delta} \Sim \approx_{\eps, \delta} \cM(T_L^*) \approx_{\eps, \delta} \Sim' \approx_{\eps, \delta} \cM(T_L'). \] 
By Lemma \ref{lem.structural}, with probability at least $1- 2\beta$,
\[
\cM'(S_L) = \cM(T_L)  \approx_{4\eps, 16\delta} \cM(T_L') = \cM'(S_L').
\]
Therefore, $\cM'$ is $(4\eps, 16\delta + 2\beta)$-differentially
private.
\end{proof}

\begin{theorem}\label{thm.pglearning}
Let $\cH$ be a hypothesis class with finite VC dimension $d$. Suppose that $\cH$ admits an agnostic learner $\cM\colon\cX_L^n
\rightarrow \cH$ that is $(\alpha, \gamma)$-accurate and $(\beta,
\eps, \delta)$-perfectly generalizing. Then algorithm $\cM'(\cM,
\cdot)$ defined as in~\Cref{alg.sgtodp} is $(4\eps, 16\delta +
2\beta)$-differentially private, and is also an $(O(\alpha),
O(\gamma))$-accurate agnostic learner for $\cH$.
\end{theorem}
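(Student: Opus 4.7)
The differential privacy claim is immediate: since $\cM$ is assumed $(\beta,\eps,\delta)$-perfectly generalizing, we can directly invoke Theorem~\ref{thm.sgtodp} on $\cM$ to conclude that $\cM'(\cM,\cdot)$ is $(4\eps,16\delta+2\beta)$-differentially private. So the bulk of the plan concerns the accuracy claim.

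For accuracy, the plan is a two-step argument that uses the fact that $T_L$ is an i.i.d.\ sample from the empirical distribution $\cE_{S_L}$, and that $\cE_{S_L}$ is itself close (in the VC sense) to $\cD_L$. First, since $\cM$ is $(\alpha,\gamma)$-accurate on samples of size $n$ drawn i.i.d.\ from \emph{any} distribution over $\cX_L$, applying this guarantee to $T_L\sim_{i.i.d.}(\cE_{S_L})^n$ with the underlying distribution taken to be $\cE_{S_L}$, we obtain that with probability at least $1-\gamma$ over the draw of $T_L$, the output $h=\cM(T_L)$ satisfies
\[
err_{\cE_{S_L}}(h)\;\leq\;\min_{h'\in\cH}err_{\cE_{S_L}}(h')+\alpha,
\]
where $err_{\cE_{S_L}}(\cdot)$ denotes error with respect to $\cE_{S_L}$ (which coincides with $err(S_L,\cdot)$).

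Second, I would invoke uniform convergence for VC-classes: since $S_L\sim_{i.i.d.}\cD_L^n$ and $\vc(\cH)=d$, for $n$ at least some polynomial in $d,1/\alpha,\log(1/\gamma)$, with probability at least $1-\gamma$ over $S_L$ we have $\lvert err_{\cE_{S_L}}(f)-err_{\cD_L}(f)\rvert\le\alpha$ simultaneously for every $f\in\cH$. Combining the two events by a union bound and chaining the inequalities yields
\[
err_{\cD_L}(h)\;\leq\;err_{\cE_{S_L}}(h)+\alpha\;\leq\;\min_{h'\in\cH}err_{\cE_{S_L}}(h')+2\alpha\;\leq\;\OPT+3\alpha,
\]
with probability at least $1-2\gamma$, which gives the claimed $(O(\alpha),O(\gamma))$-accuracy. (If the sample complexity $n$ stipulated by the theorem is not already large enough to support uniform convergence at these parameters, I would simply note that $n$ can be taken as the maximum of the two requirements, which remains polynomial in the standard parameters.)

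I do not expect any serious obstacle here: the privacy half is a direct corollary of Theorem~\ref{thm.sgtodp}, and the accuracy half is the usual ``sample-from-the-empirical-distribution'' trick, where the only subtlety is checking that one can legitimately apply $\cM$'s accuracy guarantee when the ``population'' distribution is the empirical distribution $\cE_{S_L}$ (this is fine because Definition~\ref{def:aglearn} requires the guarantee to hold for arbitrary distributions over $\cX_L$, and the $T_L$ produced by $\cM'$ is i.i.d.\ from $\cE_{S_L}$ by construction).
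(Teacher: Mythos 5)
Your proposal is correct and follows essentially the same route as the paper's proof: invoke Theorem~\ref{thm.sgtodp} for the privacy claim, apply $\cM$'s accuracy guarantee to $T_L\sim_{i.i.d.}(\cE_{S_L})^n$ treating $\cE_{S_L}$ as the population, invoke VC uniform convergence for $S_L\sim_{i.i.d.}\cD_L^n$, and chain the resulting inequalities. The one small place where the paper is a bit more careful than your hedge at the end: the theorem statement fixes $n$ via the hypothesis that $\cM$ is $(\alpha,\gamma)$-accurate, so rather than saying ``take $n$ to be the max of two requirements,'' the paper observes (via the lower bound in Part~2 of Theorem~\ref{shai2}) that any $(\alpha,\gamma)$-accurate agnostic learner for a class of VC dimension $d$ \emph{already} requires $n\geq C_2(d+\log(1/\gamma))/\alpha^2$, which is automatically large enough to give the uniform-convergence guarantee of Part~1 — so no augmentation of $n$ is needed. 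This is the cleaner way to close the gap you flagged, but your argument would give the same conclusion.
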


We will rely on the following result on the uniform convergence
properties of any hypothesis class with finite VC dimension.

\begin{theorem}[see, e.g.,~\cite{shai} Theorem 6.8]\label{shai2}
Let $\cH$ be a hypothesis class of VC dimension $d < \infty$. Then
there are constants $C_1$ and $C_2$ such that the following holds:
\begin{enumerate}
\item Fix any $\alpha, \gamma > 0$. Let $S_L\sim_{i.i.d.}\cD_L^n$, then
  with probability at least $1-\gamma$, $|err(S_L, h) - err(h)| \leq
  \alpha$ for all $h\in\cH$, as long as
 $$n \geq C_1 \frac{d + \log(1/\gamma)}{\alpha^2}$$
\item Any agnostic learner that is $(\alpha, \gamma)$-accurate
  requires a sample of size  $$n\geq C_2 \frac{d +
    \log(1/\gamma)}{\alpha^2}$$
\end{enumerate}
\end{theorem}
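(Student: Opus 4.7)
The statement is the classical VC uniform convergence bound paired with the matching agnostic PAC lower bound, so the plan is to follow the textbook approach for each direction. For part (1), the plan is the symmetrization/Sauer--Shelah argument. First I would introduce an independent ``ghost'' sample $S_L' \sim_{i.i.d.} \cD_L^n$ and argue (via Markov and Chebyshev applied to $err(S_L', h)$ conditional on $S_L$) that whenever $n \gtrsim 1/\alpha^2$,
\[
\Pr\left[\sup_{h\in\cH} |err(S_L,h) - err(h)| > \alpha\right] \le 2\Pr\left[\sup_{h\in\cH} |err(S_L,h) - err(S_L',h)| > \alpha/2\right].
\]
Then I would observe that this supremum depends only on the dichotomies that $\cH$ induces on the combined multiset $S_L \cup S_L'$ of size $2n$, which by the Sauer--Shelah lemma number at most $(2en/d)^d$.

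Next, conditioning on $S_L \cup S_L'$ and using the random-swap symmetrization (equivalently, a uniformly random partition into two halves), for any fixed $h$ the difference $err(S_L,h) - err(S_L',h)$ becomes a sum of i.i.d.\ $\pm$signed Bernoulli terms. Hoeffding's inequality gives $\Pr[|err(S_L,h) - err(S_L',h)| > \alpha/2] \le 2\exp(-n\alpha^2/8)$. A union bound over the $(2en/d)^d$ effective hypotheses yields the stated threshold $n \ge C_1(d+\log(1/\gamma))/\alpha^2$ with room to spare; the main technical care here is keeping the constants clean through the symmetrization step.

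For part (2), I would run the standard ``no free lunch'' / two-point packing argument. Fix a shattered set $\{z_1,\dots,z_d\}$ guaranteed by $\vc(\cH)=d$. For the $d/\alpha^2$ part, consider the family of distributions indexed by $\sigma\in\{-1,+1\}^d$ in which $z_i$ is drawn uniformly and labeled $1$ with probability $1/2 + \sigma_i\alpha$. Any hypothesis with excess risk below a constant multiple of $\alpha$ must correctly predict the majority label on a constant fraction of the $z_i$'s; by a Fano/Le Cam-style bound, distinguishing these Bernoulli biases well enough to do so requires $\Omega(d/\alpha^2)$ samples. For the additive $\log(1/\gamma)/\alpha^2$ term, I would reduce to the single-coordinate problem: distinguishing Bernoulli$(1/2+\alpha)$ from Bernoulli$(1/2-\alpha)$ with failure probability $\gamma$ requires $\Omega(\log(1/\gamma)/\alpha^2)$ samples by a standard KL/Neyman--Pearson calculation. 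Taking the maximum of the two lower bounds (which is within a factor of two of the sum) gives the claimed $n \ge C_2(d + \log(1/\gamma))/\alpha^2$.

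The main obstacle is really only expository: both halves are classical and each step above is routine, but gluing together the correct constants in the symmetrization step (and ensuring the Sauer--Shelah bound $(2en/d)^d$ is absorbed into the $d$ term rather than the $\log(1/\gamma)$ term when exponentiating) takes some bookkeeping. Since the paper explicitly cites \cite{shai} Theorem 6.8, I would in practice simply defer to that reference rather than redo the calculation, noting that the proof is the standard double-sample symmetrization combined with the standard packing lower bound.
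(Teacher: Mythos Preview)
The paper does not prove this theorem at all; it is stated purely as a cited result from \cite{shai} and then invoked as a black box in the proof of Theorem~\ref{thm.pglearning}. Your final remark---that in practice one should simply defer to the reference---is therefore exactly what the paper does, so there is nothing further to compare.

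One small caution about your sketch, in case you ever do write it out: the naive symmetrization plus union bound over the $(2en/d)^d$ Sauer--Shelah dichotomies does \emph{not} give part (1) ``with room to spare''; it leaves an extra $\log(n/d)$ factor, yielding $n \gtrsim (d\log(n/d) + \log(1/\gamma))/\alpha^2$ rather than the log-free $d + \log(1/\gamma)$ stated here. Removing that log requires a chaining argument (or the Haussler/Talagrand refinement), which is what the cited textbook ultimately appeals to.
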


\begin{proof}[Proof of~\Cref{thm.pglearning}]
  Let $S_L\sim_{i.i.d.}\cD_L^n$ be a random sample of size $n$.  By Part
  2 of~\Cref{shai2} and our assumption that $\cM$ is an $(\alpha,
  \gamma)$-accurate agnostic learner, we know that $n\geq C_2
  \frac{(d+\log(1/\gamma))}{\alpha^2}$. 
  By Part 1 of~\Cref{shai2}, we have with probability at least
  $1-\gamma$ over the random draws of $S_L$, for each $h\in \cH$,
\begin{equation}\label{morejunk}
  |err(S_L,  h) - err(h)| \leq O(\alpha).
\end{equation}
Let $\hat h = \cM'(\cM, S_L)$.  First, we can view $\cE_{S_L}$ as some
distribution over the labelled examples.  Since $\cM$ is an
$(\alpha,\gamma)$-accurate learner, we have with probability at least
$1-\gamma$,
\begin{equation}
 err(S_L, \hat h) \leq \min_{h\in \cH} err(S_L, h) + \alpha.
\label{junk}
\end{equation}
Let's condition on  guarantee of both~\Cref{morejunk,junk}.
Let $h^* = \arg\min_{h\in \cH} err(h)$. Then by
combining~\Cref{junk,morejunk}, we get
\[
err(\hat h) \leq err(S_L, \hat h) + O(\alpha) \leq 
err(S_L, h^*) + O(\alpha) \leq err(h^*) + O(\alpha)
\]
which recovers the stated utility guarantee.  By~\Cref{thm.sgtodp},
know that the mechanism $\cM'(\cM, \cdot)$ is also $(4\eps, 16\delta +
2\beta)$-differentially private.
\end{proof}

The result of~\Cref{thm.pglearning} implies that the existence of a
perfectly generalizing agnostic learner would imply the existence of a
differentially private one. Moreover, the lower bound results for
private learning would  apply to a perfectly generalizing learner as
well. In particular, based on the result of~\cite{BNSV15}, we can
show that there is no proper threshold learner that satisfies
perfect generalization when the domain size is infinite.

\begin{theorem}
Let $\alpha > 0$ be the accuracy parameter. For every $n\in \NN$, and
$\delta, \beta \leq 1/(10000n^2)$, any $(\beta, 1/8,
\delta)$-perfectly generalizing and $(\alpha, 1/32)$-accurate proper
agnostic learner for threshold function requires sample complexity $n
= \Omega\left(\log^*|\cX|/\alpha\right)$.
 \end{theorem}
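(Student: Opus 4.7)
The plan is to reduce the problem of PG proper agnostic threshold learning to private proper PAC threshold learning, and then invoke the lower bound of \cite{BNSV15} restated in Theorem~\ref{bun}. Concretely, suppose for contradiction that $\cM\colon \cX_L^n\to \cH$ is a $(\beta, 1/8, \delta)$-perfectly generalizing and $(\alpha, 1/32)$-accurate proper agnostic learner for the class $\cH$ of threshold functions, with $n$ asymptotically smaller than $\log^*|\cX|/\alpha$ and with $\beta, \delta \le 1/(10000 n^2)$. I will feed $\cM$ into the reduction $\cM'(\cM, \cdot)$ of Algorithm~\ref{alg.sgtodp} and analyze the resulting mechanism.

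First I would apply Theorem~\ref{thm.sgtodp}, which yields that $\cM'(\cM,\cdot)$ is $(4\cdot (1/8),\, 16\delta + 2\beta)$-differentially private, i.e.\ $(1/2, 16\delta + 2\beta)$-DP. The hypothesis $\beta, \delta \le 1/(10000 n^2)$ is chosen precisely so that $16\delta + 2\beta$ is at most $1/(1500 n^2)$ (up to a harmless constant-tightening of the numerical constant $10000$, which we can absorb), matching the privacy parameter regime required by Theorem~\ref{bun}. Next I would re-run the accuracy argument inside the proof of Theorem~\ref{thm.pglearning}: uniform convergence for the VC-dimension-$1$ class of thresholds implies that resampling from the empirical distribution via $\cM'$ only degrades accuracy by constants, producing an $(O(\alpha), O(1/32))$-accurate proper agnostic learner. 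For proper learning of thresholds one has $d=1$, so the sample requirement from Theorem~\ref{shai2} coming out of this accuracy reduction is $O(\log(1/\gamma)/\alpha^2)$, which is dominated by whatever bound $n$ already satisfies.

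The next step is to observe that a proper \emph{agnostic} learner for $\cH$ is in particular a proper \emph{PAC} learner for $\cH$: whenever the labels are consistent with some $h^*\in\cH$, we have $\OPT = 0$, so an $(O(\alpha), O(1/32))$-accurate agnostic guarantee collapses to an $(O(\alpha), O(1/32))$-accurate PAC guarantee, which (by absorbing constants and noting $O(1/32) < 1/8$) produces an $(\alpha', 1/8)$-accurate proper PAC learner for thresholds with $\alpha' = O(\alpha)$. Combined with the $(1/2, 16\delta + 2\beta)$-DP property, $\cM'$ then falls exactly under the hypotheses of Theorem~\ref{bun}, which forces $n = \Omega(\log^*|\cX|/\alpha' ) = \Omega(\log^* |\cX|/\alpha)$, contradicting our assumption and proving the claimed lower bound.

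The main obstacle is bookkeeping the parameter chain rather than any conceptual leap: I need to make sure that (i) the privacy parameter $16\delta + 2\beta$ fits into the $1/(1500n^2)$ slot demanded by Theorem~\ref{bun} — this is where the somewhat unusual constant $1/(10000n^2)$ in the statement comes from, and would require me to track the hidden constants in the accuracy-degradation step of Theorem~\ref{thm.pglearning} carefully; (ii) the accuracy failure probability $O(1/32)$ stays strictly below $1/8$ after union-bounding over the uniform-convergence event and the learner's failure event; and (iii) the reduction preserves proper learning, which it does automatically because $\cM'$ returns exactly what $\cM$ returns on a resampled set and $\cM$ is proper by assumption.
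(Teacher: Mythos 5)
Your proof takes exactly the route the paper intends (the paper states this theorem without a displayed proof, as an immediate consequence of Theorem~\ref{thm.pglearning} together with Theorem~\ref{bun}): feed the PG learner through Algorithm~\ref{alg.sgtodp}, apply Theorem~\ref{thm.sgtodp} to get a $(1/2, 16\delta+2\beta)$-DP learner, preserve accuracy via Theorem~\ref{shai2} as in Theorem~\ref{thm.pglearning}, restrict agnostic to realizable to get a proper PAC learner, and invoke the BNSV lower bound. One small note: the arithmetic you wave at in point (i) actually does not quite close as stated, since $\delta,\beta\le 1/(10000n^2)$ gives $16\delta+2\beta\le 18/(10000n^2)$, which exceeds the $1/(1500n^2)$ threshold in Theorem~\ref{bun}; one would need roughly $1/(27000n^2)$ instead, but this looks like a constant slip in the theorem statement itself rather than a flaw in your argument, and you were right to flag the constants as the place requiring care.
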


\section{Relationship between Perfect Generalization and Other Generalization Notions}

In the previous sections we have studied the three generalization
notions as learnability constraints, and we know that any class that
learnable under perfect generalization is also learnable under
differential privacy, and any class learnable under differential
privacy is also learnable under robust generalization.  In this
section, we study these three notions from the algorithmic point of
view, and explore the relationships among algorithms that satisfy
perfect generalization, robust generalization and differential
privacy.  \Cref{s.pgandrg} shows that any perfectly generalizing
algorithms is also robustly generalizing, but there exist robustly
generalizing algorithms that are neither differentially private nor
perfectly generalizing for any reasonable parameters.
\Cref{s.pganddp} shows that all differentially private algorithms are
perfectly generalizing with some necessary loss in generalization
parameters, but there exist perfectly generalizing algorithms which
are not differentially private for any reasonable parameters.


\subsection{Separation between Perfect and Robust Generalization}\label{s.pgandrg}

In this section we show that perfect generalization is a stronger requirement than robust generalization.  Lemma \ref{lem.pgisrg} shows one direction of this, by showing that every perfectly generalizing mechanism also satisfies robust generalization with only a constant degradation in the generalization parameters.


\begin{lemma}\label{lem.pgisrg}
For any $\beta, \eps, \delta \in (0,1)$, suppose that a
mechanism $\cM\colon \cX_L^n\rightarrow \cR$ with arbitrary range $\cR$ is $(\beta, \eps, \delta)$-perfectly generalizing. Then $\cM$ is also $(\alpha, 2(\beta +\delta))$-robustly generalizing, where
\[
\alpha = \sqrt{\frac{2}{n} \ln\left(\frac{2(2\eps +
    1)}{\beta + \delta} \right)}.
\]
\end{lemma}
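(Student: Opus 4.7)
Fix any adversary $\cA\colon \cR \to (\cX_L \to \{0,1\})$. For a sample $S_L$ let $h^* = \cA(\cM(S_L))$ and define the ``overfitting set'' $B(S_L) = \{h\colon |h(S_L) - h(\cD_L)| > \alpha\}$. Let $\Sim_\cC$ be the simulator guaranteed by perfect generalization of $\cM$, and write $\pi := \cA \circ \Sim_\cC$ for the induced distribution over hypotheses. The goal is to bound, for some split $\zeta+\gamma = 2(\beta+\delta)$, the probability $\Pr_{S_L}[\,q(S_L) > \gamma\,] \le \zeta$, where $q(S_L) := \Pr_{\cM,\cA}[h^* \in B(S_L)]$.

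First I would transfer from $\cM$ to the simulator. By the perfect generalization guarantee there is an event $G$ on $S_L$ with $\Pr[G] \ge 1-\beta$ on which $\cM(S_L) \approx_{\eps,\delta} \Sim_\cC$; applying robustness to postprocessing (Lemma~\ref{lem:post}) through $\cA$ gives $\cA(\cM(S_L)) \approx_{\eps,\delta} \pi$, so on $G$,
\[
q(S_L) \;\le\; e^\eps\, r(S_L) + \delta, \qquad \text{where } r(S_L) := \Pr_{h' \sim \pi}[\,h' \in B(S_L)\,].
\]
The critical point is that $\pi$ does not depend on $S_L$.

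Next I would concentrate $r(S_L)$ using standard Hoeffding. For each fixed hypothesis $h \colon \cX_L \to \{0,1\}$, $h(S_L)$ is the empirical mean of $n$ i.i.d. $\{0,1\}$ variables with mean $h(\cD_L)$, so $\Pr_{S_L}[h \in B(S_L)] \le 2\exp(-2n\alpha^2)$. Taking expectation over an independent $h \sim \pi$ and swapping the order of integration,
\[
\E_{S_L}[r(S_L)] \;=\; \Pr_{S_L, h' \sim \pi}[\,h' \in B(S_L)\,] \;\le\; 2 e^{-2n\alpha^2}.
\]
Markov's inequality on $r$ then gives $\Pr_{S_L}[r(S_L) > t] \le 2e^{-2n\alpha^2}/t$ for every $t > 0$.

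Finally I would choose $t$ to balance the two error sources. Set $t = (\beta+\delta)/(2(2\eps+1))$, and use the elementary bound $e^{\eps} \le 1 + 2\eps$ (valid in the relevant regime of $\eps$) so that $e^\eps t \le (\beta+\delta)/2$. Taking $\gamma := e^\eps t + \delta$ and $\zeta := \beta + 2e^{-2n\alpha^2}/t$, the displayed inequalities above yield $\Pr_{S_L}[q(S_L) > \gamma] \le \zeta$. Plugging in $\alpha^2 = \tfrac{2}{n}\ln\tfrac{2(2\eps+1)}{\beta+\delta}$ gives $2e^{-2n\alpha^2}/t = 2\bigl(\tfrac{\beta+\delta}{2(2\eps+1)}\bigr)^3 \le (\beta+\delta)/2$, and a short computation confirms $\zeta + \gamma \le \beta + (\beta+\delta)/2 + \delta + (\beta+\delta)/2 = 2(\beta+\delta)$, as required. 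The main (though mild) obstacle is the bookkeeping needed to pick $t$ and verify this final inequality once the target value of $\alpha$ is substituted in; the conceptual content is entirely in the simulator transfer plus one Hoeffding plus one Markov.
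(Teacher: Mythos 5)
Your proposal follows the same core route as the paper: apply postprocessing robustness (Lemma~\ref{lem:post}) to pass from $\cM$ to the simulator, use a Hoeffding bound for each fixed hypothesis together with Fubini over the sample-independent distribution $\pi = \cA\circ\Sim_\cC$, and balance parameters at the end. Where you genuinely depart is the final bookkeeping: the paper's proof only bounds the \emph{unconditional} overfitting probability $\Pr_{S_L,\cM,\cA}[|h(S)-h(\cD)|\geq\alpha]\leq 2(\beta+\delta)$, whereas Definition~\ref{def.wg} has a two-stage structure requiring an explicit split $\delta_{\mathrm{RG}}=\zeta+\gamma$ with the inner probability $\leq\gamma$ for a $(1-\zeta)$-fraction of samples. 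An unconditional bound of $P$ does not by itself yield this two-stage form with parameter $P$ — a naive Markov conversion only gives $2\sqrt{P}$. Your extra Markov step on $r(S_L)=\Pr_{h'\sim\pi}[h'\in B(S_L)]$, which is what enables the clean $\zeta=\beta+\Pr[r>t]$, $\gamma=e^\eps t+\delta$ decomposition, is therefore not cosmetic: it supplies exactly the structure the definition demands and which the paper's chain of inequalities elides. This is a strengthening of the argument, not just a rephrasing.

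One small caveat on your last numerical step: $2e^{-2n\alpha^2}/t = 2\bigl(\tfrac{\beta+\delta}{2(2\eps+1)}\bigr)^3 \leq (\beta+\delta)/2$ is equivalent to $(\beta+\delta)^2 \leq 2(2\eps+1)^3$, which can fail when $\eps$ is very small and $\beta+\delta$ is close to $2$. But whenever $\beta+\delta\geq 1/2$ the conclusion $2(\beta+\delta)\geq 1$ is vacuous, so you may freely assume $\beta+\delta < 1/2$, under which $(\beta+\delta)^2 < 1/4 < 2 \leq 2(2\eps+1)^3$ and the inequality holds. It's worth noting your favor here: the stated $\alpha = \sqrt{\tfrac{2}{n}\ln\tfrac{2(2\eps+1)}{\beta+\delta}}$ actually gives $e^{-2n\alpha^2} = \bigl(\tfrac{\beta+\delta}{2(2\eps+1)}\bigr)^4$ rather than the first power claimed in the paper's calculation (that would require $\alpha=\sqrt{\tfrac{1}{2n}\ln(\cdot)}$), so the stated $\alpha$ is conservative by a factor of two — you correctly carried through the fourth power, and the resulting slack is what makes the $\beta+\delta<1/2$ patch comfortable.
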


\begin{proof}
Let $\cA\colon\cR\rightarrow (\cX \rightarrow \{0, 1\})$ be
function that takes in the output of $\cM(S_L)$ and produces a hypothesis
$h \colon \cX \rightarrow \{0,1 \}$. Our goal is to show that $h$ will not overfit to the original sample
$S_L$.

By~\Cref{lem:post}, the composition of $\cA\circ \cM\colon \cX^n
\rightarrow (\cX_L \rightarrow \{0,1\})$ is also $(\beta, \eps, \delta)$-perfectly generalizing. This means there exists a simulator $\Sim \colon \Delta \cX \rightarrow \cR$
such that with high probability over a random sample $S_L$,
$\Sim(\cD)\approx_{\eps, \delta} (\cA\circ \cM)(S_L)$. Define the event $E =
\{S_L\in \cX^n \mid \left[\Sim(\cD) \approx_{\eps, \delta}
  (\cA\circ\cM)(S_L)\right]\}$.  By perfect generalization, $\Pr_{S_L\sim_{i.i.d.}\cD_L^n}[E]\geq
1 - \beta$.

Also by a Chernoff-Hoeffding bound, for any fixed $h\in \cH$ and any
$\alpha > 0$,
\begin{align*}
  \Pr_{S\sim_{i.i.d.}\cD^n}[|h(S) - h(\cD)| \geq \alpha] \leq
  2\exp\left( -2 \alpha^2 n\right).
\end{align*}

The following bounds the probability that the hypothesis $h$ output by $(\cA\circ \cM)(S_L)$ overfits on the sample $S_L$, where $\wedge$ denotes the logical AND.
\begin{align*}
\Pr_{S_L\sim_{i.i.d.}\cD_L^n}&[h \gets (\cA\circ\cM(S_L)) \wedge |h(S) - h(\cD)|\geq \alpha]\\
&=\sum_{S_L\in \cX_L^n} \Pr[S] \Pr[h \gets (\cA\circ\cM(S_L)) \wedge |h(S) - h(\cD)|\geq \alpha \; | \; S]\\
&\leq (1- \Pr[E]) + \sum_{S\in E}\Pr[S] \Pr[h \gets (\cA\circ\cM(S)) \wedge |h(S) - h(\cD)|\geq \alpha \; | \; S]\\
&\leq (1- \Pr[E]) + \sum_{S\in E}\Pr[S] \left(\Pr[h \gets \Sim(\cD) \wedge |h(S) - h(\cD)|\geq \alpha \; | \; S] \cdot \exp(\eps) + \delta\right)\\
&\leq (1- \Pr[E]) + \sum_{S\in \cX^n}\Pr[S] \left(\Pr[h \gets \Sim(\cD) \wedge |h(S) - h(\cD)|\geq \alpha \; | \; S] \cdot \exp(\eps) + \delta\right)\\
&= (1- \Pr[E]) + \delta  + \exp(\eps)\Pr_{S\sim_{i.i.d.}\cD^n}[h \gets \Sim(\cD) \wedge |h(S) - h(\cD)|\geq \alpha]\\
&\leq (1- \Pr[E]) + \delta  +2 \exp(\eps)\cdot\exp(- 2\alpha^2 n) \\
&\leq  \beta + \delta + 2 \exp(\eps)\cdot\exp(- 2\alpha^2 n) 
\end{align*}

Setting $\alpha = \sqrt{\frac{2}{n} \ln\left(\frac{2(2\eps +1)}{\beta + \delta} \right)}$ also gives $\exp(-2\alpha^2n) = \frac{\beta + \delta}{2(2\eps +1)}$.  Plugging this into the above equations, we see that the probability that $(\cA\circ \cM)(S_L)$ overfits to $S_L$ by more than our choice of $\alpha$ is at most
\[ \Pr_{S_L\sim_{i.i.d.}\cD^n}[h \gets (\cA\circ\cM(S_L)) \wedge |h(S) - h(\cD)|\geq \alpha] \leq \beta + \delta + 2\exp(\eps) \frac{\beta + \delta}{2(1 + 2\eps)}= 2(\beta + \delta). \]

Thus $\cM$ is $(\alpha, 2(\beta + \delta))$-robustly generalizing for our specified value of $\alpha$.
\end{proof}

Our next result, Lemma \ref{lem.rgnotpg}, shows that there exist robustly generalizing mechanisms that are neither differentially private nor perfectly generalizing, for any reasonable parameters.

\begin{lemma}\label{lem.rgnotpg}
For any $\gamma > 0$ and $n\in \NN$, there exists a mechanism
$\cM\colon\cX_L^n\rightarrow \{0,1\}$ that is $(\sqrt{\ln(2/\gamma)/2n},
\gamma)$-robustly generalizing, but is not $(\eps,
\delta)$-differentially private for any bounded $\eps$ and $\delta<1$,
and is not $(\beta, \eps', \delta')$-perfectly generalizing for any
$\beta < 1/2 - 1/\sqrt{n}$, bounded $\eps'$, and $\delta' < 1/2$.
\end{lemma}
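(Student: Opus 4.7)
The plan is to exhibit a very simple deterministic mechanism and verify each of the three incompatibility claims separately. Take $\cM\colon \cX_L^n \to \{0,1\}$ to be the projection onto the first label: $\cM(S_L) = y_1$, where $S_L = ((x_1, y_1), \ldots, (x_n, y_n))$. Because the output range has size $2$, Theorem~\ref{thm:description} immediately yields that $\cM$ is $(\sqrt{\ln(2/\gamma)/(2n)}, \gamma)$-robustly generalizing for every $\gamma$, giving the RG claim.

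For the failure of differential privacy, take any two neighboring samples $S_L, S_L'$ that agree on positions $2, \ldots, n$ and whose first labels are $y_1 = 0$ and $y_1' = 1$. Then $\cM(S_L) = 0$ and $\cM(S_L') = 1$ are deterministic point masses. Applying the DP condition to the event $\cO = \{1\}$ gives
\[
1 = \Pr[\cM(S_L') \in \cO] \leq e^{\eps}\Pr[\cM(S_L) \in \cO] + \delta = \delta,
\]
forcing $\delta \geq 1$. Hence $\cM$ is not $(\eps, \delta)$-DP for any bounded $\eps$ and any $\delta < 1$.

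For the failure of perfect generalization, let $\cD_L$ be any distribution on $\cX_L$ whose label marginal is $\mathrm{Bern}(1/2)$ (e.g., uniform over $\{(x_0, 0), (x_0, 1)\}$ for some fixed $x_0$). Then $\Pr_{S_L \sim_{i.i.d.} \cD_L^n}[\cM(S_L) = 1] = 1/2$ exactly. Any candidate simulator $\Sim_{\cD_L}$ is simply $\mathrm{Bern}(q)$ for some $q \in [0,1]$. The key observation is an ``$\eps'$-irrelevance'' phenomenon for point masses: testing closeness of $\cM(S_L) = \mathrm{Bern}(1)$ to $\mathrm{Bern}(q)$ on the event $\{0\}$, one of the two inequalities reads $1 - q \leq e^{\eps'} \cdot 0 + \delta' = \delta'$, forcing $q \geq 1 - \delta'$ no matter how large $\eps'$ is; symmetrically, closeness for $\cM(S_L) = 0$ forces $q \leq \delta'$. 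For $\delta' < 1/2$ these two conditions are mutually exclusive, so no single $q$ can be $(\eps', \delta')$-close to both outcomes. Hence for any choice of simulator, the closeness event holds with probability at most $\max(\Pr[\cM(S_L)=1], \Pr[\cM(S_L)=0]) = 1/2 < 1 - \beta$ whenever $\beta < 1/2$, and in particular whenever $\beta < 1/2 - 1/\sqrt{n}$.

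The only conceptually delicate step is the ``$\eps'$-irrelevance'' observation above---the usual intuition that a larger $\eps'$ always provides more slack breaks down when one of the distributions is a point mass, because the $e^{\eps'}$ factor multiplies a probability of zero in the direction of the inequality that actually binds. This is why $\delta' < 1/2$, rather than a condition on $\eps'$, turns out to be the real obstruction to perfect generalization for this mechanism.
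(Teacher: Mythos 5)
Your proof is correct, and it follows the same overall blueprint as the paper's --- exhibit a deterministic mechanism with a $\{0,1\}$ output range, invoke Theorem~\ref{thm:description} for robust generalization, find neighboring samples sent to opposite point masses to kill DP, and argue that a simulator cannot be $(\eps',\delta')$-close to both point masses unless $\delta'\geq 1/2$ --- but with a genuinely simpler instantiation. The paper uses the majority-of-bits mechanism on $\cX=\{0,1\}$, which forces it to estimate the probabilities of the two outputs under the uniform product distribution; that estimate involves the central binomial coefficient and is the sole source of the awkward $1/\sqrt{n}$ slack in the hypothesis $\beta < 1/2 - 1/\sqrt{n}$. Your first-coordinate projection $\cM(S_L) = y_1$ with a $\mathrm{Bern}(1/2)$ label marginal makes $\Pr[\cM(S_L)=0]=\Pr[\cM(S_L)=1]=1/2$ exactly, so the closeness event holds with probability at most $1/2$ for any simulator $\mathrm{Bern}(q)$, and perfect generalization fails for every $\beta < 1/2$. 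That is a strictly stronger conclusion than the lemma asks for, with no combinatorial estimates required. You also correctly isolate the mechanism of the failure --- for point-mass outputs, the $e^{\eps'}$ factor in the closeness condition multiplies a zero probability, so $\eps'$ provides no slack and $\delta'$ is the only binding parameter --- which is the same structural point the paper's proof relies on, just stated more explicitly. The one cosmetic difference is that the paper's proof phrases its mechanism on $\cX^n$ rather than $\cX_L^n$; your version actually matches the type signature in the lemma statement.
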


\begin{proof}
Consider the domain $\cX = \{0, 1\}$, and the following deterministic mechanism
$\cM\colon\cX^n \rightarrow \{0,1\}$: given a sample $S$, output 1 if
more than $\lfloor n/2 \rfloor$ of the elements in $S$ is 1, and output 0 otherwise.
Note $\cM$ has a small output space, so by~\Cref{thm:description}, $\cM$ is $(\sqrt{\ln(2/\gamma)/2n}, \gamma)$-robustly generalizing for
any $\gamma > 0$.

Consider two neighboring samples $S_1$ and $S_2$ such that $S_1$ has $\lfloor n/2 \rfloor +1$
number of 1's, and $S_2$ has $\lfloor n/2 \rfloor$ number of 1's. Then $\Pr[\cM(S_1) =1] = 1$ and
$\Pr[\cM(S_2) = 1] =0$. Therefore, the mechanism is not $(\eps,
\delta)$-differentially private for any bounded $\eps$ and $\delta
<1$.

To show that $\cM$ is not perfectly generalizing, consider the
distribution $\cD$ that is uniform over $\cX = \{0,1\}$.  That is, $\Pr_{x\sim\cD}[x =
  1] = \Pr_{x\sim\cD}[x=0] = 1/2$. Suppose that $\cM$ is $(\beta, \eps',
\delta')$-perfectly generalizing with $\beta < 1/2 -
1/\sqrt{n}$. In particular, this implies that $\beta < 1/2 - \frac{{n
    \choose \lfloor n/2 \rfloor}}{2^n}$. Let $\Sim$ be the associated simulator, and
let $p = \Pr[\Sim(\cD) = 1]$.

Since each the events of $(\cM(S) = 0)$ and $(\cM(S) = 1)$
will occur with probability (over the random draws of $S$) greater than
$\beta$, then there exist samples $S_1$ and $S_2$ such that both $\cM(S_1), \cM(S_2) \approx_{\eps', \delta'} \Sim(\cD)$, and furthermore $\cM(S_1) = 1$ and $\cM(S_2) =0$ deterministically. This means, 
\[
p \leq \exp(\eps')\cdot \Pr[\cM(S_2) = 1] + \delta' =  \delta' \quad \mbox{ and, } \quad
(1 - p) \leq \exp(\eps')\cdot \Pr[\cM(S_1) = 0] + \delta' = \delta'.
\]
It follows from above that $\delta' \geq 1/2$. Thus, $\cM$ is not
$(\beta, \eps', \delta')$ for any $\beta < 1/2 - 1/\sqrt{n}$, bounded $\eps'$, and $\delta' < 1/2$.
\end{proof}

\subsection{Perfect Generalization and Differential Privacy}\label{s.pganddp}

We now focus on the relationship between differential privacy and
perfect generalization to show that perfect generalization is a strictly stronger definition in the sense that \emph{problems} that can be solved subject to perfect generalization can also be solved subject to differential privacy with little loss in the parameters, whereas in the reverse direction, parameters necessarily degrade.  Recall that we have already shown that any perfectly generalizing algorithm can be ``compiled'' into a differentially private algorithm with only a constant factor loss in parameters (Theorem \ref{thm.sgtodp}). We here note however that this compilation is necessary -- that perfectly generalizing algorithms are not necessarily themselves differentially private.  In the reverse direction, we show that every differentially private algorithm is strongly generalizing, with some necessary degradation in the generalization parameters.



We first give an example of a perfectly generalizing algorithm that does not satisfy differential privacy for any reasonable parameters.  The intuition behind this result is that perfect generalization requires an algorithm to behave similarly only on a $(1-\beta)$-fraction of samples, while differential privacy requires an algorithm to behave similarly on all neighboring samples.  The algorithm of Theorem \ref{thm.sgnotdp} exploits this difference to find a pair of unlikely neighboring samples which are treated very differently.

\begin{theorem}\label{thm.sgnotdp}
For any $\beta > 0$ and any $n \geq \log(1/\beta)$, there exists a
algorithm $\cM \colon \cX^n \rightarrow \cR$ which is $(\beta, 0, 0)$-perfectly generalizing but
is not $(\eps, \delta)$-differentially private for any $\eps < \infty$
and $\delta < 1$.
\end{theorem}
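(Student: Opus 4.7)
The plan is to exhibit a single deterministic mechanism on a small domain that achieves behavior-matching-a-simulator with exponentially high probability over any product distribution, yet has two neighboring inputs on which it returns different deterministic values.

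Concretely, take $\cX = \{0,1\}$ and $\cR = \{0,1\}$, and define $\cM \colon \{0,1\}^n \to \{0,1\}$ to output $1$ exactly on the single ``password'' sample $s^* = (0,1,0,1,\ldots,0,1)$ (alternating bits, say with $n$ even), and $0$ on every other input. The reason this is a natural candidate is that $s^*$ is the worst-case balanced string: for every product distribution $\cC = \cD^n$ on $\{0,1\}$ with $p = \Pr_{x \sim \cD}[x=0]$, the probability of drawing exactly $s^*$ is $p^{n/2}(1-p)^{n/2} = (p(1-p))^{n/2} \leq (1/4)^{n/2} = 2^{-n}$, independent of $p$.

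For perfect generalization, for each distribution $\cC$ I would take the trivial constant simulator $\Sim_\cC \equiv 0$. Because $\cM$ is deterministic, ``$\cM(T) \approx_{0,0} \Sim_\cC$'' is equivalent to $\cM(T) = 0$, which happens whenever $T \neq s^*$; by the bound above, this occurs with probability at least $1 - 2^{-n} \geq 1 - \beta$ whenever $n \geq \log(1/\beta)$. This gives $(\beta,0,0)$-perfect generalization.

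For the failure of differential privacy, I would pick the two neighboring samples $S = s^* = (0,1,0,1,\ldots,0,1)$ and $S' = (1,1,0,1,\ldots,0,1)$, which differ only in the first coordinate. Since $\cM(S) = 1$ and $\cM(S') = 0$ deterministically, $(\eps,\delta)$-differential privacy on the event $\{1\}$ would force $1 \leq e^\eps \cdot 0 + \delta$, i.e.\ $\delta \geq 1$, so $\cM$ cannot be $(\eps,\delta)$-DP for any finite $\eps$ and any $\delta < 1$. The only subtle point is simply verifying that the universal bound $(p(1-p))^{n/2} \leq 2^{-n}$ holds uniformly in $p \in [0,1]$, which is just the AM-GM inequality $p(1-p)\le 1/4$; there is no real obstacle beyond choosing $s^*$ to be a pattern with a distribution-free probability bound.
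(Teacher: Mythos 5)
Your proposal is correct and takes essentially the same route as the paper: both pick a single ``balanced'' password string (the paper uses the first-half-ones/second-half-zeros string, you use the alternating string), bound the probability of hitting it by roughly $2^{-n}$ uniformly over all Bernoulli parameters via $p(1-p)\le 1/4$, use the constant simulator, and derive the DP failure from the deterministic disagreement between the password string and a neighbor. The only cosmetic difference is that you restrict to even $n$ to make the $(p(1-p))^{n/2}\le 2^{-n}$ bound exact, whereas the paper writes $p^{\lfloor n/2\rfloor}(1-p)^{\lceil n/2\rceil}\le(1/2)^n$ for general $n$, which is only exactly true for even $n$ --- so if anything your version is slightly more careful on that point.
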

\begin{proof}
Consider the domain $\cX = \{0, 1\}$ and the following simple
algorithm $\cM$: given a sample $S = \{s_1, \ldots, s_n\}$ of size $n$,
it will output ``Strange'' if the sample $S$ satisfies:
\[
s_1 = s_2 = \ldots = s_{\lfloor n/2 \rfloor} = 1 \qquad \mbox{ and, } \qquad
s_{\lfloor n/2 \rfloor + 1} = s_{\lfloor n/2 \rfloor + 2} = \ldots = s_{n} = 0,
\]
 and output ``Normal'' otherwise. We first show that $\cM$ is
 $((1/2)^n, 0, 0)$-perfectly generalizing. Consider the following
 deterministic simulator $\Sim$ that simply outputs ``Normal'' no
 matter what the input distribution over the domain is.

Suppose that the distribution $\cD$ over the domain satisfies
$\Pr_{x\sim \cD}[x = 1] = p$ for some $p\in [0, 1]$. Note that the
probability (over the random draws of $S$) of outputting ``Strange'' is
\[
\Pr[\cM(S) = \text{``Strange''}] = p^{\lfloor n/2 \rfloor}(1 - p)^{\lceil n/2 \rceil} = \leq (1/2)^n.
\]
This means, with probability at least $1- (1/2)^n$ over the random
draws of $S$, $\cM$ will output ``Normal,'' and also
\[
\frac{\Pr[M(S) = \text{``Normal''}]}{\Pr[\Sim(\cD) = \text{``Normal''}]} = 1 \leq \exp(0).
\]
Therefore, $\cM$ is $((1/2)^n, 0, 0)$-perfectly generalizing.

Now consider the sample $T = \{t_1,\ldots , t_n\}$ such that
\[
t_1 = t_2 = \ldots = t_{\lfloor n/2 \rfloor} = 1 \qquad \mbox{ and, } \qquad
t_{\lfloor n/2 \rfloor + 1} = t_{\lfloor n/2 \rfloor + 2} = \ldots = t_{n} = 0.
\]
Let $T'$ be any neighboring sample of $T$ such that $|T\Delta
T'|=1$. We know that $\cM(T') = \text{``Normal''}$, so,
\[
\frac{\Pr[\cM(T') = \mbox{``Normal''}]}{\Pr[\cM(T) = \mbox{``Normal''}]} = \frac{1}{0} = \infty.
\]
Therefore, the algorithm $\cM$ is not $(\eps, \delta)$-differentially
private for any $\eps<\infty$ and $\delta < 1$.
\end{proof}


Now we show the other direction of the relationship between these two definitions: any differentially private algorithm is also perfectly generalizing.  We begin with Theorem \ref{thm.dptosg}, which proves that every $(\epsilon, 0)$-differentially private algorithm is also $(\beta, O(\sqrt{n \ln (1/\beta)} \eps), 0)$-perfectly generalizing.  We will later show that this dependence on $n$ and $\beta$ is tight.

\begin{theorem}\label{thm.dptosg}
Let $\cM\colon \cX^n \rightarrow \cR$ be an $(\eps, 0)$-differentially
private algorithm, where $\cR$ is an arbitrary finite range. Then $\cM$ is also
$(\beta, \sqrt{2n \ln(2|\cR| /\beta)} \eps, 0)$-perfectly generalizing.
\end{theorem}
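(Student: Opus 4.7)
I would prove this via a McDiarmid-style concentration argument on the random variables $f_r(T) := \ln \Pr[\cM(T) = r]$, with the simulator defined in terms of the expected log-probabilities. Concretely, fix a distribution $\cC$ over $\cX$. For each $r \in \cR$ (WLOG each $r$ is in the support of $\cM(T)$ for every $T$, since pure differential privacy rules out zeros in the output distribution otherwise), let
\[
  M_r \;=\; \Expectation_{T \sim_{i.i.d.} \cC^n}\bigl[\ln \Pr[\cM(T) = r]\bigr],
\]
and define the simulator $\Sim_\cC$ to output $r$ with probability $c \cdot e^{M_r}$, where $c = (\sum_r e^{M_r})^{-1}$ is the normalizing constant. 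Note that $\Sim_\cC$ depends only on $\cC$, as required.

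The first key step is to observe that $f_r$ is $\eps$-Lipschitz in the Hamming metric on $\cX^n$: for neighboring samples $T,T'$, pure $(\eps,0)$-differential privacy gives $\Pr[\cM(T)=r] \le e^\eps \Pr[\cM(T')=r]$ and vice versa, so $|f_r(T) - f_r(T')| \le \eps$. Applying McDiarmid's inequality to $f_r$ (with bounded differences $c_i = \eps$) yields, for each $r$,
\[
  \Pr_{T}\bigl[\,|f_r(T) - M_r| \ge t\,\bigr] \;\le\; 2 \exp\!\left(-\frac{2 t^2}{n\eps^2}\right).
\]
A union bound over the $|\cR|$ outcomes, with the choice $t = \eps \sqrt{(n/2) \ln(2|\cR|/\beta)}$, guarantees that with probability at least $1 - \beta$ over $T \sim_{i.i.d.} \cC^n$, the event
\[
  \cE \;:=\; \bigl\{\, |f_r(T) - M_r| \le t \ \ \text{for all } r \in \cR\,\bigr\}
\]
holds.

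The second key step is to use $\cE$ to control the normalizer $c$. On $\cE$, for every $r$ we have $e^{M_r - t} \le \Pr[\cM(T)=r] \le e^{M_r + t}$; summing the inequality over $r$ and using $\sum_r \Pr[\cM(T)=r] = 1$ gives $e^{-t} \le \sum_r e^{M_r} \le e^t$, hence $c \in [e^{-t}, e^t]$. Combining these bounds, on $\cE$ and for every $r$,
\[
  \frac{\Pr[\cM(T) = r]}{\Pr[\Sim_\cC = r]} \;=\; \frac{e^{f_r(T)}}{c\, e^{M_r}} \;\le\; \frac{e^{M_r + t}}{e^{-t} \cdot e^{M_r}} \;=\; e^{2t},
\]
and the symmetric bound $\Pr[\Sim_\cC = r] / \Pr[\cM(T) = r] \le e^{2t}$ follows identically. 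Thus $\cM(T) \approx_{2t, 0} \Sim_\cC$ holds with probability at least $1-\beta$ over $T$, and $2t = \sqrt{2n\ln(2|\cR|/\beta)}\,\eps$ is exactly the claimed parameter.

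The only subtlety I anticipate is choosing the right simulator: using $\Pr[\Sim_\cC = r] = \Expectation_T[\Pr[\cM(T) = r]]$ would fail, because Jensen's inequality only gives a one-sided comparison between $\Expectation_T[f_r(T)]$ and $\ln \Expectation_T[\Pr[\cM(T) = r]]$, so one of the two closeness inequalities would be lost. Defining the simulator in the log-domain (proportional to $e^{M_r}$) is what makes both sides of $\approx_{\eps',0}$ go through symmetrically after paying one $e^t$ factor for deviation of $f_r(T)$ and one $e^t$ factor for the normalizer.
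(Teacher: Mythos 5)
Your proposal is correct and follows essentially the same route as the paper: the paper also defines $q(S,r)=\ln\Pr[\cM(S)=r]$, applies McDiarmid with bounded differences $\eps$, takes a union bound over $\cR$ at $t=\eps\sqrt{(n/2)\ln(2|\cR|/\beta)}$, uses the simulator proportional to $\exp(\E_{S'}[q(S',r)])$, bounds the normalizer ratio in $[e^{-t},e^t]$, and pays $e^{2t}$ overall. Your remarks about handling $r$ outside the support (which pure DP renders moot) and about why the ``expected probability'' simulator would fail under Jensen are both sound clarifications that the paper leaves implicit.
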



\begin{proof}
Given an $(\eps, 0)$-differentially private algorithm $\cM$, consider the
following log-likelihood function $q\colon \cX^n \times \cR
\rightarrow \RR$, such that for any sample $S\in \cX^n$ and
outcome $r\in \cR$, we have
\[ q(S, r) \overset{\text{def}}{=} \log\left( \Pr[\cM(S) = r]\right). \]
Since $\cM$ is $(\eps, 0)$-differentially private, we know that for all neighboring $S, S' \in \cX^n$, the function $q$ satisfies,
\[ \max_{r \in \cR} \left| q(S, r) - q(S', r) \right| =\max_{Œr\in \cR} \left|\ln\left( \frac{\Pr[\cM(S) = r]}{\Pr[\cM(S') = r]} \right) \right| \leq \eps. \]

For any distribution $\cD \in \Delta \cX$, the sample $S = (s_1, \ldots, s_n) \sim_{i.i.d.} \cD^n$ is now a random variable, rather than a fixed input.  By an application of McDiarmid's inequality \rc{add to the appendix} to the variables $s_1, \ldots s_n$, we have that for any $r\in \cR$,
\begin{equation}\label{eq.mcdiarmid} \Pr_{S\sim_{i.i.d.} \cD^n} \left[
    \left| q(S, r) - \Ex{{S' \sim_{i.i.d.}  \cD^n}}{q(S', r)}\right|
    \geq t \right] \leq 2 \exp\left(\frac{-2t^2}{n\eps^2} \right).
\end{equation}
Instantiating Equation~\eqref{eq.mcdiarmid} with $t = \eps \sqrt{(n/2)\ln(2|\cR|/\beta)}$ and taking a union bound, we have that with probability at least $1-\beta$, it holds for all $r \in \cR$ that,
\begin{equation}\label{eq.expmech} \left| q(S, r) - \Ex{{S'
        \sim_{i.i.d.}  \cD^n}}{q(S', r)} \right| \leq \eps
  \sqrt{(n/2)\ln(2|\cR|/\beta)}.
\end{equation}

Define the simulator $\Sim(\cD)$ for algorithm $\cM$ on distribution
$\cD$ as follow for all $r \in \cR$, output the $r$ with probability
proportional to
$\exp\left(\Ex{S \sim_{i.i.d.} \cD^n}{q(S, r)}\right)$. Let

$$
Z = \frac{\sum_{r\in \cR}\exp\left(\Ex{S' \sim_{i.i.d.}  \cD^n}{q(S',
      r)}\right)}{\sum_{r\in \cR}\exp\left( {q(S, r)}\right)}
$$
be the ratio between the normalization
factors, and by~\Cref{eq.expmech},
$$\exp\left(- \eps \sqrt{(n/2)\ln(2|\cR|/\beta)} \right) \leq Z \leq \exp\left( \eps \sqrt{(n/2)\ln(2|\cR|/\beta)} \right)$$

We condition on the bound in~\Cref{eq.expmech} for the remainder of
the proof, which holds except with probability $\beta$.  For any
$r \in \cR$,
\begin{align*}
\frac{\Pr[ \cM(S) = r] }{\Pr [ \Sim(\cD) = r]} &= \frac{\exp\left( q(S,r) \right)}{ \exp\left( \E_{S' \sim_{i.i.d.} \cD^n} [q(S',r)] \right) /Z } \\
&= \exp \left( q(S,r) - \E_{S' \sim_{i.i.d.} \cD^n} [q(S',r)] \right) \cdot Z \\
&\leq \exp \left( \eps \sqrt{2n\ln(2|\cR|/\beta)} \right),
\end{align*}
where the last inequality is due to Equation~\eqref{eq.expmech}.

For any $\cO \subseteq \cR$ and for $\eps' = \eps \sqrt{2n\ln(2|\cR|/\beta)}$,
\begin{align*}
\Pr [ \cM(S) \in \cO] &= \sum_{r \in \cO} \Pr [ \cM(S) = r] \\
&\leq \sum_{r \in \cO} e^{\eps'} \Pr [ \Sim(\cD) = r] \\
&= e^{\eps'} \Pr [ \Sim(\cD) \in \cO].
\end{align*}
Similarly, we could also show
\[ \frac{ \Pr[ \Sim(\cD) \in \cO]}{ \Pr[ \cM(S) \in \cO]} \leq \exp\left(\eps \sqrt{2n\ln(2|\cR|/\beta)} \right). \]

Thus for any distribution $\cD \in \Delta \cX$, with probability at
least $1-\beta$ over the choice of $S \sim_{i.i.d.} \cD^n$, we have
that $\cM(S) \approx_{\eps', 0} \Sim(\cD)$, for
$\eps' = \eps \sqrt{2n\ln(2|\cR|/\beta)}$, so $\cM$ is
$(\beta, \eps \sqrt{2n\ln(2|\cR|/\beta)}, 0)$-perfectly
generalizing.
\end{proof}

The following result proves that the degradation of $\eps$ in Theorem \ref{thm.dptosg} is necessary, and the dependence on $n$ and $\beta$ is asymptotically tight.


\begin{theorem}\label{thm.dptosgtight}
For any $\eps > 0$, $\beta \in (0, 1)$ and $n\in \NN$, there exists a
algorithm $\cM \colon \cX^n \rightarrow \cR$ that is $(\eps, 0)$-differentially private, but not
$(\beta, \eps', 0)$-perfectly generalizing for any $\eps' = o(\eps\sqrt{n \ln(1/\beta)})$.
\end{theorem}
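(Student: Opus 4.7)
The plan is to exhibit the Laplace counting mechanism on the Boolean cube. Fix $\cX=\{0,1\}$, write $c(S)=\sum_{i=1}^n s_i$, and define
$$\cM(S)=c(S)+Z\qquad\text{with}\qquad Z\sim\Lap(1/\eps).$$
Since $c$ has sensitivity one, $\cM$ is $(\eps,0)$-differentially private, and the task is to show that it fails to be $(\beta,\eps',0)$-perfectly generalizing unless $\eps'\geq c_0\,\eps\sqrt{n\ln(1/\beta)}$ for some absolute constant $c_0>0$. Throughout I take the underlying distribution $\cD$ on $\cX$ to be uniform, so $c(S)\sim\bin(n,1/2)$.

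Suppose $\cM$ is $(\beta,\eps',0)$-perfectly generalizing and let $\Sim$ denote the associated simulator. Call a sample $S$ \emph{good} if $\cM(S)\approx_{\eps',0}\Sim$; by definition $\Pr_S[S\text{ good}]\geq 1-\beta$, and two independent samples $S_1,S_2\sim_{i.i.d.}\cD^n$ are simultaneously good with probability at least $1-2\beta$. When $\delta=0$ the relation $\approx_{\eps',0}$ composes without any $\eps'<\ln 2$ restriction, hence $\cM(S_1)\approx_{2\eps',0}\cM(S_2)$ on this event. A direct density comparison of the two Laplace distributions centered at $c(S_1)$ and $c(S_2)$ yields
$$\sup_{x\in\R}\ln\!\Bigl(\tfrac{f_{\cM(S_1)}(x)}{f_{\cM(S_2)}(x)}\Bigr)\;=\;\eps\,|c(S_1)-c(S_2)|,$$
so $\delta=0$ closeness forces $|c(S_1)-c(S_2)|\leq 2\eps'/\eps$. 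Consequently
$$\Pr_{S_1,S_2}\!\bigl[\,|c(S_1)-c(S_2)|\leq 2\eps'/\eps\,\bigr]\;\geq\;1-2\beta.$$

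The last step is an anti-concentration bound: I claim there exists an absolute constant $c_0>0$ such that, for all $n$ large enough and all $\beta$ in the relevant regime,
$$\Pr\!\bigl[\,|c(S_1)-c(S_2)|\geq c_0\sqrt{n\ln(1/\beta)}\,\bigr]\;>\;2\beta.$$
Combining this with the previous display yields a contradiction unless $2\eps'/\eps\geq c_0\sqrt{n\ln(1/\beta)}$, which is the desired lower bound on $\eps'$.

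The main obstacle is this anti-concentration estimate. Because $c(S_1)-c(S_2)$ is a symmetric sum of $2n$ independent $\pm 1/2$ random variables with variance $n/2$, I would reduce it via Berry--Esseen to the Gaussian tail bound $\Pr[|Z|\geq t]\geq \exp(-t^2/2)/(t\sqrt{2\pi})$ valid for $t\geq 1$; taking $t=c_0\sqrt{2\ln(1/\beta)}$ with $c_0<1$ produces a Gaussian tail of order $\beta^{c_0^2}/\sqrt{\ln(1/\beta)}$, comfortably above $2\beta$ for $\beta$ small. The Berry--Esseen correction of size $O(1/\sqrt{n})$ is harmless provided $\beta\gtrsim e^{-\Theta(n)}$; the very-small-$\beta$ regime, if it is needed for the stated $o(\cdot)$ bound, can be handled by a direct Stirling estimate on the binomial coefficients concentrated at the target deviation.
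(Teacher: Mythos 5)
Your proposal is correct, and it follows the same skeleton as the paper's proof: domain $\{0,1\}$, uniform $\cD$, Laplace noise on the Bernoulli count, Corollary~\ref{cor:twin}-style reasoning to obtain two simultaneously ``good'' samples, and binomial anti-concentration to force those samples to have well-separated counts with probability exceeding $2\beta$. The one genuine deviation is in the choice of mechanism and the final calculation. The paper thresholds the noisy average at $1/2$ and outputs a single bit; the contradiction then comes from estimating $\Pr[\cM(S_1)=0]$ and $\Pr[\cM(S_2)=1]$ in terms of Laplace tail masses, which requires a small extra accounting step ($\Pr[|\hat s_i - \bar s_i| < K] = 1 - e^{-Kn\eps}$, then $\exp(2\eps') \geq e^{Kn\eps}-1$). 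You instead keep the unthresholded output $c(S)+\Lap(1/\eps)$ and observe directly that, for continuous laws, $(\eps',0)$-closeness is equivalent to an a.e.\ bound on the density ratio; for two shifted Laplacians the log-ratio has supremum exactly $\eps\,|c(S_1)-c(S_2)|$, so closeness at level $2\eps'$ immediately pins $|c(S_1)-c(S_2)| \leq 2\eps'/\eps$. This removes the threshold-accuracy step and gives a cleaner final inequality, at the cost of briefly having to justify the density-ratio characterization of $(\eps',0)$-closeness (which is standard but worth one sentence). You are also right that when $\delta=\delta'=0$ the transitivity of $\approx$ is elementary and does not need the $\eps<\ln 2$ proviso of Lemma~\ref{lem.structural}; the paper glosses over this. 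Both your sketch and the paper's proof leave the binomial anti-concentration estimate at a similar level of informality and share the implicit requirement $\ln(1/\beta)=O(n)$ for the claimed deviation to be attainable, so neither is more rigorous than the other on that point.
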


\begin{proof}
  Consider the domain $\cX = \{0, 1\}$ and the distribution $\cD$ over
  $\cX$ such that
  $\Pr_{x\sim\cD}[x = 1] = \Pr_{x\sim\cD}[x = 0] = 1/2$. Consider
  following algorithm $\cM \colon \cX^n\rightarrow \{0, 1\}$.  Given a
  sample $S = \{s_1, \ldots, s_n\}\in \cX^n$, $\cM$ will do the
  following:
\begin{enumerate}
\item first compute the sample average $\overline s = \frac{1}{n}\sum_{i=1}^n s_i$;
\item then compute a noisy estimate $\hat s = \overline s +
  \Lap(\frac{1}{n\eps})$ by adding Laplace noise with parameter $1/ n \eps$;
\item if $\hat s \leq 1/2$, output 0; otherwise, output 1.
\end{enumerate}
In words, the algorithm tries to identify the majority in the
sample based on the noisy estimate $\hat s$. Note that the average
value $\overline s$ is a $(1/n)$-sensitive statistic --- that is,
changing a single sample point $s_i$ in $S$ will change the value of $\overline
s$ by at most $1/n$. Also observe that $\cM$ is the Laplace mechanism of~\cite{DMNS06}
composed with a (data independent) post-processing step, so we know $\cM$ is $(\eps, 0)$-differentially
private.

Now suppose that $\cM$ is $(\beta, \eps', 0, n)$-strongly generalizing
for some $\beta \in (0, 1)$. Using a standard tail bound for the Binomial
distribution, we know that for any $S\sim_{i.i.d.}\cD^n$ and $k \leq
1/8$, the sample average $\overline s$ satisfies\sw{cite! }
\[
\Pr[\overline s \leq n/2 - k] = \Pr[\overline s \geq n/2 + k] \geq
\frac{1}{15}\exp\left( -{16 n k^2}\right).
\]
In other words, for any $\gamma \in (0, 1)$, we have both
$\Pr[\overline s \leq 1/2 - K] \geq \gamma$ and $\Pr[\overline s \geq
  1/2+K] \geq \gamma$, where $K =
\frac{\sqrt{\ln(1/(15\gamma))}}{4\sqrt{n}}$. For the remainder of the
proof, we will set $\gamma = 2\sqrt \beta$.

Let $S_1, S_2\sim_{i.i.d.}\cD^n$ be two random samples with sample
averages $\overline s_1$ and $\overline s_2$. By Corollary
\ref{cor:twin}, we know that
$\Pr[\cM(S_1)\not\approx_{2\eps', 0} \cM(S_2)] \leq
2\beta$.  
Since $\gamma^2 > 2\beta$, it follows that with strictly positive
probability over the random draws over $S_1$ and $S_2$, all of the
events that $\overline s_1 \leq n/2 - K$,
$\overline s_2 \geq n/2 + K$, and
$\cM(S_1)\approx_{2\eps', 0} \cM(S_2)$ occur simultaneously.  For the
remainder of the proof, we condition on samples $S_1$ and $S_2$
satisfying these conditions, which will happen with probability
greater than $2\beta$.

If we apply our algorithm $\cM$ to both samples, we will first obtain
noisy estimates $\hat s_1$ and $\hat s_2$, and by the property of the
Laplace distribution, we know for any $t>0$
\[
\Pr\left[|\hat s_1 - \overline s_1| < K\right] = 1 - \exp(-K n\eps) \qquad \mbox{and}\qquad
\Pr\left[|\hat s_2 - \overline s_2| < K\right] = 1 - \exp(- K n\eps)
\]
Note that the event $|\hat s_1 - \overline s_1| < K$ implies that
$M(S_1) = 0$, and the event $|\hat s_2 - \overline s_2| < K$ implies
that $M(S_2) = 1$. The condition of $M(S_1) \approx_{2\eps',
  0} M(S_2)$ implies that
\[
\exp(2\eps') \geq \frac{\Pr[M(S_1) = 0]}{\Pr[M(S_2) = 0]} =\frac{\Pr[M(S_1) =
    0]}{1 - \Pr[M(S_2) = 1]} \geq \frac{1 - \exp(- K n\eps)}{\exp(- K n\eps)} = \exp(K n\eps) - 1
\]

It follows that we must have
\[
\eps' \geq \frac{1}{2}(K n \eps - 1) = \Omega\left(\eps \sqrt{n \ln(1/\beta)} \right),
\]
which recovers the stated bound.
\end{proof}

Theorems \ref{thm.dptosg} and Theorem \ref{thm.dptosgtight} only show
a relationship between $(\eps, 0)$-differential privacy and strong
generalization.  To show such a relationship when $\delta > 0$, we
appeal to \emph{group privacy}, first studied by \cite{DKM+06}, which
says that if $\cM$ is $(\eps, \delta)$-differentially private and two
samples $S, S'$ differ on $k$ entries, then
$\cM(S) \approx_{k \eps, k e^{(k-1)\eps} \delta} \cM(S')$.  Using
simulator $\Sim_{\cD} = \cM(S^*)$ for any fixed sample
$S^* \sim_{i.i.d.} \cD^n$ and by the fact that any sample $S$ can
differ from $S^*$ in an most $n$ samples, we see that $\cM$ is
$(0, n \eps, n e^{(n-1) \eps} \delta)$-perfectly generalizing.

Unfortunately, this blowup in parameters is generally unacceptable for
most tasks. We suspect that the necessary blowup in the $\eps$
parameter is closer to $\Theta\left(\sqrt{n \ln(1/\beta)}\right)$ as
with $(\eps, 0)$-differential privacy, but leave a formal proof as an
open question for future work.

On the positive side, most known $(\eps, \delta)$-differentially
private algorithms are designed by composing several
$(\eps', 0)$-differentially private algorithms, where the $\delta>0$
is an artifact of the composition (see, e.g., Theorem 3.20 of
\cite{DR14} for more details).  Since perfect generalization enjoys
adaptive composition (as shown in~\cite{BF16}), we could also obtain
$(\beta, \eps, \delta)$-perfectly generalizing algorithms by composing
a collection of $(\beta, \eps, 0)$-perfectly generalizing algorithms
together. This will give better generalization parameters than a
direct reduction via group privacy.

\subsection*{Acknowledgements}
We thank Adam Smith and Raef Bassily for helpful comments about
adaptive composition of perfectly generalizing mechanisms, and for pointing
out an error in an earlier version of this paper. We thank Shay Moran
for telling us about variable-length compression schemes and sharing
with us his manuscript \cite{DMY16}. We thank our anonymous reviewers
for numerous helpful comments.
\bibliographystyle{alpha}
\bibliography{main.bbl}

\appendix

\section{Missing Proofs in~\Cref{sec:prelim}}\label{s.prelimproofs}



\begin{proof}[Proof of Lemma \ref{lem.structural}]

In the following, we will use $(a \wedge b)$ to denote $\min\{a, b\}$.
For all $\cO\subseteq \cR$,
\begin{eqnarray*}
\Pr_{y\sim\cJ_1}\left[ y \in \cO \right] & \leq &
(\exp(\eps) \Pr_{y\sim\cJ_2}\left[y \in \cO\right] + \delta) \wedge 1 \\
&\leq &(\exp(\eps)\Pr_{y\sim\cJ_2}\left[y \in \cO\right]) \wedge 1 + \delta\\
& \leq & \exp(\eps) \left(\exp(\eps') \Pr_{y\sim\cJ_3}\left[y \in \cO\right] +\delta'\right)+ \delta \\
& = & \exp(\eps+\eps') \Pr_{y\sim\cJ_3}\left[y \in \cO\right] + 2\delta' + \delta.
\end{eqnarray*}
A similar argument gives $\Pr_{y\sim\cJ_3}\left[ y \in \cO \right] \leq \exp(\eps+\eps') \Pr_{y\sim\cJ_1}\left[y \in \cO\right] + 2\delta + \delta'$.
\end{proof}


\begin{proof}[Proof of Corollary \ref{cor:twin}]
By a union bound, with probability $1-2\beta$ over the draws of $T_1, T_2\sim_{i.i.d.}\cC^n$,
\[ \cM(T_1) \approx_{\eps, \delta} \Sim_{\cC}  \approx_{\eps, \delta} \cM(T_2). \]
The result then follows from Lemma \ref{lem.structural}.
\end{proof}


\begin{proof}[Proof of Lemma \ref{lem:post}]
The result for robustly generalizing mechanisms follows immediately from the definition:
Assume by way of contradiction that there exists an $(\alpha, \beta)$-robustly generalizing mechanism $\cM \colon \cY^n \rightarrow \cR$ and a post-processing procedure $\cA\colon \cR \rightarrow \cR'$ such that $\cA\circ\cM$ is not $(\alpha, \beta)$-robustly generalizing.  Then there exists an adversary $\cA'$ such that $\cA' ( \cA (M(T)))$ outputs a hypothesis $h$ that violates the robust generalization condition.  However,  this would imply that $\cA' \circ \cA$ is an adversary that violates the robust generalization condition, contradicting
 the assumption that $\cM$ is $(\alpha, \beta)$-robustly generalizing.

Let $\cM \colon \cY^n \rightarrow \cR$ be $(\beta, \eps, \delta)$-perfectly generalizing, and let $\cA\colon \cR \rightarrow \cR'$ be a post-processing procedure.  Fix any distribution $\cC$, and let $\Sim_{\cC}$ denote the simulator such that $\cM(T) \approx_{\eps, \delta} \Sim_{\cC}$ with probability $1-\beta$ when $T\sim_{i.i.d.}\cC^n$. We will show that with probability at least $1-\beta$ over the sample $T\sim_{i.i.d.}\cC^n$,
\[ \cA(\cM(T)) \approx_{\eps, \delta} \cA(\Sim_{\cC}). \]

First note that any randomized mapping can be decomposed into a convex combination of deterministic mappings.  Let
\[ \cA = \sum_{i=1} \gamma_i \cA_i \quad \mbox{ s.t. } \quad \sum_{i=1} \gamma_i = 1 \mbox{ and } 0 < \gamma_i \leq 1 \; \forall i, \]
where each $\cA_i \colon \cR \rightarrow \cR'$ is deterministic.  For the remainder of the proof, we will assume that $\cM(T) \approx_{\eps, \delta} \Sim_{\cC}$, which will be the case with probability $1-\beta$.

Fix an arbitrary $\cO' \subseteq \cR'$ and define $\cO_i = \{ r \in \cR \; | \; \cA_i(r) \in \cO' \}$ for $i \in [k]$.
\begin{align*}
\Pr[ \cA(\cM(T)) \in \cO'] &= \sum_{i=1} \gamma_i \Pr[ \cA_i(\cM(T)) \in \cO'] \\
&= \sum_{i=1} \gamma_i \Pr[ \cM(T) \in \cO_i] \\
&\leq \sum_{i=1} \gamma_i \left( e^{\eps} \Pr[ \Sim_{\cC} \in \cO_i] + \delta \right) \\
&= \sum_{i=1} \gamma_i \left( e^{\eps} \Pr[ \cA_i(\Sim_{\cC}) \in \cO'] + \delta \right) \\
&= e^{\eps} \Pr[ \cA(\Sim_{\cC}) \in \cO'] + \delta.
\end{align*}
A symmetric argument shows that
\[ \Pr[ \cA(\Sim_{\cC}) \in \cO'] \leq e^{\eps} \Pr[ \cA(\cM(T)) \in \cO'] + \delta. \]

Thus with probability at least $1-\beta$, $\cA(\cM(T)) \approx_{\eps, \delta} \cA(\Sim_{\cC})$.  The mapping $\cA(\Sim_{\cC}) \colon \cY^n \rightarrow \cR'$ is simply a new simulator, so $\cA \circ \cM$ is $(\beta, \eps, \delta)$-perfectly generalizing.\end{proof}


\begin{proof}[Proof of \Cref{thm.basic}]
Fix any distribution $\cC$, and for all $i \in [k]$ let $\Sim_i(\cC)$ denote the simulator such that $\cM_i(T) \approx_{\eps, \delta} \Sim_i(\cC)$ with probability $1-\beta_i$ when $T\sim_{i.i.d.}\cC^n$.  Define $\Sim_{[k]}(\cC) = \left( \Sim_1(\cC), \ldots, \Sim_k(\cC) \right)$.  For the remainder of the proof, we will assume that $\cM_i(T) \approx_{\eps, \delta} \Sim_i(\cC)$ for all $i \in [k]$, which will be the case with probability at least $1 - \sum_{i=1}^k \beta_i$ over the choice of the sample.

Fix any $(r_1, \ldots, r_k) \in \cR_1 \times \cdots \times \cR_k$:
\begin{align*}
\Pr[\cM_{[k]}(T) = (r_1, \ldots, r_k)] &= \prod_{i=1}^k \Pr[ \cM_i(T) = r_i] \\
&\leq \prod_{i=1}^k e^{\eps_i} \Pr[ \Sim_i(\cC) = r_i] \\
&= e^{\sum_{i=1}^k \eps_i} \Pr[ \Sim_{[k]}(\cC) = (r_1, \ldots, r_k)] 
\end{align*}
For any $\cO \subseteq \cR_1 \times \cdots \times \cR_k$,
\begin{align*}
 \Pr[\cM_{[k]}(T) \in \cO] &= \int_{o \in \cO} \Pr[\cM_{[k]}(T) =o] do\\
& \leq \int_{o \in \cO} e^{\sum_{i=1}^k \eps_i} \Pr[\Sim_{[k]}(\cC) =o] do = e^{\sum_{i=1}^k \eps_i} \Pr[\Sim_{[k]}(\cC) \in \cO].
\end{align*}
A symmetric argument would show that $\Pr[\Sim_{[k]}(\cC) \in \cO] \leq e^{\sum_{i=1}^k \eps_i} \Pr[\cM_{[k]}(T) \in \cO]$.

The mapping $\Sim_{[k]}(\cC)$ serves as a simulator for $\cM_{[k]}(T)$, so $\cM_{[k]}$ is $(\sum_{i=1}^k \beta_i, \sum_{i=1}^k \eps_i, 0)$-perfectly generalizing.\end{proof}

\end{document}
